%% file: one-column.tex
\documentclass[lettersize,onecolumn]{IEEEtran}
\usepackage{algorithmic}
\usepackage{algorithm}
\usepackage[caption=false,font=normalsize,labelfont=sf,textfont=sf]{subfig}
\usepackage{textcomp}
\usepackage{stfloats}
\usepackage{url}
\usepackage{verbatim}
\usepackage{graphicx}
\usepackage{cite}

\input{config}

\begin{document}

\title{Convergence of Denoising Diffusion Probabilistic Models: A \Renyi Divergence Perspective}

\author{Wenliang Liu, Santiago Mazuelas,~\IEEEmembership{Senior Member,~IEEE}, Guanding Yu,~\IEEEmembership{Senior Member,~IEEE}, \\ Renjie Liao, Lele Wang,~\IEEEmembership{Member,~IEEE}, and Yuan Shen\textsuperscript{\textsection},~\IEEEmembership{Senior Member,~IEEE}}



\maketitle

\begingroup
\renewcommand\thefootnote{}
\footnotetext{Wenliang Liu is with the Department of Electronic Engineering, Tsinghua University, Beijing 100084, China (e-mail: liuwenliang@mail.tsinghua.edu.cn). This work was done in part while he was with the College of Information Science and Electronic Engineering, Zhejiang University, Hangzhou 310027, China (e-mail: liuwenliang@zju.edu.cn).}
\footnotetext{Santiago Mazuelas is with the Basque Center for Applied Mathematics (BCAM) and with IKERBASQUE Foundation, Bilbao 48009, Spain (e-mail: smazuelas@bcamath.org).}
\footnotetext{Guanding Yu is with the College of Information Science and Electronic Engineering, Zhejiang University, Hangzhou 310027, China (e-mail: yuguanding@zju.edu.cn).}
\footnotetext{Renjie Liao is with the Department of Electrical and Computer Engineering, The University of British Columbia, Vancouver, BC V6T 1Z4, Canada, and also with the Vector Institute, Toronto, ON M5G 0C6, Canada (e-mail: rjliao@ece.ubc.ca).}
\footnotetext{Lele Wang is with the Department of Electrical and Computer Engineering, The University of British Columbia, Vancouver, BC V6T 1Z4, Canada (e-mail: lelewang@ece.ubc.ca).}
\footnotetext{Yuan Shen is with the Department of Electronic Engineering, Tsinghua University, Beijing 100084, China (e-mail: shenyuan\_ee@tsinghua.edu.cn).}
\renewcommand\thefootnote{\textsection}
\footnotetext{Corresponding author.}
\endgroup
\begin{abstract}
	\Acp{ddpm} have emerged as a powerful framework for generative modeling by transforming standard Gaussian inputs into outputs that approximate a target distribution. 
	While \acp{ddpm} have been proved to converge in total variation, KL divergence, and Wasserstein distance, these discrepancy measures share a fundamental limitation of being insensitive to distributional mismatch in low-probability regions.
	To address this, we adopt the stronger \Renyi divergence as the discrepancy measure. Leveraging a Girsanov-type change-of-measure argument, we establish the first \Renyi characterization of \acp{ddpm} to the best of our knowledge, showing that the total error decomposes into a mixing error, a discretization error, and a score estimation error. 
	\lwl{We assume the target distribution is sub-Gaussian and characterize the score error via the cumulant generating function, relaxing both the log-Sobolev-inequality and $L^\infty$-score-error conditions used in prior \Renyi divergence analyses of inexact Langevin Monte Carlo, a score-based sampling algorithm closely related to \ac{ddpm}.}
	Our results also establish an iteration complexity for achieving $\epsilon$-accuracy in $\alpha$-\Renyi divergence that is asymptotically proportional to $\alpha^3/\epsilon$.
	Finally, by extending our analysis, we establish the first \Renyi divergence convergence guarantee for critically damped \acp{ddpm}, a kinetic variant governed by critically damped Langevin dynamics. Similar to existing characterizations of critically damped \acp{ddpm}, this result shares the structural scaling of standard \acp{ddpm}, differing by at most a logarithmic factor in the mixing error exponent, and supports the conjecture that critically damped \acp{ddpm} do not yield a theoretical improvement.
\end{abstract}
\acresetall

\begin{IEEEkeywords}
\Renyi divergence, denoising diffusion probabilistic models, score-based generative models, critically damped Langevin dynamics.
\end{IEEEkeywords}

\input{main_body}

%

\appendices
\input{appendix}

\bibliographystyle{IEEEtran}
\bibliography{references.bib}

\end{document}

%% file: config.tex
\usepackage[utf8]{inputenc} 
\usepackage[T1]{fontenc}    
\usepackage{hyperref}       
\usepackage{booktabs}       
\usepackage{amsmath,amsfonts,amssymb}       
\usepackage{nicefrac}       
\usepackage{microtype}      
\usepackage{graphicx}
\usepackage{subcaption}
\usepackage{caption}
\usepackage{amsthm}
\usepackage{mathtools}
\usepackage{bbm}
\usepackage{xspace}
\usepackage{mathrsfs}
\usepackage{booktabs}
\usepackage{array}
\usepackage{xcolor}
\usepackage{scalerel}
\usepackage{etoolbox}
\usepackage{multirow}
\usepackage{makecell}
\usepackage[inline]{enumitem}
\usepackage{acronym}
\usepackage{algorithmic}
\usepackage{algorithm}
\usepackage[caption=false,font=normalsize,labelfont=sf,textfont=sf]{subfig}
\usepackage{textcomp}
\usepackage{stfloats}
\usepackage{url}
\usepackage{verbatim}
\usepackage{graphicx}

\newtheorem{theorem}{Theorem}
\newtheorem{lemma}[theorem]{Lemma}
\newtheorem{proposition}[theorem]{Proposition}
\newtheorem{corollary}[theorem]{Corollary}
\newtheorem{definition}[theorem]{Definition}
\newtheorem{assumption}[theorem]{Assumption}

\newtheorem{remark}{Remark}

\renewcommand{\d}{\mathrm{d}}
\newcommand{\Renyi}{R\'enyi }
\newcommand{\Holder}{H\"older }

\newcommand{\OU}{{\operatorname{OU}}}
\newcommand{\CD}{{\operatorname{CD}}}%

\newcommand{\Aux}{{\operatorname{Aux}}}

\newcommand{\RenyiDivergence}[3][\alpha]{\mathcal{R}_{#1}\left(#2||#3\right)}
\newcommand{\KLDivergence}[2]{\mathcal{D}_{\mathrm{KL}}\left(#1||#2\right)}

\newcommand{\WassersteinDistance}[3][2]{\mathcal{W}_{#1}\left(#2,#3\right)}
\newcommand{\squaredWassersteinDistance}[3][2]{\mathcal{W}^2_{#1}\left(#2,#3\right)}
\newcommand{\totalVariation}[2]{\operatorname{TV}\left(#1,#2\right)}
\newcommand{\squaredTotalVariation}[2]{\operatorname{TV}^2\left(#1,#2\right)}
\newcommand{\ent}[2]{\mathrm{Ent}_{#1}\left(#2\right)}
\newcommand{\expectation}[2]{\bbE_{#1}\left[#2\right]}
\newcommand{\exponential}[1]{\exp\left(#1\right)}
\newcommand{\scoreError}{\varepsilon^2_{\operatorname{score}}}
\newcommand{\scoreErrorCD}{\varepsilon^2_{\operatorname{score,CD}}}
\newcommand{\innerProd}[2]{\left\langle #1,#2\right\rangle}
\newcommand{\RadonNikodym}[2]{\frac{\d #1}{\d #2}}
\newcommand{\bbR}{\mathbb{R}}
\newcommand{\bbE}{\mathbb{E}}

\newcommand{\1}[1]{\mathbbm{1}_{#1}}
\newcommand{\cE}{\mathcal{E}}
\newcommand{\cF}{\mathcal{F}}
\newcommand{\cI}{\mathcal{I}}
\newcommand{\cN}{\mathcal{N}}
\newcommand{\cO}{\mathcal{O}}
\newcommand{\bigTheta}{\varTheta}

\newcommand{\bzero}{\mathbf{0}}

\newcommand{\lwl}[1]{\textcolor{black}{#1}}
\renewcommand{\bar}{\overline}
\renewcommand{\tilde}{\widetilde}

\acrodef{ddpm}[DDPM]{denoising diffusion probabilistic model}
\acrodef{sde}[SDE]{stochastic differential equation}
\acrodef{cgf}[CGF]{cumulant generating function}
\acrodef{lsi}[LSI]{log-Sobolev inequality}
\acrodef{ou}[OU]{Ornstein-Uhlenbeck}
\acrodef{kl}[KL]{Kullback-Leibler}
\acrodef{tv}[TV]{total variation}
\acrodef{kls}[KLS]{Kannan–Lovász–Simonovits}
\acrodef{lmc}[LMC]{Langevin Monte Carlo}
\makeatletter
\DeclareRobustCommand\onedot{\futurelet\@let@token\@onedot}
\def\@onedot{\ifx\@let@token.\else.\null\fi\xspace}
 
\def\eg{e.g\onedot} 
\def\ie{i.e\onedot} 
\def\cf{cf\onedot} 
\def\st{s.t\onedot}

\makeatother

%% file: main_body.tex
\section{Introduction}
\label{section:Introduction}

\Acp{ddpm}~\cite{ho2020denoisingdiffusionprobabilisticmodels, song2021scorebasedgenerativemodelingstochastic, nichol2021improveddenoisingdiffusionprobabilistic}, as a class of score-based generative models, have emerged as a prominent paradigm for sampling from high-dimensional and complex probability distributions. 
The framework operates by gradually corrupting data with noise until it converges to a standard Gaussian distribution. 
During this process, the score function (a high-dimensional, time-varying vector field) is approximated from the data. The learned score function is then utilized to simulate the reverse diffusion process, transforming Gaussian inputs into samples that closely approximate the target distribution. Notably, \acp{ddpm} have achieved remarkable success in synthesizing high-quality images~\cite{hoogeboom2023simplediffusionendtoenddiffusion}, audio~\cite{kong2021diffwaveversatilediffusionmodel}, and video~\cite{NEURIPS2022_39235c56}.

The remarkable empirical success of \acp{ddpm} has driven extensive investigation into their theoretical guarantees. Existing studies establish that, given accurate $L^2$-approximations of the score function, \acp{ddpm} converge to the target distribution in \ac{tv} distance~\cite{chen2022sampling}, \ac{kl} divergence~\cite{chen2023improved,conforti2024klconvergenceguaranteesscore}, or Wasserstein distance~\cite{silveri2025beyond,strasman2026wasserstein}. 
Nevertheless, as these standard discrepancy measures penalize distributional discrepancies moderately, a small overall error might conceal severe, localized mismatches between the generated and target distributions. Crucially, this issue becomes far more severe in low-probability regions, because these discrepancy measures are mathematically formulated to weigh distributional errors according to absolute probability mass. To provide stronger and more reliable theoretical guarantees, this paper characterizes the convergence behavior of \acp{ddpm} under the \Renyi divergence (\cf Definition~\ref{definition:Renyi divergence}).

Adopting the \Renyi divergence to evaluate distributional discrepancies offers three advantages over the aforementioned measures. First, it is significantly more sensitive because it imposes a more stringent penalty on distributional mismatches. Consequently, achieving convergence under the \Renyi divergence yields a stronger guarantee that the learned distribution accurately aligns with the target. This enhanced sensitivity is reflected in its ability to dominate other standard discrepancy measures. 
Specifically, as an increasing function of the order $\alpha$, the \Renyi divergence for $\alpha>1$ strictly upper bounds the \ac{kl} divergence, which coincides with the \Renyi divergence at $\alpha=1$. Consequently, it also dominates the squared \ac{tv} distance by Pinsker's inequality ($\KLDivergence{P}{Q}\geq 2\squaredTotalVariation{P}{Q}$), as well as the squared $1$- or $2$-Wasserstein distances under conditions such as Talagrand's transport inequalities~\cite{villani2009optimal}. Such conditions are necessary to compare the two discrepancy measures, since the Wasserstein distance depends on the underlying geometry, while the \Renyi divergence does not. 
Further examples demonstrating that the \Renyi divergence can be more sensitive than the Wasserstein distance are provided in Fig.~\ref{fig:Renyi divergence is more sensitive} and Proposition~\ref{proposition:KL-close distributions can diverge under Renyi divergence}.

\begin{figure}[tb]
	\centering
	\begin{minipage}[c]{\textwidth}
		\centering
		\small
		\captionof{table}{Scales of discrepancy measures for the distributions in Fig.~\ref{fig:Renyi divergence is more sensitive}.}
		\begin{tabular}{@{}lccccc@{}}
			\toprule
			\textbf{Discrepancy Measure} 
			& $\squaredTotalVariation{P}{Q}$
			& $\squaredWassersteinDistance[1]{P}{Q}$
			& $\squaredWassersteinDistance[2]{P}{Q}$
			& $\KLDivergence{P}{Q}$
			& $\RenyiDivergence{P}{Q}$ \\
			\midrule
			\textbf{Scale} 
			& $\cO(\epsilon^2)$ 
			& $\epsilon^2 m^2$ 
			& $\cO(\epsilon m^2)$ 
			& $\cO(\epsilon m^2)$ 
			& $\dfrac{\alpha}{2} m^2 - \dfrac{\alpha}{\alpha - 1} \log\left(\dfrac{1}{\epsilon}\right) + \cO\left(e^{-(\alpha-1)m^2}\right)$ \\
			\bottomrule
		\end{tabular}
		\label{tab:metrics_scaling}
		\vspace{1.0em}
	\end{minipage}
	
	\begin{minipage}[c]{0.49\textwidth}
		\centering
		\includegraphics[width=\linewidth]{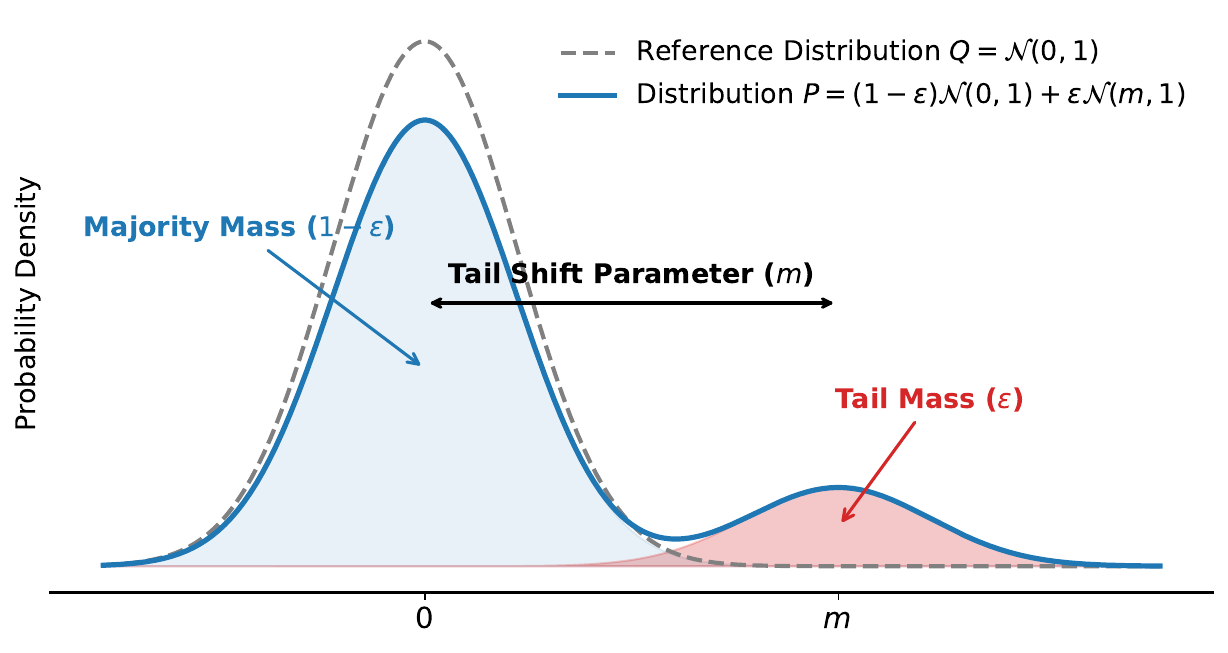}
	\end{minipage}%
	\hfill
	\begin{minipage}[c]{0.49\textwidth}
		\centering
		\includegraphics[width=\linewidth]{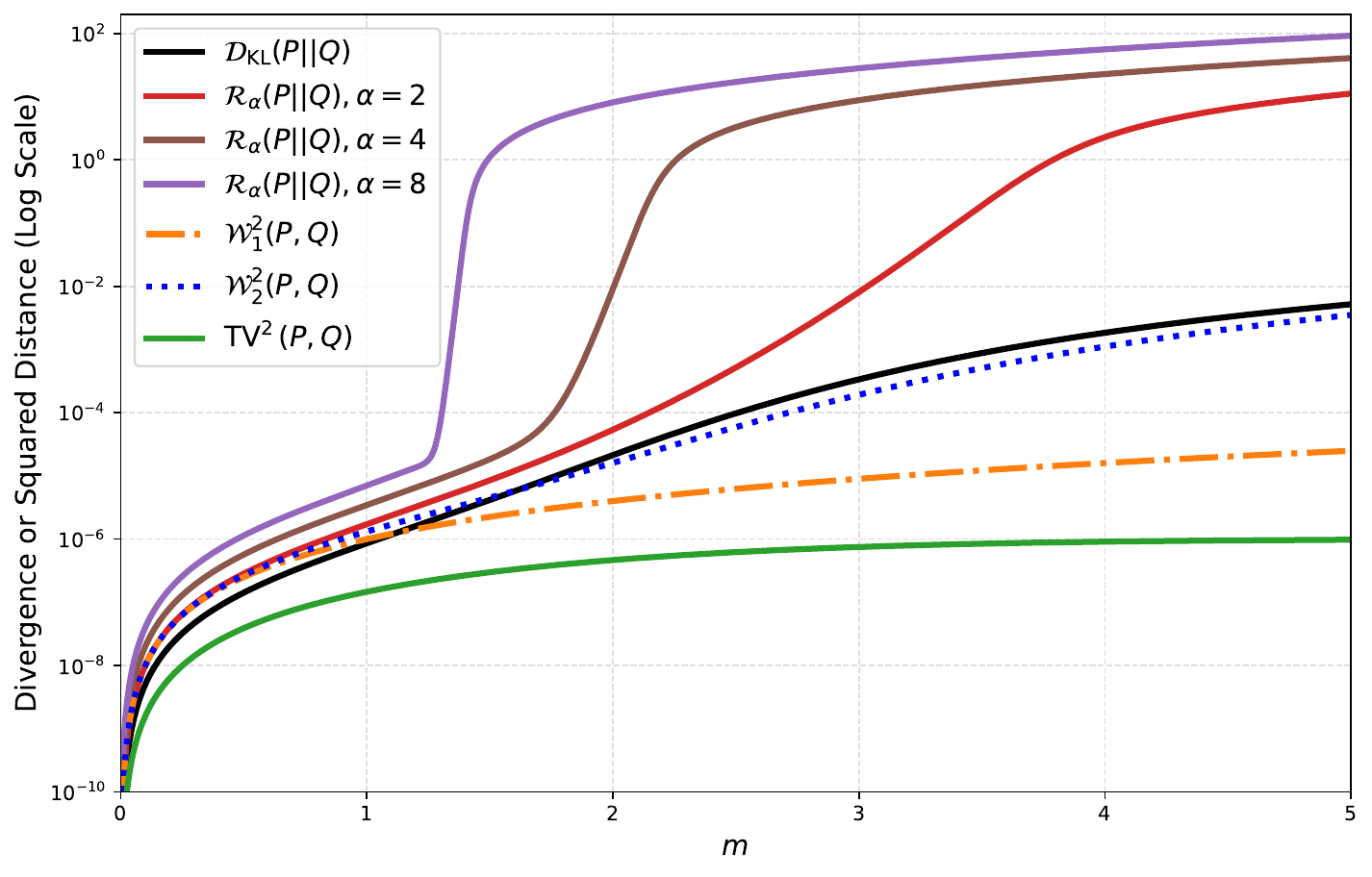}
	\end{minipage}
	\caption{\Renyi divergence is more sensitive to distributional discrepancies, particularly in low-probability regions (large $m$). In this example, we compare a Gaussian mixture distribution $P = (1-\epsilon)\cN(0,1) + \epsilon \cN(m,1)$ against a standard Gaussian reference distribution $Q = \cN(0,1)$. Fixing $\epsilon = 0.001$, we evaluate the $\alpha$-\Renyi divergence against other standard discrepancy measures as the tail shift parameter $m$ varies. Although most of these measures scale quadratically with $m$ (Table~\ref{tab:metrics_scaling}), the \Renyi divergence admits a larger coefficient ($\alpha / 2$), allowing it to dominate the \ac{kl} divergence and squared \ac{tv} for $\alpha>1$, and surpass the squared $1$- and $2$-Wasserstein distances for $\alpha>2$. Most importantly, the \Renyi divergence exhibits an additive logarithmic dependence on the tail mass $\epsilon$, contrasting with the multiplicative polynomial scaling of the other discrepancy measures (Table~\ref{tab:metrics_scaling}). 
	This fundamental distinction makes the Rényi divergence less susceptible to suppression by small values of $\epsilon$, enabling it to achieve significantly larger values for large $m$. Further technical details are deferred to Appendix~\ref{section:appendix:details_of_the_figure}.}
	\label{fig:Renyi divergence is more sensitive}
%
\end{figure}

Second, the \Renyi divergence is particularly effective at highlighting discrepancies in low-probability regions. This characteristic makes it advantageous for safety-critical applications, such as autonomous driving and automated medical diagnosis, where accurately modeling rare events is paramount. Mathematically, two distributions that appear nearly identical under \ac{tv}, \ac{kl} divergence, or Wasserstein distance can still be drastically far apart under the \Renyi divergence (\cf Proposition~\ref{proposition:KL-close distributions can diverge under Renyi divergence}). This phenomenon occurs because weaker discrepancy measures impose only a moderate penalty on distributional mismatch, tending to smooth over large relative errors in regions where absolute probability mass is small. In contrast, the \Renyi divergence penalizes distributional mismatch more severely, ensuring that critical deviations in tail regions are not neglected. To demonstrate this behavior, we consider a motivating example illustrating how each discrepancy measure responds to a distributional mismatch caused by a probability mass $\epsilon$ in a low-probability region. As shown in Table~\ref{tab:metrics_scaling}, the $\alpha$-\Renyi divergence depends on the tail mass $\epsilon$ through an additive logarithmic term $\log(1/\epsilon)$, whereas all other discrepancy measures scale multiplicatively (either linearly or quadratically) with $\epsilon$. Consequently, the \Renyi divergence is less suppressed by small values of $\epsilon$, allowing it to impose a much heavier penalty on distributional discrepancies in tail regions than other measures. An illustrative comparison of these quantities is provided in Fig.~\ref{fig:Renyi divergence is more sensitive}.

Third, the \Renyi divergence provides inherent flexibility because its order $\alpha$ can be precisely tuned to the demands of a specific task. This parameter controls the ``granularity'' of the convergence behavior, allowing one to explicitly balance distributional similarity between macroscopic (global) and microscopic (local) scales. At the lower extreme ($\alpha = 1$), the \Renyi divergence reduces to the \ac{kl} divergence, which encourages the learned distribution to globally resemble the target. At the upper extreme ($\alpha = \infty$), it becomes the essential supremum of the logarithmic density ratio. Convergence in this regime provides a strict worst-case guarantee, ensuring that the learned distribution aligns with the target almost everywhere. For intermediate values ($1 < \alpha < \infty$), the order $\alpha$ acts as an interpolation parameter, smoothly trading off between relaxed global structural matching and stringent local pointwise matching.

\lwl{Despite these benefits, the \Renyi divergence error of \acp{ddpm} remains unexplored. Existing \Renyi guarantees concern \ac{lmc}~\cite{vempala2019rapid, chewi2025analysis, NEURIPS2020_50cf0fe6, pmlr-v151-erdogdu22a}, a class of Langevin-dynamics-based sampling algorithms that assume an exact, time-independent score, as opposed to the estimated, time-varying score in \acp{ddpm} (\cf~\hyperlink{para:renyi-related-work}{Section~\ref{section:Related Work}, ``R\'enyi divergence''}). Moreover, these \Renyi analyses for \ac{lmc} rely on unrealistic or hard-to-verify conditions, such as requiring the target distribution to satisfy a \ac{lsi}, or assuming an $L^\infty$-bounded score error whenever the exact score is unavailable.}

\lwl{In this paper, we establish the first Rényi-divergence characterization of \acp{ddpm}. Our specific contributions are summarized as follows.}
\begin{itemize}
	\item Leveraging a Girsanov-type change-of-measure argument, we derive an upper bound on the \Renyi divergence between two diffusion processes with mismatched drift coefficients. This result provides an analytical framework applicable beyond \acp{ddpm} to broader score-based generative models.
	
	\item Applying this framework to \acp{ddpm}, we characterize their \Renyi divergence error under a sub-Gaussian target distribution and a Lipschitz continuous score function. The error bound can be decomposed into a mixing error, a discretization error, and a score error evaluated via the \ac{cgf}. To the best of our knowledge, this constitutes the first characterization of the \Renyi divergence for \acp{ddpm}.
	
	\item \lwl{We characterize the iteration complexity for achieving $\epsilon$-accuracy in $\alpha$-\Renyi divergence. For $\alpha \gg 1$ and $\epsilon \ll 1$, this complexity is asymptotically proportional to $\alpha^3/\epsilon$. Hence, controlling the stronger \Renyi divergence costs at most an additional multiplicative factor of $\alpha^3$. In addition, the linear $1/\epsilon$ dependence is consistent with established guarantees in (squared) \ac{tv}, \ac{kl} and (squared) Wasserstein distance.}
	
	\item We extend our framework to critically damped \acp{ddpm}, a kinetic version of standard \acp{ddpm} governed by critically damped Langevin dynamics. To the best of our knowledge, we establish the first \Renyi divergence error bound for critically damped \acp{ddpm}. Up to a logarithmic factor in the effective running time, the resulting bound scales identically to its standard \ac{ddpm} counterpart, supporting the conjecture that critically damped \acp{ddpm} do not yield a theoretical improvement over standard \acp{ddpm}.
\end{itemize}

The remainder of this paper is organized as follows. 
Section~\ref{section:Related Work} reviews related work. 
Section~\ref{section:Models} introduces the standard \ac{ddpm} formulation, formally defines the \Renyi divergence, and provides theoretical evidence of its enhanced sensitivity, further motivating our analysis. Section~\ref{section:Main Results} presents our main results, establishing convergence guarantees for standard \acp{ddpm} and extending the analysis to kinetic variants governed by critically damped Langevin dynamics. Section~\ref{section:Discussion} contextualizes our findings within the existing literature and outlines promising directions for future work. 
Finally, Section~\ref{section:Conclusions} concludes the paper.

\subsection*{Notations}
For probability distributions $P$ and $Q$, $\RadonNikodym{P}{Q}$ denotes their Radon-Nikodym derivative. 
We denote the $d$-dimensional standard Gaussian distribution by $\xi^d$ and a $d$-dimensional standard Gaussian vector by $\Xi^d$ (\ie, $\Xi^d\sim\xi^d$).
For a stochastic process $(X_t)_{t\in[0,T]}$ with path measure $\rho$, we denote its marginal distribution at time $t$ by $\rho_t$, so that $X_t\sim \rho_t$.
For real numbers $a$ and $b$, $a\wedge b$ and $a\vee b$ denote $\min\left\{a,b\right\}$ and $\max\left\{a,b\right\}$, respectively. For functions $f$ and $g$, the notation $f \lesssim g$ indicates that $f\leq Cg$ for a universal constant $C>0$, and $f\asymp g$ indicates that $f \lesssim g$ and $g\lesssim f$ hold simultaneously. We use the standard notations $\cO(\cdot)$ and $\bigTheta(\cdot)$ to represent the asymptotic behaviors corresponding to $\lesssim$ and $\asymp$, respectively. The notation $\widetilde{\cO}(\cdot)$ and $\widetilde{\bigTheta}(\cdot)$ denote their respective counterparts that suppress polylogarithmic factors (\ie, $\widetilde{\cO}(x) = \cO(x\log^n(x))$ for some $n\in \mathbb{N}$).
Finally, for any time-dependent function $f(t)$ defined on $[0,T]$, we denote its time reversal by $\bar{f}(t) \coloneqq f(T-t)$.

\section{Related Work}
\label{section:Related Work}

The conceptual foundations of \acp{ddpm} can be traced back to \cite{sohldickstein2015deepunsupervisedlearningusing, ho2020denoisingdiffusionprobabilisticmodels}. Later, the work in \cite{song2021scorebasedgenerativemodelingstochastic} introduced a continuous-time perspective, recasting \acp{ddpm} as score-based generative models~\cite{song2020generativemodelingestimatinggradients} driven by a \ac{sde}.
Existing analysis mostly focuses on establishing the convergence of \acp{ddpm} to their target distribution in terms of \ac{tv} distance, \ac{kl} divergence, or Wasserstein distance.
\paragraph{\ac{tv} Distance}
	The first theoretical guarantees for score-based generative models were established in \ac{tv} distance by~\cite{debortoli2023diffusionschrodingerbridgeapplications}. However, these results rely on $L^\infty$ score error assumptions and stringent dissipative conditions, and suffer from exponential growth with respect to the running time. Subsequently, a breakthrough was made in~\cite{lee2022convergence}, establishing the first polynomial-time \ac{tv} convergence bounds under more realistic $L^2$ score errors, albeit only for target distributions satisfying an \ac{lsi}. The \ac{lsi} requirement was later relaxed in~\cite{lee2023convergence}, provided that the target distribution exhibits sufficient tail decay and smoothness. In the seminal work of~\cite{chen2022sampling}, \ac{tv} bounds were derived for general data distributions requiring only a finite second moment, achieving an iteration complexity that matches that of Langevin Monte Carlo sampling algorithms. Furthermore, the authors utilized an early stopping technique to prevent the \ac{tv} bound from becoming vacuous under the manifold hypothesis~\cite{DBLP:journals/tmlr/Bortoli22}.

\paragraph{\ac{kl} Divergence}
	Subsequent research has focused increasingly on \ac{kl} divergence bounds~\cite{chen2023improved, benton2024nearlydlinearconvergencebounds, Strasman2024AnAO, conforti2024klconvergenceguaranteesscore, yang2022convergence, pmlr-v99-tzen19a}, as it provides a stronger discrepancy measure than \ac{tv} distance by Pinsker's inequality. The work in~\cite{chen2023improved} demonstrates that the \ac{tv} bounds established in~\cite{chen2022sampling} can be improved to \ac{kl} divergence without introducing additional assumptions. Notably, \ac{kl} bounds can still be established even when the score smoothness assumption is relaxed to hold only at time $t=0$, or removed entirely by compensating with an early-stopping scheme. The cost of relaxing score smoothness is a quadratic dependence on the dimension $d$ in the iteration complexity required to achieve a desired accuracy. This dimensional dependence was subsequently tightened to nearly linear in $d$ via stochastic localization~\cite{benton2024nearlydlinearconvergencebounds}, a technique originating in~\cite{eldan2013thin} that has driven recent advances on the \ac{kls} conjecture~\cite{chen2021constantlowerboundisoperimetric}. Recently, state-of-the-art analyses have successfully removed the need for both score smoothness assumptions and early stopping entirely. These works establish sharp \ac{kl} convergence for any distribution, requiring only that the target distribution possesses finite Fisher information with respect to the standard Gaussian measure~\cite{Strasman2024AnAO, conforti2024klconvergenceguaranteesscore}.
	
\paragraph{Wasserstein Distance} 
	The analysis of Wasserstein distance is relatively independent of the \ac{tv}- and \ac{kl}-based lines of work, as these discrepancy measures are not comparable in general. Early Wasserstein bounds relied on stringent dissipativity conditions~\cite{block2022generativemodelingdenoisingautoencoders} or suffered from exponential dependence on parameters such as the manifold diameter~\cite{DBLP:journals/tmlr/Bortoli22}. Subsequent work~\cite{lee2023convergence} extended 2-Wasserstein guarantees to general distributions with bounded support. However, extending such results to non-compact domains has largely required the target distribution to be strongly log-concave in order to control the Wasserstein distance~\cite{gao2025wasserstein, bruno2025on, Strasman2024AnAO}. This restriction limits the theory's applicability to realistic scenarios. More recent work has begun to relax the log-concavity assumption. For example, the work in~\cite{silveri2025beyond} studies weakly convex potential functions, while the work in~\cite{strasman2026wasserstein} allows the potential function to deviate from strong convexity via a Lipschitz perturbation.
	
\hypertarget{para:renyi-related-work}{}
\paragraph{\Renyi divergence}
\lwl{In contrast to the extensive body of work in \ac{tv}, \ac{kl}, and Wasserstein distances, the \Renyi divergence error for \acp{ddpm} remains unexplored. Existing \Renyi analyses concern \ac{lmc}~\cite{vempala2019rapid, chewi2025analysis, NEURIPS2020_50cf0fe6, pmlr-v151-erdogdu22a}, which differs from the \ac{ddpm} setting in three respects.
	\begin{enumerate*}
		\item \ac{lmc} assumes exact knowledge of the score function, which needs to be estimated for \acp{ddpm}.
		\item The score of \ac{lmc} is time-independent, as opposed to the time-dependent score of \acp{ddpm}.
		\item \ac{lmc} does not reverse any stochastic process, and hence does not need to account for the mixing error present in \acp{ddpm}.
	\end{enumerate*}
	Consequently, the \Renyi analyses for \ac{lmc} do not transfer to \acp{ddpm}. 
	Moreover, these works typically require the target distribution to be strongly log-concave~\cite{NEURIPS2020_50cf0fe6}, to satisfy a strong dissipative condition~\cite{pmlr-v151-erdogdu22a}, or to satisfy an isoperimetric inequality, such as the Poincar\'e inequality or \ac{lsi}~\cite{vempala2019rapid}, or more generally, a Latała--Oleszkiewicz inequality that interpolates between the Poincar\'e and \ac{lsi}~\cite{chewi2025analysis}. These conditions are either unrealistic or hard to verify in practice. The work of~\cite{yang2022convergence} considered the inexact \ac{lmc} where the exact score is unavailable, bringing it closer to the \ac{ddpm} setting. However, their analysis characterizes the score error via the restrictive $L^\infty$ norm while also assuming the target distribution satisfies an \ac{lsi}.
}
\paragraph{Critically Damped \acp{ddpm}}
	Compared with standard \acp{ddpm}, the theoretical understanding of critically damped \acp{ddpm}~\cite{dockhorn2021score} remains comparatively limited. Existing works have established the convergence of critically damped \acp{ddpm} in \ac{tv} distance~\cite{chen2022sampling}, \ac{kl} divergence~\cite{conforti2024klconvergenceguaranteesscore}, and Wasserstein distance~\cite{strasman2026wasserstein}. Nevertheless, no explicit evidence supports that critically damped \acp{ddpm} achieve a theoretical improvement over their standard counterpart, despite their empirical success.

Finally, as a side remark, theoretical analysis of other generative paradigms, such as flow matching~\cite{lipman2023flowmatchinggenerativemodeling, liu2022flowstraightfastlearning} and diffusion bridges~\cite{debortoli2023diffusionschrodingerbridgeapplications, zhou2023denoisingdiffusionbridgemodels}, has also progressed rapidly. See, e.g.,~\cite{benton2024errorboundsflowmatching, su2025flowmatchingkldivergence, albergo2025stochasticinterpolantsunifyingframework, tang2026foundationsschrodingerbridgesgenerative, silveri2026diffusionflowmatchingdimensionimproved}.

\section{Problem Formulation}
\label{section:Models}

This section first formulates the \ac{ddpm} in Section~\ref{subsection:DDPM}. Section~\ref{subsection:Renyi divergence} then formally defines the \Renyi divergence and provides a quantitative characterization of its enhanced sensitivity.

\subsection{Denoising Diffusion Probabilistic Models}
\label{subsection:DDPM}
Let $\mu \in \mathscr{P}(\bbR^d)$ denote the target data distribution from which we aim to generate samples. Given a collection of samples drawn from $\mu$, the forward process of \acp{ddpm} progressively corrupts the data into Gaussian noise. This transformation is characterized by a time-rescaled \ac{ou} process, governed by the following \ac{sde}
\begin{equation}
	\d X^\OU_t = -\gamma(t)X^{\OU}_t \d t + \sqrt{2\gamma(t)} \d W_t, \quad X^\OU_0 \sim \mu.
	\label{eq:OU process}
\end{equation}
Here, $W_t$ denotes a $d$-dimensional standard Wiener process, and the positive-valued function $\gamma\colon \bbR_+ \to \bbR_+$ serves as the time-rescaling factor. A key property of the \ac{ou} process is that it converges exponentially fast to its stationary distribution, the standard $d$-dimensional Gaussian distribution. Specifically, $X^\OU_t$ is distributed identically to $e^{-\Gamma(t)}X^\OU_0 + \sqrt{1 - e^{-2\Gamma(t)}} \, \Xi^d$, where $\Xi^d \sim \xi^d$ and $\Gamma(t) \coloneqq \int_{0}^{t}\gamma(\tau) \d \tau$ represents the effective running time of the \ac{ou} process.

Next, the generative phase of \acp{ddpm} simulates the time-reversal of this forward \ac{ou} process. Let $X_t$ denote the reverse process defined over the fixed time interval $[0,T]$, governed by the following \ac{sde}
\begin{equation}
	\d X_t = \bar{\gamma}(t)\left(X_t + 2\nabla \log\rho_t(X_t)\right)\d t + \sqrt{2\bar{\gamma}(t)} \d W_t,\footnote{For convenience, we do not distinguish between the forward and reverse Wiener processes, using the same notation $W$ for both.}  \quad X_0 \sim \rho_0.
	\label{eq:reversed OU process:simplified notation}
\end{equation}
Here, $\bar{\gamma}(t) = \gamma(T-t)$ represents the reversed time-rescaling function, and $\rho_t$ denotes the distribution of $X_t$, which coincides with the marginal distribution of the forward \ac{ou} process at time $T-t$ (\ie, $X^\OU_{T-t} \sim \rho_{t}$). 

In practice, simulating the exact reverse process is intractable. Therefore, practical \acp{ddpm} are governed by an approximate reverse \ac{sde}
\begin{equation}
		\d \widehat{X}_t = \bar{\gamma}(t)\big(\widehat{X}_t + 2S_{t_k}\big(\widehat{X}_{t_k}\big)\big)\d t + \sqrt{2\bar{\gamma}(t)}\d W_t, 
		\quad \widehat{X}_0 \sim \xi^d, \; t \in [t_k, t_{k+1}), \; k = 0, 1, \dots, K-1.
	\label{eq:DDPM:SDE form}
\end{equation}
Compared to the exact reverse process in~\eqref{eq:reversed OU process:simplified notation}, this formulation introduces three key distinctions.
\begin{enumerate*}
	\item The exact initial distribution $\rho_0$ is replaced by the standard Gaussian distribution $\xi^d$, as $\rho_0$ is intractable to evaluate or sample from directly.
	\item The true score function $\nabla\log \rho_t$ is replaced by an approximate score function $S\colon [0,T]\times \bbR^d\to\bbR^d$, typically learned via score-matching techniques~\cite{JMLR:v6:hyvarinen05a, 6795935}. 
	\item A time discretization scheme $0 = t_0 < t_1 < \cdots < t_K = T$ is applied, as exact continuous integration along the score function is computationally infeasible.
\end{enumerate*}

These three practical modifications inherently introduce three corresponding sources of error in the generative phase.
\begin{enumerate*}
	\item \emph{Mixing error}, arising from the mismatch between $\rho_0$ and the standard Gaussian (\ie, the failure of the forward \ac{ou} process to perfectly reach its stationary distribution in finite time $T$).
	\item \emph{Score estimation error}, stemming from the approximation gap between the learned surrogate $S$ and the true score $\nabla\log \rho$.
	\item \emph{Discretization error}, caused by holding the score function evaluation constant (\ie, using $S_{t_k}$) over each time interval $t\in[t_k,t_{k+1})$. 
\end{enumerate*} 
In the sequel, these errors collectively determine the overall \Renyi divergence error bound for \acp{ddpm}, as characterized in Theorem~\ref{theorem:Renyi divergence error of DDPM}.

Finally, as a side remark, the standard iteration rule for \acp{ddpm} is derived from~\eqref{eq:DDPM:SDE form} using an exponential integrator, however, we adopt the \ac{sde} form in~\eqref{eq:DDPM:SDE form} because it is more convenient for theoretical analysis. We defer further details regarding \acp{ddpm} to Appendix~\ref{section:appendix:backgrounds on linear SDE}.

\subsection{\Renyi Divergence}
\label{subsection:Renyi divergence}
This section formally defines the \Renyi divergence and then provides a quantitative characterization of its enhanced sensitivity.
\begin{definition}[\Renyi Divergence]
	For $\alpha \in (0, +\infty]$ and probability distributions $P$ and $Q$, the $\alpha$-\Renyi divergence of $P$ relative to $Q$ is defined as
	\begin{equation}
	\RenyiDivergence{P}{Q}\coloneqq
	\begin{cases}
		\displaystyle 
		\frac{1}{\alpha-1}
		\log \mathbb{E}_{Q}\left[
		\left(\frac{\mathrm d P}{\mathrm d Q}\right)^\alpha
		\right],
		& \alpha\in(0,1)\cup(1,+\infty),\\[1.2em]
		\displaystyle 
		\KLDivergence{P}{Q}
		=\mathbb{E}_{P}\left[
		\log\frac{\mathrm dP}{\mathrm dQ}
		\right],
		& \alpha=1,\\[1.2em] 
		\displaystyle 
		\operatorname*{ess\,sup}_{Q}\log
		\frac{\mathrm dP}{\mathrm dQ},
		& \alpha=\infty.
	\end{cases}
	\end{equation}
	\label{definition:Renyi divergence}
\end{definition}
We conclude this section by formally demonstrating the enhanced sensitivity of the \Renyi divergence in low-probability regions, as motivated in Section~\ref{section:Introduction}. The following proposition establishes that two probability distributions that are close in \ac{tv}, \ac{kl} divergence, or $p$-Wasserstein distance can nevertheless remain far apart under the \Renyi divergence. 
We provide an intuitive explanation in Remark~\ref{remark:when large Renyi divergence occurs} and defer the complete proof in Appendix~\ref{section:Appendix:Why Renyi}.

\begin{proposition}
	\label{proposition:KL-close distributions can diverge under Renyi divergence}
	For any $\alpha>1$, $p>1$, and $0<\eta<M$, there exist probability distributions $P$ and $Q$ satisfying the following conditions simultaneously.
	\begin{enumerate}
		\item $\max\left\{\totalVariation{P}{Q}, \KLDivergence{P}{Q}, \WassersteinDistance[p]{P}{Q} \right\} < \eta$.
		\item $\RenyiDivergence{P}{Q} > M$.
	\end{enumerate}
\end{proposition}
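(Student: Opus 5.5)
I will use the mixture example from Fig.~\ref{fig:Renyi divergence is more sensitive}. Take $Q=\cN(0,1)$ on $\bbR$ and $P=(1-\epsilon)\cN(0,1)+\epsilon\cN(m,1)$, with $\epsilon\in(0,1)$ and $m>0$ to be chosen. The idea is to let $m$ grow with $\epsilon$ so that the R\'enyi divergence blows up like $\tfrac{\alpha}{2}m^2$. Meanwhile the TV, KL and $p$-Wasserstein discrepancies stay small: they carry a multiplicative factor of $\epsilon$, while $m$ grows only like $\sqrt{\log(1/\epsilon)}$.

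\emph{Step 1 (R\'enyi lower bound).} The density ratio is $\RadonNikodym{P}{Q}(x)=(1-\epsilon)+\epsilon e^{mx-m^2/2}\ge \epsilon e^{mx-m^2/2}$. Hence
\[
\bbE_Q\Big[\Big(\RadonNikodym{P}{Q}\Big)^\alpha\Big]\ge \epsilon^\alpha\, \bbE_Q\big[e^{\alpha m X-\alpha m^2/2}\big]=\epsilon^\alpha e^{\alpha(\alpha-1)m^2/2}.
\]
It follows that $\RenyiDivergence{P}{Q}\ge \tfrac{\alpha}{2}m^2-\tfrac{\alpha}{\alpha-1}\log(1/\epsilon)$. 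Choosing $m^2=m_\epsilon^2:=\tfrac{4}{\alpha-1}\log(1/\epsilon)+\tfrac{4M}{\alpha}$ gives $\RenyiDivergence{P}{Q}>M$, since the positive term then exceeds the negative one by more than $M$.

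\emph{Step 2 (weak measures small).} The TV bound is immediate: $\totalVariation{P}{Q}\le\epsilon$, because $P-Q=\epsilon(\cN(m,1)-\cN(0,1))$.

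For KL, I use joint convexity: $\KLDivergence{P}{Q}\le \epsilon\,\KLDivergence{\cN(m,1)}{\cN(0,1)}=\epsilon m^2/2$.

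For Wasserstein, I use the coupling that keeps the $(1-\epsilon)$-component fixed and shifts the $\epsilon$-component by $m$. This gives $\mathcal{W}_p^p(P,Q)\le \epsilon m^p$, so $\WassersteinDistance[p]{P}{Q}\le \epsilon^{1/p}m$.

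With $m=m_\epsilon=\cO(\sqrt{\log(1/\epsilon)})$, each of $\epsilon$, $\epsilon m_\epsilon^2/2$ and $\epsilon^{1/p}m_\epsilon$ tends to $0$ as $\epsilon\to0$, because polylogarithmic factors are dominated by positive powers of $\epsilon$. So for $\epsilon$ small enough all three are below $\eta$, while Step 1 still gives a R\'enyi divergence above $M$.

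\emph{Main obstacle.} The only delicate point is that the two requirements pull $m$ in opposite directions. Large $m$ is needed for the R\'enyi divergence, but $m$ appears multiplicatively in the KL and Wasserstein bounds. The resolution is the scaling $m^2\asymp\log(1/\epsilon)$, which beats the additive $\log(1/\epsilon)$ penalty in Step 1 but only costs logarithmic factors in Step 2. Beyond that, I just need to confirm the elementary limit $\epsilon^{1/p}\sqrt{\log(1/\epsilon)}\to0$. The hypotheses $\alpha>1$ and $p>1$ are used only to make the coefficient $\tfrac{\alpha}{\alpha-1}$ and the exponent $1/p$ finite and positive.
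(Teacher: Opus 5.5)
Your proof is correct, but it takes a different route from the paper's.

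\textbf{The paper's construction.} The paper works abstractly on a metric space $\Omega$ of finite diameter $D$, with a two-valued density ratio: $P(A)=\delta$ and $\d Q/\d P=e^{-\kappa}$ on $A$, renormalized on $A^c$. Everything is then computed essentially exactly: $\totalVariation{P}{Q}<\delta$, $\KLDivergence{P}{Q}<\delta\kappa$, and $\RenyiDivergence{P}{Q}>\kappa+\log\delta/(\alpha-1)$. The Wasserstein term is handled by $\WassersteinDistance[p]{P}{Q}\le D\,\totalVariation{P}{Q}^{1/p}$ via the optimal TV coupling. The choice $\kappa=\eta/(2\delta)$ keeps KL below $\eta/2$ while the R\'enyi divergence grows like $1/\delta$, and an explicit admissible $\delta$ is then solved for.

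\textbf{Your construction.} You use the Gaussian mixture of Fig.~\ref{fig:Renyi divergence is more sensitive} on the unbounded line. Your R\'enyi lower bound drops the $(1-\epsilon)$ part of the density ratio and uses the Gaussian MGF. This turns the asymptotics of Table~\ref{tab:metrics_scaling} into a rigorous non-asymptotic inequality. KL is controlled by joint convexity and $\mathcal{W}_p$ by the explicit shift coupling. The tension between the two requirements is resolved by $m^2\asymp\log(1/\epsilon)$ plus a limit argument, instead of $\kappa\asymp 1/\delta$ plus a closed-form threshold.

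\textbf{What each route buys.} Your route needs no bounded diameter, which the paper's TV-to-Wasserstein transfer requires. It also shows the phenomenon for smooth sub-Gaussian densities with Lipschitz scores, the kind of distributions the paper's DDPM assumptions cover, rather than for a discontinuous ratio on a bounded domain. The paper's route gives closed-form quantities and an explicit $\delta$. It is also a literal instance of Remark~\ref{remark:when large Renyi divergence occurs}.

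\textbf{Two corrections to your closing remark.} First, neither proof uses $p>1$; any $p\ge 1$ works. Second, $\alpha>1$ is not merely cosmetic. It makes the exponent $\alpha(\alpha-1)m^2/2$ positive and preserves the inequality when you divide by $\alpha-1$. For $\alpha\le 1$ the statement is false, since then $\RenyiDivergence{P}{Q}\le\KLDivergence{P}{Q}$.
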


\begin{remark}
	As an immediate corollary, there exists no constant $C(\alpha)$ (depending only on $\alpha$) such that the inequality $\RenyiDivergence{P}{Q} \le C(\alpha)D(P,Q)$ holds uniformly across all pairs of distributions $(P,Q)$, where $D(P,Q)$ represents any of \ac{tv}, \ac{kl} divergence, or $p$-Wasserstein distance.
\end{remark}

\begin{remark}
	\label{remark:when large Renyi divergence occurs}
	Intuitively, this phenomenon occurs due to how different discrepancy measures penalize distributional mismatches in tail regions. Consider a set $A$ in the tail of $P$ with very small mass $P(A) \ll 1$. Suppose $Q$ allocates substantially less mass to $A$, making the density ratio $\d P/\d Q$ significantly larger than $1$ on $A$. On the complement $A^c$, $Q$ is constructed to match $P$ closely (\eg, by setting $\d P/\d Q \approx 1$).
	For \ac{tv}, \ac{kl} divergence, and $p$-Wasserstein distance, the vanishing mass $P(A)$ dominates and suppresses the large ratio $\d P/\d Q$ over $A$. Coupled with the near-identical behavior on $A^c$, these discrepancy measures remain small ($< \eta$). In contrast, for $\alpha>1$, the exponentiation $(\d P/\d Q)^\alpha$  in the \Renyi divergence  overwhelms the small probability mass $P(A)$, resulting in a large divergence ($> M$).
\end{remark}

\section{Main Results}
\label{section:Main Results}

In this section, we first consider two general diffusion processes with mismatched drifts, establishing upper bounds on their \Renyi divergence in Section~\ref{subsection:Renyi divergence for diffusion processes with mismatched drifts}. Because these results apply broadly beyond \acp{ddpm}, they are presented in a general framework. In Section~\ref{subsection:Renyi divergence error for DDPM}, we specialize these findings to \acp{ddpm}, characterizing their convergence under \Renyi divergence in terms of both distributional error and iteration complexity. Next, we provide a technical overview of the proof in Section~\ref{subsection:technique overview}, deferring full proofs to Appendix~\ref{section:Appendix:proof of main results}. Finally, in Section~\ref{subsection:Renyi divergence error for DDPM:CD}, we extend our analysis to their kinetic counterpart, a class of \acp{ddpm} governed by critically damped Langevin dynamics.

\subsection{\Renyi Divergence for Diffusion Processes with Mismatched Drifts}
\label{subsection:Renyi divergence for diffusion processes with mismatched drifts}

Let $Z_t$ and $\widehat{Z}_t$ be two diffusion processes that share the same diffusion coefficient and initial distribution, yet differ in their drift coefficients. Suppose they are governed by the following \acp{sde}
\begin{align}
	\d Z_t &= b_t(Z_t)\d t + \sigma_t\d W_t, 
	\label{eq:target diffusion} \\
	\d \widehat{Z}_t &= \widehat{b}_t(\widehat{Z}_t)\d t + \sigma_t\d W_t, 
	\label{eq:learned diffusion}
\end{align}
where $\sigma_t \in \bbR$, and $b_t, \widehat{b}_t\colon \bbR^d \to \bbR^d$ are Lipschitz continuous and continuously differentiable vector fields. 

The following theorem establishes upper bounds on the \Renyi divergence between $Z_t$ and $\widehat{Z}_t$. We derive these results via a Girsanov argument, with detailed proofs deferred to Appendix~\ref{section:Appendix:proof of theorem:Renyi divergence bound}.

\begin{theorem}
	\label{theorem:Renyi divergence bound}
	Take $Z$ as the reference process and denote by $\rho$ its path measure. Let $\rho_t$ and $\widehat{\rho}_t$ denote the marginal distributions of $Z_t$ and $\widehat{Z}_t$, respectively. 
	For any $\alpha > 1$ and any tunable parameter $\beta > 1$ with Hölder conjugate exponent $\beta^*$ (\ie, $1/\beta + 1/\beta^* = 1$), the following upper bounds hold.
	\begin{enumerate}
		\item The \Renyi divergence of $\widehat{\rho}_t$ relative to $\rho_t$ satisfies
		\begin{equation}
			\RenyiDivergence{\widehat{\rho}_t}{\rho_t} 
			\leq \frac{1}{(\alpha - 1)\beta^*}\log
			\expectation{\rho}{\exponential{\frac{\alpha\beta^*(\alpha\beta-1)}{2}\int_0^t\frac{1}{\sigma_s^2}\left\|b_s(Z_s)-\widehat{b}_s(Z_s)\right\|^2\d s}}.
			\label{eq:Renyi divergence bound:quadratic variation}
		\end{equation}
	
		\item Provided that Novikov's condition
		$ \displaystyle \expectation{\rho}{\exponential{\frac{1}{2}\int_0^t\frac{1}{\sigma_s^2}\left\|b_s(Z_s)-\widehat{b}_s(Z_s)\right\|^2\d s}} < \infty$ holds, the \Renyi divergence of $\rho_t$ relative to $\widehat{\rho}_t$ satisfies
		\begin{equation}
			\RenyiDivergence{\rho_t}{\widehat{\rho}_t}
			\leq \frac{1}{(\alpha-1)\beta^*}\log
			\expectation{\rho}{\exponential{\frac{(\alpha-1)\beta^*((\alpha-1)\beta+1)}{2}\int_0^t\frac{1}{\sigma_s^2}\left\|b_s(Z_s)-\widehat{b}_s(Z_s)\right\|^2\d s}}.
			\label{eq:Renyi divergence bound between Q and Q-hat}
		\end{equation}
	\end{enumerate}
\end{theorem}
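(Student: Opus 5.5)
The plan is to reduce the marginal statement to a statement about path measures, and then bound the path-space moment with a Girsanov density and a single Hölder split. Let $\widehat{\rho}$ denote the path measure of $\widehat{Z}$ on $[0,t]$. Since both processes start from the same initial distribution, and the time-$t$ marginal is the image of the path measure under the evaluation map, the data-processing inequality for \Renyi divergence gives
\begin{equation*}
\RenyiDivergence{\widehat{\rho}_t}{\rho_t}\le \RenyiDivergence{\widehat{\rho}}{\rho},
\qquad
\RenyiDivergence{\rho_t}{\widehat{\rho}_t}\le \RenyiDivergence{\rho}{\widehat{\rho}} .
\end{equation*}
On path space I would apply Girsanov's theorem with the local martingale
\begin{equation*}
M_s=\int_0^s \sigma_r^{-1}\bigl(\widehat{b}_r(Z_r)-b_r(Z_r)\bigr)\cdot \d W_r ,
\end{equation*}
whose quadratic variation is $\langle M\rangle_s=\int_0^s\sigma_r^{-2}\|b_r(Z_r)-\widehat{b}_r(Z_r)\|^2\d r$. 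This gives $\RadonNikodym{\widehat{\rho}}{\rho}=\exponential{M_t-\tfrac12\langle M\rangle_t}$. Lipschitz drifts guarantee strong solutions, so the path measures are well defined.

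For part 1, I would write $\expectation{\rho}{(\d\widehat{\rho}/\d\rho)^\alpha}=\expectation{\rho}{\exponential{\alpha M_t-\tfrac{\alpha}{2}\langle M\rangle_t}}$ and split the exponent as
\begin{equation*}
\Bigl(\alpha M_t-\tfrac{\alpha^2\beta}{2}\langle M\rangle_t\Bigr)+\tfrac{\alpha(\alpha\beta-1)}{2}\langle M\rangle_t .
\end{equation*}
Hölder's inequality with exponents $(\beta,\beta^*)$ is then applied to the product of the two exponentials.
\begin{itemize}
\item The $\beta$-th power of the first factor is $\exponential{\lambda M_t-\tfrac{\lambda^2}{2}\langle M\rangle_t}$ with $\lambda=\alpha\beta$. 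This is a nonnegative local martingale, hence a supermartingale, so its expectation is at most $1$.
\item The $\beta^*$-th power of the second factor is exactly the exponential of $\tfrac{\alpha\beta^*(\alpha\beta-1)}{2}\langle M\rangle_t$.
\end{itemize}
Taking $\tfrac{1}{\alpha-1}\log$ of the resulting bound yields \eqref{eq:Renyi divergence bound:quadratic variation}.

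For part 2, I would use the identity $\expectation{\widehat{\rho}}{(\d\rho/\d\widehat{\rho})^\alpha}=\expectation{\rho}{(\d\rho/\d\widehat{\rho})^{\alpha-1}}=\expectation{\rho}{\exponential{-(\alpha-1)M_t+\tfrac{\alpha-1}{2}\langle M\rangle_t}}$. This identity requires $\rho$ and $\widehat{\rho}$ to be mutually equivalent with $\d\rho/\d\widehat{\rho}=(\d\widehat{\rho}/\d\rho)^{-1}$. That is exactly where Novikov's condition enters: it makes the stochastic exponential a true martingale, so Girsanov produces a genuine equivalent probability measure. I would then split the exponent as
\begin{equation*}
\Bigl(-(\alpha-1)M_t-\tfrac{(\alpha-1)^2\beta}{2}\langle M\rangle_t\Bigr)+\tfrac{(\alpha-1)((\alpha-1)\beta+1)}{2}\langle M\rangle_t
\end{equation*}
and apply Hölder with $(\beta,\beta^*)$ again. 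The $\beta$-th power of the first factor is the supermartingale exponential with $\lambda=-(\alpha-1)\beta$, so its expectation is at most $1$. The remaining factor produces \eqref{eq:Renyi divergence bound between Q and Q-hat} after taking $\tfrac{1}{\alpha-1}\log$.

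The main obstacle I expect is the rigorous justification of the Girsanov density in part 1, where no Novikov condition is assumed. There, $\exponential{M-\tfrac12\langle M\rangle}$ might be only a strict local martingale, and the identification of $\d\widehat{\rho}/\d\rho$ needs care. My first attempt would be to localize with stopping times $\tau_n=\inf\{s:\langle M\rangle_s\ge n\}\wedge t$. Up to $\tau_n$, Novikov holds trivially, so the argument above applies to the stopped processes. I would then pass to the limit $n\to\infty$ using lower semicontinuity of \Renyi divergence (equivalently, Fatou's lemma on the stopped densities) together with monotone convergence on the right-hand side, since $\langle M\rangle_{\tau_n}\uparrow\langle M\rangle_t$. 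If the right-hand side is infinite, the bound is trivial, so only the finite case needs this limiting argument.
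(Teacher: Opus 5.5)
Your proof is correct, and its core is the paper's. You pass to path space by data processing, identify the Girsanov density $\cE(M)_t$, split the exponent exactly as the paper does, and apply H\"older with $(\beta,\beta^*)$ together with the supermartingale bound $\bbE[\cE(\lambda M)_t]\le 1$. That last step is precisely Lemma~\ref{lemma:moments of exponential martingale}, used for both the positive and the negative moment.

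The only divergence is how you remove Novikov's condition in part 1, and there the paper has a one-line argument you missed. For all $\alpha,\beta>1$ the coefficient is
$c=\frac{\alpha\beta^*(\alpha\beta-1)}{2}=\frac{\alpha\beta(\alpha\beta-1)}{2(\beta-1)}>\frac12$,
because $\alpha\beta>1$ and $\alpha\beta-1>\beta-1$. The paper even computes $\min_{\beta>1}c=\frac{\alpha}{2}(\sqrt{\alpha}+\sqrt{\alpha-1})^2$. So if the right-hand side of \eqref{eq:Renyi divergence bound:quadratic variation} is finite, then $\bbE_\rho[\exp(\frac12\langle M\rangle_t)]<\infty$ follows immediately. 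If it is infinite, there is nothing to prove.

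Your localization also goes through. Closing it by lower semicontinuity is the harder route: you would have to show that the time-$t$ laws of the switched processes (driven by $\widehat{b}$ before $\tau_n$ and by $b$ after) converge to $\widehat{\rho}_t$. Uniform integrability is cleaner. The stopped bound gives $\sup_n\bbE_\rho[\cE(M)_{\tau_n}^\alpha]<\infty$, so $\{\cE(M)_{\tau_n}\}$ is uniformly integrable. Since $\tau_n=t$ eventually almost surely, $\bbE_\rho[\cE(M)_t]=\lim_n\bbE_\rho[\cE(M)_{\tau_n}]=1$, and Girsanov then applies on all of $[0,t]$. Either way, the paper's observation that $c>\frac12$ makes this step unnecessary.
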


\begin{remark}
	These bounds are characterized by the \ac{cgf} of the ``drift error'', which is expressed as a path integral of the squared drift difference along the trajectory of the process $X$. In particular, when $\widehat{b}_\tau = b_\tau$ for all $\tau \in [0,t]$, both bounds yield $\RenyiDivergence{\widehat{\rho}_t}{\rho_t} = \RenyiDivergence{\rho_t}{\widehat{\rho}_t} = 0$, as expected. 
\end{remark}

\subsection{\Renyi Divergence Error of \ac{ddpm}}
\label{subsection:Renyi divergence error for DDPM}

In this subsection, we apply Theorem~\ref{theorem:Renyi divergence bound} to characterize the convergence of \acp{ddpm} under the \Renyi divergence. To this end, we first formally define the score estimation error and state our assumptions on the data distribution and the regularity of the score function.

\begin{definition}[Score Estimation Error]
	\label{definition:score estimation error}
	Given a sample path $X$ and a time discretization $0=t_0<t_1<\cdots< t_K = T$, the pathwise score estimation error $\scoreError(X)$ is defined as the weighted average of the squared score error along $X$, given by
	\begin{equation}
		\scoreError(X) \coloneqq \sum_{k=0}^{K-1} \frac{\Delta\tilde{\Gamma}_{k}}{\tilde{\Gamma}(T)} \|S_{t_k}(X_{t_{k}}) - \nabla\log \rho_{t_k}(X_{t_k})\|^2.
	\end{equation}
	Here, the weights $\displaystyle  \frac{\Delta\tilde{\Gamma}_{k}}{\tilde{\Gamma}(T)}$ normalize the effective step sizes relative to the total effective time under the time-rescaling function $\bar{\gamma}$ of the reverse process~\eqref{eq:reversed OU process:simplified notation}, where
	\begin{equation}
		\tilde{\Gamma}(t) \coloneqq \int_{0}^{t}\bar{\gamma}(\tau)\d \tau, 
		\quad 
		\Delta\tilde{\Gamma}_{k} \coloneqq \tilde{\Gamma}(t_{k+1}) - \tilde{\Gamma}(t_k) = \int_{t_k}^{t_{k+1}}\bar{\gamma}(\tau)\d \tau.
	\end{equation}
	Note that $\displaystyle \sum_{k=0}^{K-1} \frac{\Delta\tilde{\Gamma}_{k}}{\tilde{\Gamma}(T)} = 1$, making $\scoreError(X)$ a convex combination of step-wise score errors.
\end{definition}

\begin{definition}[Maximum Effective Discretization Step]
	\label{definition:maximum effective discretization step}
	The maximum effective discretization step $\Delta_{\max}$ is defined as 
	\begin{equation}
		\Delta_{\max} \coloneqq \max_{0 \le k \le K-1} \Delta\tilde{\Gamma}_k = \max_{0 \le k \le K-1}\int_{t_k}^{t_{k+1}}\bar{\gamma}(\tau)\d \tau.
	\end{equation}
\end{definition}

\begin{assumption}[Data Distribution]
	\label{assumption:data distribution}
	The target data distribution $\mu \in \mathscr{P}(\bbR^d)$ is sub-Gaussian with parameter $\sigma > 0$. Mathematically,
	\begin{equation}
		\expectation{Z\sim \mu}{\exponential{\lambda\innerProd{v}{Z-\expectation{}{Z}}}}\leq \exp\left(\frac{1}{2}\lambda^2\sigma^2\right),
	\end{equation}
	holds for all $\lambda \in \bbR$ and $v \in \mathbb{S}^{d-1}$, where $\mathbb{S}^{d-1}$ denotes the unit sphere in $\bbR^d$.
\end{assumption}

\begin{assumption}[Lipschitz Score Function]
	\label{assumption:Lipschitz score}
	The score function $\nabla\log \rho_t(\cdot)$ is $L$-Lipschitz continuous on $\bbR^d$ uniformly for all $t \in [0,T]$. Specifically, there exists a constant $L > 0$ such that
	\begin{equation}
		\left\| \nabla\log \rho_t(x) - \nabla\log \rho_t(y) \right\| \le L \|x - y\|, \quad \forall x, y \in \bbR^d, \; \forall t \in [0,T].
	\end{equation}
\end{assumption}

Now we are equipped to characterize the error of \acp{ddpm} under \Renyi divergence, as shown in the following theorem.

\begin{theorem}[\Renyi Divergence Error of \acp{ddpm}]
	\label{theorem:Renyi divergence error of DDPM}
	Suppose Assumptions~\ref{assumption:data distribution} and \ref{assumption:Lipschitz score} hold. Let $\rho$ denote the path measure of the exact reverse process $X$ in~\eqref{eq:reversed OU process:simplified notation}. For any order $\alpha > 1$, the \Renyi divergence of the target distribution $\mu$ relative to the \ac{ddpm}-generated distribution $\widehat{\mu}$ satisfies
	\begin{equation}
			\RenyiDivergence{\mu}{\widehat{\mu}} \lesssim \cE^{M} + \cE^{S} + \cE^{D} + o\left(\cE^{D}\right),
			\label{eq:Renyi divergence error of DDPM}
	\end{equation}
	where $o\left(\cE^{D}\right)$ denotes a higher order term satisfying $o\left(\cE^{D}\right)/\cE^{D} \to 0$ as $\Delta_{\max} \to 0$. The mixing error $\cE^{M}$, score estimation error $\cE^{S}$, and discretization error $\cE^{D}$ are defined by
	\begin{align}
		\cE^{M} &\coloneqq \exponential{-\frac{2\tilde{\Gamma}(T)}{2\alpha-1}}\RenyiDivergence[2\alpha-1]{\mu}{\xi^d},	\label{eq:mixing error}\\
		\cE^{S} &\coloneqq \frac{1}{\alpha-1}\log\expectation{\rho}{\exponential{8(2\alpha-1)(4\alpha-1)\tilde{\Gamma}(T)\scoreError(X)}}, \label{eq:score error}\\ 
		\cE^{D} &\coloneqq \frac{\alpha^2}{\alpha-1} L^2\tilde{\Gamma}(T) \left( d \Delta_{\max} + \left(m^2+\sigma^2d\right)\Delta_{\max}^2 \right). \label{eq:discretization error}
	\end{align}
	Here, $m \coloneqq \|\expectation{Z\sim \mu}{Z}\|$ in~\eqref{eq:discretization error} denotes the magnitude of the target distribution mean.
\end{theorem}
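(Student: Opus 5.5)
The argument runs in three stages. First, I split off the mixing error using the data-processing inequality and a weak triangle inequality for Rényi divergence. Next, I apply Theorem~\ref{theorem:Renyi divergence bound} (part 2) to the two reverse processes started from the same initial law. Last, I separate the resulting cumulant generating function into a score part and a discretization part via Hölder's inequality.

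\textbf{Step 1 (mixing).} Let $\widetilde{X}$ be the \ac{ddpm} dynamics~\eqref{eq:DDPM:SDE form} started from $\rho_0$ instead of $\xi^d$, with terminal law $\widetilde{\mu}$. I will use the weak triangle inequality
\[
\RenyiDivergence{\mu}{\widehat{\mu}}\le \tfrac{\alpha-1/2}{\alpha-1}\RenyiDivergence[2\alpha]{\mu}{\widetilde{\mu}}+\RenyiDivergence[2\alpha-1]{\widetilde{\mu}}{\widehat{\mu}}.
\]
I would tune the Hölder split so that the second order is exactly $2\alpha-1$. By data processing through the common Markov kernel, the second term is at most $\RenyiDivergence[2\alpha-1]{\rho_0}{\xi^d}$. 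The forward \ac{ou} semigroup contracts Rényi divergence toward $\xi^d$, since $\xi^d$ satisfies an \ac{lsi}. The standard hypercontractivity argument (Vempala--Wibisono) gives decay $\exp(-2\tilde\Gamma(T)/q)$ at order $q=2\alpha-1$, with initial value $\RenyiDivergence[2\alpha-1]{\mu}{\xi^d}$. This yields $\cE^M$.

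\textbf{Step 2 (Girsanov).} I apply Theorem~\ref{theorem:Renyi divergence bound}, part 2, with $Z=X$ as reference and $\widehat{Z}=\widetilde{X}$ at order $2\alpha$. Both processes have $\sigma_t^2=2\bar\gamma(t)$, so the integrand is
\[
\tfrac{1}{2\bar\gamma}\cdot 4\bar\gamma^2\|\nabla\log\rho_t(X_t)-S_{t_k}(X_{t_k})\|^2 = 2\bar\gamma\|\cdot\|^2 .
\]
Taking $\beta=2$ gives an explicit exponent coefficient of order $(2\alpha-1)(4\alpha-1)$. This is the source of the constants in $\cE^S$. I would then split
\[
\|\nabla\log\rho_t(X_t)-S_{t_k}(X_{t_k})\|^2\le 2\|S_{t_k}(X_{t_k})-\nabla\log\rho_{t_k}(X_{t_k})\|^2+2\|\nabla\log\rho_t(X_t)-\nabla\log\rho_{t_k}(X_{t_k})\|^2 .
\]
Integrating the first term against $\bar\gamma$ produces exactly $\tilde\Gamma(T)\scoreError(X)$. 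Cauchy--Schwarz on the expectation of the exponential then separates the score CGF, which gives $\cE^S$, from the discretization CGF. This step costs another factor of $2$ in the exponent; those factors are absorbed in the constant $8$. Novikov's condition should follow from the Lipschitz property and Gaussian tails; for the score term it is implicitly required for $\cE^S$ to be finite.

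\textbf{Step 3 (discretization CGF), the main obstacle.} I need to bound
\[
\frac{1}{\alpha-1}\log\bbE_\rho\exp\Big(c\alpha^2\sum_k\int_{t_k}^{t_{k+1}}\bar\gamma\,\|\nabla\log\rho_t(X_t)-\nabla\log\rho_{t_k}(X_{t_k})\|^2\d t\Big).
\]
Along the exact reverse process, $\nabla\log\rho_t(X_t)$ is a semimartingale. The cleanest route is to use that, in forward time, $e^{\Gamma}\nabla\log q_t(X^{\OU}_t)$ behaves as a (reverse) martingale. Its increments are driven by $\nabla^2\log\rho\,\d W$, with Hessian bounded by $L$. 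This gives a Brownian term of size $L\sqrt{\Delta}\,\Xi$, contributing $L^2 d\Delta$ per unit effective time, plus a drift term of size $L\Delta(\|X\|+\text{score})$. Sub-Gaussianity of $\mu$ propagates to $\rho_t$ with parameter $\lesssim \sigma^2\vee 1$ and mean $\lesssim m$, so the drift term contributes $L^2(m^2+\sigma^2 d)\Delta^2$. To pass the exponential through the time integral, I would use Jensen's inequality on the normalized measure $\bar\gamma\,\d t/\tilde\Gamma(T)$. This reduces the problem to single-time Gaussian and sub-Gaussian moment generating functions, which are finite when $\alpha^2L^2\tilde\Gamma(T)\Delta_{\max}$ is small. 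Linearizing $\log(1-x)^{-d/2}\approx dx/2$ for small $\Delta_{\max}$ yields $\cE^D$ plus an $o(\cE^D)$ remainder.

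The delicate part is controlling the exponential moment of the quadratic variation uniformly in $k$. The backward-filtration martingale structure would need care; if it fails, I would fall back on a direct Lipschitz bound $L\|X_t-X_{t_k}\|$ plus a time-regularity term for $\partial_t\nabla\log\rho_t$.
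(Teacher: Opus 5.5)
Your Steps 1 and 2 coincide with the paper's argument. Both use the auxiliary process started from $\rho_0$, the weak triangle inequality with $\eta=1/2$, data processing plus the LSI decay along the forward OU process for $\cE^{M}$, and Theorem~\ref{theorem:Renyi divergence bound} (part 2) at order $2\alpha$ with $\beta=2$. Your two-term split followed by Cauchy--Schwarz even reproduces the constant $8(2\alpha-1)(4\alpha-1)$ exactly; the paper instead uses a three-term split and a four-factor H\"older inequality.

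The genuine difference is Step 3. The paper never differentiates the score in time:
\begin{itemize}
\item it controls $\|\nabla\log\rho_{t_k}(X_{t_k})-\nabla\log\rho_{t}(X_{t_k})\|$ with the score perturbation Lemma~\ref{lemma:score perturbation};
\item it bounds the spatial part by $L\|X_t-X_{t_k}\|$;
\item it then uses the explicit Gaussian representations of $X_{t_k}$ and $X_t-X_{t_k}$ with Lemmas~\ref{lemma:score concentration} and~\ref{lemma:moments of quadratic sub-Gaussian}.
\end{itemize}
This route is elementary, using only distributional identities. It is also exactly where sub-Gaussianity of $\mu$ and the $(m^2+\sigma^2d)\Delta_{\max}^2$ term enter.

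Your primary route instead rests on the It\^o identity
\[
\d\nabla\log\rho_t(X_t)=-\bar{\gamma}(t)\nabla\log\rho_t(X_t)\,\d t+\sqrt{2\bar{\gamma}(t)}\,\nabla^2\log\rho_t(X_t)\,\d W_t ,
\]
which follows from the Fokker--Planck equation; the $\partial_t\nabla\log\rho_t$ and third-derivative terms cancel. Two corrections are needed. First, the martingale is $e^{\tilde{\Gamma}(t)}\nabla\log\rho_t(X_t)$ in reverse time, i.e.\ $e^{-\Gamma}$ (not $e^{\Gamma}$) times the score in forward time. Second, the net drift is only $\bar{\gamma}$ times the score, with no $L\|X\|$ term. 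Your $L\Delta(\|X\|+\text{score})$ is the naive $\nabla^2\log\rho\,\d X$ piece before that cancellation.

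With these fixes the route works:
\begin{itemize}
\item The martingale part has quadratic variation bounded by the deterministic quantity $2L^2\Delta_{\max}$ per direction. Gaussian linearization plus the exponential-supermartingale bound then give the $L^2d\Delta_{\max}$ term.
\item The drift's exponential moment comes from Lemma~\ref{lemma:score concentration}, which you should invoke explicitly.
\end{itemize}
What this buys is a discretization term of order $\frac{\alpha^2}{\alpha-1}\tilde{\Gamma}(T)(L^2d\Delta_{\max}+Ld\Delta_{\max}^2)$. It is free of $m$ and $\sigma$, and it lies within $\cE^{D}$ in the paper's regime $L\gtrsim1$, $L\Delta_{\max}\lesssim1$. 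The cost is the stochastic-calculus identity, which needs smoothness of $\rho_t$ for $t<T$; this is automatic for OU marginals.

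Your stated fallback via a bound on $\partial_t\nabla\log\rho_t$ would not go through as phrased. It requires third-derivative control that the assumptions do not provide, and the score perturbation lemma is precisely the device the paper uses to avoid it.
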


\begin{remark}
	While the underlying mechanisms of $\cE^M$, $\cE^S$, and $\cE^D$ have been established in Section~\ref{subsection:DDPM}, their quantitative scaling under \Renyi divergence warrants special emphasis.
	\begin{enumerate*}
		\item The mixing error $\cE^M$ decays exponentially fast with exponent $\cO(\widetilde{\Gamma}(T) / \alpha)$, in contrast to the exponent $\cO(\widetilde{\Gamma}(T))$ established for the \ac{tv} distance and \ac{kl} divergence.
		\item The score estimation error $\cE^S$ is governed by the \ac{cgf} of the pathwise squared error $\scoreError(X)$, imposing a more stringent requirement than the standard $L^2$-error control.
		\item The discretization error $\cE^D$ scales linearly in the maximum effective discretization step $\Delta_{\max}$, with a coefficient whose dependence on $\alpha$ is $\cO(\alpha)$. This follows because $\sigma$ scales as $\cO(1)$ and $m$ scales as $\cO(\sqrt{d})$ in the high-dimensional regime.
	\end{enumerate*}
\end{remark}

\begin{remark}[Trade-Off in Effective Running Time $\widetilde{\Gamma}(T)$]
	The upper bound in~\eqref{eq:Renyi divergence error of DDPM} reveals a fundamental trade-off with respect to the total effective time $\widetilde{\Gamma}(T)$. On one hand, the mixing error $\cE^{\mathrm{M}}$ decays exponentially with respect to $\widetilde{\Gamma}(T)$. On the other hand, a larger $\widetilde{\Gamma}(T)$ linearly inflates the discretization error $\cE^{D}$ and amplifies the score estimation error $\cE^{S}$. Crucially, because $\cE^{S}$ takes the form of a \ac{cgf} of $\scoreError$, its growth rate with respect to $\widetilde{\Gamma}(T)$ is dictated by the tail behavior of the score error under the path measure $\rho$. In particular, if $\scoreError$ is sub-exponential or sub-Gaussian under $\rho$, $\cE^{S}$ grows linearly (within a neighborhood of 0) or quadratically in $\widetilde{\Gamma}(T)$, respectively.
\end{remark}

\begin{remark}[Comparison with \ac{tv} and \ac{kl} Bounds]
	Theorem~\ref{theorem:Renyi divergence error of DDPM} serves as a \Renyi divergence counterpart to the \ac{tv} error bound established in~\cite[Theorem 2]{chen2022sampling} and the \ac{kl} error bound established in~\cite[Theorem 1]{chen2023improved}. When adapted to our time-rescaling framework, their respective results can be expressed as
	\begin{align}
		\squaredTotalVariation{\mu}{\widehat{\mu}} &\lesssim e^{-2\widetilde{\Gamma}(T)}\KLDivergence{\mu}{\xi^d} + \expectation{\rho}{\scoreError(X)}\widetilde{\Gamma}(T) + L^2\widetilde{\Gamma}(T) \left(d\Delta_{\max} + m_2 \Delta_{\max}^2\right),\\
		\KLDivergence{\mu}{\widehat{\mu}} &\lesssim e^{-\widetilde{\Gamma}(T)}\left(m_2 + d\right) + \expectation{\rho}{\scoreError(X)}\widetilde{\Gamma}(T) 
		+ L^2\widetilde{\Gamma}(T)^2d\Delta_{\max},
	\end{align}
	where $m_2 \coloneqq \expectation{Z\sim \mu}{\|Z\|^2}$ denotes the second moment of the target distribution $\mu$. Note that the three terms on the right-hand side of each bound share the same interpretations, corresponding to mixing, score estimation, and discretization errors.
\end{remark}

Theorem~\ref{theorem:Renyi divergence error of DDPM} yields the following iteration complexity for \acp{ddpm}. The proof is deferred to Appendix~\ref{subsection:proof of corollary:iteration complexity of DDPM}.

\begin{corollary}[Iteration Complexity of \acp{ddpm}]
	\label{corollary:iteration complexity of DDPM}
	Let $\epsilon \in \left(0, 1\right)$ be the target error tolerance, and assume the initial \Renyi divergence $R_0 \coloneqq \RenyiDivergence[2\alpha-1]{\mu}{\xi^d}$ is finite. Define the logarithmic factor $\mathcal{L}_\epsilon \coloneqq \log \left(1 + R_0 / \epsilon\right)$ and configure the \ac{ddpm} parameters satisfying $\widetilde{\Gamma}(T) = C\alpha \mathcal{L}_\epsilon$ for some constant $C>1$ and $\displaystyle \Delta_{\max} \asymp \frac{(\alpha-1)\epsilon}{\alpha^3 L^2 d \cdot \mathcal{L}_\epsilon} \wedge \frac{d}{m^2} \wedge \frac{1}{\sigma^2} \wedge \frac{1}{L}$.
	If the score estimation error satisfies $\cE^{S} \lesssim \epsilon$ under these settings, then the output distribution $\widehat{\mu}$ satisfies $\RenyiDivergence{\mu}{\widehat{\mu}} \lesssim \epsilon$. Moreover, this guarantee is achieved within $K$ iterations, where
	\begin{equation}
		K \asymp \frac{\alpha^4 L^2 d}{(\alpha-1)\epsilon} \mathcal{L}_\epsilon^2 \vee \alpha\left(\sigma^2 \vee \frac{m^2}{d} \vee L\right) \mathcal{L}_\epsilon.
		\label{eq:iteration complexity of DDPM}
	\end{equation}
\end{corollary}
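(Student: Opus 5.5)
The corollary follows by plugging the prescribed parameters into Theorem~\ref{theorem:Renyi divergence error of DDPM}, bounding each of $\cE^{M}$, $\cE^{S}$, $\cE^{D}$ by a constant multiple of $\epsilon$, and then counting steps. I will choose a uniform (in effective time) discretization, so that $K \asymp \widetilde{\Gamma}(T)/\Delta_{\max}$.

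\textbf{Mixing error.} With $\widetilde{\Gamma}(T) = C\alpha\mathcal{L}_\epsilon$ and $2\alpha-1 < 2\alpha$, we get
\[
\cE^{M} \le \exp\left(-\frac{2C\alpha\mathcal{L}_\epsilon}{2\alpha}\right) R_0 = \frac{R_0}{(1+R_0/\epsilon)^{C}}.
\]
Since $C>1$, this is at most $R_0/(1+R_0/\epsilon) \le \epsilon$.

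\textbf{Score error.} The condition $\cE^{S}\lesssim\epsilon$ is assumed directly.

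\textbf{Discretization error.} The constraint $\Delta_{\max}\le d/m^2 \wedge 1/\sigma^2$ gives
\[
(m^2+\sigma^2 d)\Delta_{\max}^2 \le 2d\Delta_{\max},
\]
so the quadratic term is absorbed into the linear one. Then
\[
\cE^{D} \lesssim \frac{\alpha^2}{\alpha-1}L^2 d\, C\alpha\mathcal{L}_\epsilon \Delta_{\max} \lesssim \epsilon
\]
by the first branch of the minimum defining $\Delta_{\max}$.

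\textbf{Higher-order term.} The $o(\cE^{D})$ term needs care, because the theorem only says $o(\cE^{D})/\cE^{D}\to0$ as $\Delta_{\max}\to0$. I will extract from the proof of Theorem~\ref{theorem:Renyi divergence error of DDPM} (Appendix~\ref{section:Appendix:proof of main results}) the explicit form of that remainder. I expect it to come from Taylor expansions or log-CGF estimates involving factors like $L\Delta_{\max}$ and $\sigma^2\Delta_{\max}$. The constraint $\Delta_{\max}\lesssim 1/L \wedge 1/\sigma^2$, with a small enough implied constant, is exactly what should make the remainder at most a constant times $\cE^{D}$. This step is the main obstacle: it depends on the concrete form of the remainder, and I would verify that the side conditions in $\Delta_{\max}$ keep every expansion in its regime of validity.

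\textbf{Step count.} Summing the three bounds gives $\RenyiDivergence{\mu}{\widehat{\mu}}\lesssim\epsilon$. For the iteration count,
\[
K \asymp \frac{\widetilde{\Gamma}(T)}{\Delta_{\max}} = C\alpha\mathcal{L}_\epsilon\left(\frac{\alpha^3L^2d\mathcal{L}_\epsilon}{(\alpha-1)\epsilon} \vee \frac{m^2}{d} \vee \sigma^2 \vee L\right).
\]
Distributing the product yields~\eqref{eq:iteration complexity of DDPM}.
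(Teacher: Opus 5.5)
\textbf{Verdict: your main argument is the paper's proof, but your plan for the $o(\cE^{D})$ remainder targets the wrong condition, and the condition that actually matters can fail under the stated parameters.}

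\textbf{What matches the paper.} Your bounds on $\cE^{M}$, $\cE^{S}$, $\cE^{D}$ and your step count are exactly the paper's argument. Your check $\cE^{M}\le R_0(1+R_0/\epsilon)^{-C}\le\epsilon$ is a cleaner form of its condition $\widetilde{\Gamma}(T)\ge(\alpha-\tfrac12)\log(R_0/\epsilon)$.

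\textbf{Where the branch $1/L$ comes from.} It is not there to control a remainder. In the paper it enforces the hypothesis $L\|\Sigma\|_2\le 1/4$ of the score perturbation lemma, where $\|\Sigma\|_2\lesssim\Delta_{\max}$. Without it the proof of Theorem~\ref{theorem:Renyi divergence error of DDPM} does not go through, so you should cite it for that reason.

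\textbf{The remainder.} The paper never quantifies $o(\cE^{D})$; it simply drops it. You are right that at a fixed $\Delta_{\max}$ this needs an argument, but the conditions $1/L\wedge 1/\sigma^2$ that you expect to control it are not the binding ones.

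In the proof of Proposition~\ref{proposition:upper-bound of Phi(lambda)}, every expansion is in $\tilde{\lambda}=\lambda CL^2\widetilde{\Gamma}(T)\Delta_{\max}^2$ with $\lambda\asymp\alpha^2$. The decisive one is the $\chi^2$ moment generating function $(1-8\tilde{\lambda}/\Delta_{\max})^{-d/2}$ in the bound on $\cI_5$, taken at $16\lambda$. It is finite, and at most $e^{8d\tilde{\lambda}/\Delta_{\max}}$, only if $\tilde{\lambda}/\Delta_{\max}\asymp\alpha^2L^2\widetilde{\Gamma}(T)\Delta_{\max}$ is below a small absolute constant.

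Once that holds, the rest works as you hoped:
\begin{itemize}
\item $L\tilde{\lambda}\lesssim L\Delta_{\max}$ and $\sigma^2\tilde{\lambda}\lesssim\sigma^2\Delta_{\max}$ are handled by your side branches.
\item Bound each log-MGF directly with $-\log(1-x)\le 2x$ for $x\le\tfrac12$. Do not use the paper's linearization $e^{x}\approx 1+x$, which would require $d\tilde{\lambda}/\Delta_{\max}\to 0$.
\item This gives a remainder $\lesssim\cE^{D}$, with no limit needed.
\end{itemize}

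\textbf{Where it fails.} When the first branch of $\Delta_{\max}$ is active,
\[
\alpha^2L^2\widetilde{\Gamma}(T)\Delta_{\max}\asymp(\alpha-1)\epsilon/d .
\]
This need not be small for $\epsilon\in(0,1)$ once $\alpha\gg d$. In that regime the bound on $\cI_5$ is infinite and nothing follows.

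\textbf{What the corollary needs.} By your route or the paper's, the corollary needs one of two fixes:
\begin{itemize}
\item the restriction $(\alpha-1)\epsilon\lesssim d$, which covers small $\epsilon$ and the condition $\epsilon<1/(\alpha-1)$ imposed in Corollary~\ref{corollary:Renyi divergence error for DDPM:L-infinity score estimation}; or
\item the extra branch $\Delta_{\max}\lesssim 1/(\alpha^3L^2\mathcal{L}_\epsilon)$, which adds a term $\alpha^4L^2\mathcal{L}_\epsilon^2$ to $K$.
\end{itemize}
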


\begin{remark}[Comparison with \ac{tv} and \ac{kl} Iteration Complexity]
	\label{remark:iteration complexity of DDPM}
	In the small-$\epsilon$ regime, the first term in~\eqref{eq:iteration complexity of DDPM} dominates, yielding an asymptotic iteration complexity of $K = \tilde{\Theta}\left(\alpha^3 L^2 d / \epsilon\right)$ for large $\alpha$. 
	This recovers the $\tilde{\Theta}\left(L^2 d / \epsilon\right)$ iteration complexity required for $\squaredTotalVariation{\mu}{\widehat{\mu}} \lesssim \epsilon$ in~\cite[Theorem 2]{chen2022sampling} and for $\KLDivergence{\mu}{\widehat{\mu}} \lesssim \epsilon$ in~\cite[Theorem 1]{chen2023improved}.
	Furthermore, provided sufficiently accurate score estimation, achieving the stronger guarantee $\RenyiDivergence{\mu}{\widehat{\mu}} \lesssim \epsilon$ for large $\alpha$ requires at most an $\widetilde{\mathcal{O}}\left(\alpha^3\right)$ factor increase in iterations compared to ensuring either $\squaredTotalVariation{\mu}{\widehat{\mu}} \lesssim \epsilon$ or $\KLDivergence{\mu}{\widehat{\mu}} \lesssim \epsilon$.
\end{remark}

As a direct application of Theorem~\ref{theorem:Renyi divergence error of DDPM} and Corollary~\ref{corollary:iteration complexity of DDPM}, we obtain the following error bound and iteration complexity under a uniform $L^\infty$-score estimation.

\begin{corollary}[\Renyi Divergence Error under $L^\infty$-Score Estimation]
	\label{corollary:Renyi divergence error for DDPM:L-infinity score estimation}
	Under Assumptions~\ref{assumption:data distribution} and \ref{assumption:Lipschitz score}, suppose the score estimator satisfies a uniform $L^\infty$-error bound $\displaystyle \max_{0 \le k \le K-1}\|S_{t_k}(\cdot) - \nabla\log \rho_{t_k}(\cdot)\|_\infty \leq \epsilon_{\infty}$. Then, the \Renyi divergence satisfies
	\begin{equation}
		\RenyiDivergence{\mu}{\widehat{\mu}} \lesssim \cE^{M} + \frac{\alpha^2}{\alpha-1}\widetilde{\Gamma}(T)\epsilon_{\infty}^2 + \cE^{D} + o\left(\cE^{D}\right).
		\label{eq:Renyi divergence error under L-infty error}
	\end{equation}
	Moreover, for any $\epsilon \in \left(0, \dfrac{1}{\alpha-1}\right)$, if the score error tolerance satisfies $ \displaystyle \epsilon_{\infty} \lesssim \sqrt{\frac{(\alpha-1)\epsilon}{\alpha^3 \mathcal{L}_\epsilon}}$,
	where $\mathcal{L}_\epsilon = \log\left(1 + R_0 / \epsilon\right)$ is defined as in Corollary~\ref{corollary:iteration complexity of DDPM}, then the generated distribution satisfies $\RenyiDivergence{\mu}{\widehat{\mu}} \lesssim \epsilon$ with the iteration complexity $K$ given in~\eqref{eq:iteration complexity of DDPM}.
\end{corollary}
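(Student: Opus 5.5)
The plan is to obtain Corollary~\ref{corollary:Renyi divergence error for DDPM:L-infinity score estimation} directly from Theorem~\ref{theorem:Renyi divergence error of DDPM}. The only thing to do is bound the score-error term $\cE^{S}$ in~\eqref{eq:score error} under the uniform assumption. After that, the second claim follows by reusing the parameter choices of Corollary~\ref{corollary:iteration complexity of DDPM}.

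\textbf{Step 1: bound $\cE^{S}$.} By Definition~\ref{definition:score estimation error}, $\scoreError(X)$ is a convex combination of the step-wise squared errors $\|S_{t_k}(X_{t_k})-\nabla\log\rho_{t_k}(X_{t_k})\|^2$. Each of these is at most $\epsilon_\infty^2$ pointwise, so $\scoreError(X)\le\epsilon_\infty^2$ holds deterministically along every path. This bound holds for every realization, so we need no moment or tail information under $\rho$. The exponential inside the expectation in~\eqref{eq:score error} is therefore bounded by a constant, and the logarithm of the expectation evaluates directly:
\[
\cE^{S}\le \frac{8(2\alpha-1)(4\alpha-1)}{\alpha-1}\tilde{\Gamma}(T)\epsilon_\infty^2 .
\]
Since $(2\alpha-1)(4\alpha-1)\le 8\alpha^2$ for $\alpha>1$, this is $\lesssim \frac{\alpha^2}{\alpha-1}\tilde{\Gamma}(T)\epsilon_\infty^2$. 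Substituting into~\eqref{eq:Renyi divergence error of DDPM} gives~\eqref{eq:Renyi divergence error under L-infty error}.

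\textbf{Step 2: the complexity claim.} Take $\tilde{\Gamma}(T)=C\alpha\mathcal{L}_\epsilon$ and $\Delta_{\max}$ exactly as in Corollary~\ref{corollary:iteration complexity of DDPM}. With this choice, the score term becomes
\[
\frac{\alpha^2}{\alpha-1}\,C\alpha\mathcal{L}_\epsilon\,\epsilon_\infty^2 \lesssim \frac{\alpha^3\mathcal{L}_\epsilon}{\alpha-1}\cdot\frac{(\alpha-1)\epsilon}{\alpha^3\mathcal{L}_\epsilon}=\epsilon
\]
under the stated tolerance on $\epsilon_\infty$. So the hypothesis $\cE^{S}\lesssim\epsilon$ of Corollary~\ref{corollary:iteration complexity of DDPM} is met, and that corollary yields both $\RenyiDivergence{\mu}{\widehat{\mu}}\lesssim\epsilon$ and the iteration count~\eqref{eq:iteration complexity of DDPM}.

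\textbf{Main obstacle.} The only delicate point is the restriction $\epsilon<1/(\alpha-1)$. I expect it to ensure that the prescribed tolerance for $\epsilon_\infty$ is small enough for the regimes used in Corollary~\ref{corollary:iteration complexity of DDPM} to remain consistent. In particular, it should keep the $o(\cE^D)$ and $\Delta_{\max}$ constraints compatible and stop the absorbed constants from depending on $\alpha$. I would check this constant bookkeeping explicitly when writing out the proof. Otherwise the argument is a direct substitution.
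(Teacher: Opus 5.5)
Your proposal is correct and follows the paper's own argument. The paper likewise uses the pathwise bound $\scoreError(X)\le\epsilon_\infty^2$ to bound the CGF by $\lambda\epsilon_\infty^2$, reads off \eqref{eq:Renyi divergence error under L-infty error} from Theorem~\ref{theorem:Renyi divergence error of DDPM}, and obtains the tolerance on $\epsilon_\infty$ by imposing $\cE^{S}\lesssim\epsilon$ under the parameter choices of Corollary~\ref{corollary:iteration complexity of DDPM}. The paper also never uses the restriction $\epsilon<1/(\alpha-1)$ explicitly, so your bookkeeping concern is not a gap relative to its proof. A plausible reading is that, with those parameters and $\lambda\asymp\alpha^2$, one gets $\lambda L^2\widetilde{\Gamma}(T)d\Delta_{\max}\lesssim(\alpha-1)\epsilon$, so the restriction keeps the quantity $\cE(\lambda)$ of Proposition~\ref{proposition:upper-bound of Phi(lambda)} bounded, as its higher-order expansions require.
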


\begin{proof}
	Since $\log\expectation{}{\exponential{\lambda\scoreError}} \leq \lambda \epsilon_{\infty}^2$ for any $\lambda > 0$, \eqref{eq:Renyi divergence error under L-infty error} is an immediate consequence of Theorem~\ref{theorem:Renyi divergence error of DDPM}. The condition $ \displaystyle \epsilon_{\infty} \lesssim \sqrt{\frac{(\alpha-1)\epsilon}{\alpha^3 \mathcal{L}_\epsilon}}$ follows from $\cE^S\lesssim \epsilon $ as required in Corollary~\ref{corollary:iteration complexity of DDPM}.
\end{proof}

\begin{remark}[Comparison under $L^\infty$-Score Estimation]
	\label{remark:compare results under L-infinity score estimation error}
	\lwl{A related upper bound on $\RenyiDivergence{\widehat{\mu}}{\mu}$ was established in~\cite[Theorem 3]{yang2022convergence} under an $L^\infty$-bounded score error. Their result was established for inexact \ac{lmc} under the assumption that the target distribution satisfies an \ac{lsi}, a different setting from ours as mentioned in Section~\ref{section:Related Work}. Nonetheless, the resulting iteration complexity shares a similar scaling. Specifically, they showed that if $\epsilon_\infty \lesssim \sqrt{\epsilon/\alpha^2}$, then $\RenyiDivergence{\widehat{\mu}}{\mu} \lesssim \epsilon$ with an iteration complexity of $K \asymp \alpha^3/\epsilon$. Up to poly-logarithmic factors, their result is consistent with Corollary~\ref{corollary:Renyi divergence error for DDPM:L-infinity score estimation} for large $\alpha$.}
\end{remark}

\subsection{Technique Overview}
\label{subsection:technique overview}

This subsection provides an overview of the proof of Theorem~\ref{theorem:Renyi divergence error of DDPM}, with complete technical details deferred to Appendix~\ref{section:Appendix:proof of main results}.

We construct an auxiliary process $X^{\mathrm{Aux}}_t$ initialized at $X^{\mathrm{Aux}}_0 \stackrel{\d}{=} X_0$ that evolves according to the \ac{sde}
\begin{equation}
	\d X^\Aux_t = \bar{\gamma}(t)\left(X^\Aux_t + 2S_{t_k}(X^\Aux_{t_k})\right)\d t + \sqrt{2\bar{\gamma}(t)}\d W_t,
	\label{eq:auxillary process X-hat}
\end{equation}
for $t\in [t_k,t_{k+1})$ and $k\in\{0,1,\dots,K-1\}$. 
This auxiliary construction bridges the initial distribution mismatch between the practical process $\widehat{X}$ and the exact reverse process $X$ (\cf~\eqref{eq:reversed OU process:simplified notation} and \eqref{eq:DDPM:SDE form}), enabling the application of Theorem~\ref{theorem:Renyi divergence bound}.
By construction, $X^\Aux_t$ and $\widehat{X}_t$ evolve along the same \ac{sde}, differing solely in their initial distributions ($X^\Aux_0 \sim \rho_0$ versus $\widehat{X}_0 \sim \xi^d$).
Denoting by $\rho_t$ and $\widehat{\rho}_t$ respectively the marginal distribution of $X$ and $\widehat{X}$ at time $t$, we have $\mu = \rho_T$ and $\widehat{\mu} = \widehat{\rho}_T$. Invoking the weak triangle inequality for \Renyi divergence (\cf~Lemma~\ref{lemma:weak triangle inequality}) yields the decomposition
\begin{equation}
	\RenyiDivergence{\mu}{\widehat{\mu}} 
	= \RenyiDivergence{\rho_T}{\widehat{\rho}_T}
	\leq \frac{2\alpha-1}{2(\alpha-1)}\RenyiDivergence[2\alpha]{\rho_T}{\rho^\Aux_T} + \RenyiDivergence[2\alpha-1]{\rho^\Aux_T}{\widehat{\rho}_T},
	\label{eq:decompose the Renyi divergence into two parts}
\end{equation}
where $\rho^\Aux_t$ denotes the law of $X^\Aux_t$. Since $\rho^\Aux_T$ and $\widehat{\rho}_T$ are the outputs of the same Markov channel characterized by \ac{sde}~\eqref{eq:auxillary process X-hat} with inputs $\rho_0$ and $\xi^d$, respectively, the data processing inequality (\cf~Lemma~\ref{lemma:data processing inequality of Renyi divergence}) yields
\begin{equation}
	\RenyiDivergence[2\alpha-1]{\rho^\Aux_T}{\widehat{\rho}_T}
	\leq \RenyiDivergence[2\alpha-1]{\rho_0}{\xi^d} \\
	\leq \exponential{-\frac{2\widetilde{\Gamma}(T)}{2\alpha-1}}\RenyiDivergence[2\alpha-1]{\mu}{\xi^d},
	\label{eq:proof of main result:intermediate result 1}
\end{equation}
where the second inequality follows from $X^\OU_T \sim \rho_0$ together with the exponential decay of \Renyi divergence along the forward \ac{ou} process (\cf Lemma~\ref{lemma:convergence of Renyi divergence along LD under LSI}). 

Next, we apply Theorem~\ref{theorem:Renyi divergence bound} to bound the remaining term $\frac{2\alpha-1}{2(\alpha-1)}\RenyiDivergence[2\alpha]{\rho_T}{\rho^\Aux_T}$ on the right-hand side of \eqref{eq:decompose the Renyi divergence into two parts}, which yields
\begin{equation}
	\frac{2\alpha-1}{2(\alpha-1)}\RenyiDivergence[2\alpha]{\rho_T}{\rho^\Aux_T} \leq \frac{1}{2(\alpha-1)\beta^*}\Phi\left((2\alpha-1)\beta^*\left((2\alpha-1)\beta + 1 \right)\right),
	\label{eq:express the upper-bound in terms of Phi lambda}
\end{equation}
where we define the function $\Phi$ by
\begin{equation}
	\Phi(\lambda)\coloneqq
	\log\expectation{X\sim\rho}{\exponential{\lambda \sum_{k=0}^{K-1} \int_{t_k}^{t_{k+1}}\bar{\gamma}(t)\left\|S_{t_k}\left(X_{t_k}\right) - \nabla\log \rho_{t}(X_t)\right\|^2 \d t}}.
	\label{eq:an expression that we want to upper-bound}
\end{equation}
Thus, it remains to establish an upper bound on $\Phi(\lambda)$, which is provided by the following proposition.
\begin{proposition}
	\label{proposition:upper-bound of Phi(lambda)}
	Let $\Phi(\lambda)$ be defined as in~\eqref{eq:an expression that we want to upper-bound}. We have
	\begin{equation}
		\Phi(\lambda)\lesssim \log\expectation{\rho}{\exponential{4\lambda\tilde{\Gamma}(T)\scoreError(X)}} +  (1+o(1))\underbrace{\lambda L^2\widetilde{\Gamma}(T)
		\left( d \Delta_{\max}  + \left(m^2+\sigma^2d\right)\Delta_{\max} ^2\right)}_{\textrm{denoted by }\cE(\lambda)},
		\label{eq:upper-bound the target in terms of lambda:proposition}
	\end{equation}
	where $o(1)$ denotes a term that tends to 0 as $\cE(\lambda) \to 0$.
\end{proposition}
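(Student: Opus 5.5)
The plan is to split the integrand of $\Phi(\lambda)$ pointwise, separate the pieces inside the exponential with Hölder's inequality, and bound each resulting \ac{cgf} on its own. For $t\in[t_k,t_{k+1})$ I would write
\begin{equation*}
S_{t_k}(X_{t_k})-\nabla\log\rho_t(X_t)=\underbrace{S_{t_k}(X_{t_k})-\nabla\log\rho_{t_k}(X_{t_k})}_{a_k}+\underbrace{\nabla\log\rho_{t_k}(X_{t_k})-\nabla\log\rho_{t}(X_{t})}_{b_t}.
\end{equation*}
Using $\|a+b\|^2\le 2\|a\|^2+2\|b\|^2$ and then Cauchy--Schwarz in the form $\log\bbE[e^{A+B}]\le\frac12\log\bbE[e^{2A}]+\frac12\log\bbE[e^{2B}]$, I get
\begin{equation*}
\Phi(\lambda)\le\tfrac12\log\bbE_\rho\bigl[e^{4\lambda\sum_k\Delta\tilde\Gamma_k\|a_k\|^2}\bigr]+\tfrac12\log\bbE_\rho\bigl[e^{4\lambda\sum_k\int_{t_k}^{t_{k+1}}\bar\gamma(t)\|b_t\|^2\d t}\bigr].
\end{equation*}
The first exponent equals $4\lambda\tilde\Gamma(T)\scoreError(X)$ by Definition~\ref{definition:score estimation error}, which gives the first term of the proposition. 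Every remaining constant is absorbed into $\lesssim$.

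For the discretization part, the sum over $k$ involves correlated steps. I would handle it by convexity of the exponential. With weights $w_k=\Delta\tilde\Gamma_k/\tilde\Gamma(T)$,
\begin{equation*}
\bbE\Bigl[\exp\Bigl(\sum_k w_kY_k\Bigr)\Bigr]\le\sum_k w_k\,\bbE[e^{Y_k}],
\end{equation*}
and applying Jensen once more in time inside each step reduces everything to bounding, uniformly in $k$ and $t\in[t_k,t_{k+1})$,
\begin{equation*}
\log\bbE_\rho\exp\bigl(c\lambda\tilde\Gamma(T)\|b_t\|^2\bigr).
\end{equation*}
I would then split $b_t$ into a spatial increment $\nabla\log\rho_t(X_{t_k})-\nabla\log\rho_t(X_t)$ and a temporal increment $\nabla\log\rho_{t_k}(X_{t_k})-\nabla\log\rho_t(X_{t_k})$.
\begin{itemize}
\item The spatial increment is at most $L\|X_t-X_{t_k}\|$ by Assumption~\ref{assumption:Lipschitz score}.
\item For the temporal increment I would use the explicit Gaussian-convolution structure of the \ac{ou} marginals ($\rho_t$ is a rescaled $\rho_{t_k}$ convolved with a Gaussian of variance $\cO(\Delta_{\max})$). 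Together with Lipschitzness, this should give a bound of order $L\Delta_{\max}\|X_{t_k}\|$ plus $L\sqrt{\Delta_{\max}}$ times a Gaussian. That is the same order as the spatial term.
\end{itemize}

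Next I would expand, from~\eqref{eq:reversed OU process:simplified notation},
\begin{equation*}
X_t-X_{t_k}=\int_{t_k}^t\bar\gamma(s)\bigl(X_s+2\nabla\log\rho_s(X_s)\bigr)\d s+\int_{t_k}^t\sqrt{2\bar\gamma(s)}\,\d W_s .
\end{equation*}
\begin{itemize}
\item The martingale part is Gaussian with variance at most $2\Delta_{\max}$ per coordinate. The chi-square \ac{mgf} gives $\log\bbE e^{c\|\cdot\|^2}=-\frac d2\log(1-4c\Delta_{\max})=2cd\Delta_{\max}(1+o(1))$. With $c\asymp\lambda\tilde\Gamma(T)L^2$ this is the $\lambda L^2\tilde\Gamma(T)d\Delta_{\max}$ term, with the $(1+o(1))$ arising from linearizing the logarithm when $\cE(\lambda)\to0$.
\item The drift part is bounded by $\Delta_{\max}\sup_s\|X_s+2\nabla\log\rho_s(X_s)\|$. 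Writing $\nabla\log\rho_s(x)=\nabla\log\rho_s(\bar m_s)+\cO(L\|x-\bar m_s\|)$ around the marginal mean $\bar m_s$ reduces this to sub-Gaussian quantities.
\end{itemize}
Each marginal $\rho_s$ is the law of $e^{-\Gamma}Z+\sqrt{1-e^{-2\Gamma}}\,\Xi^d$ with $Z\sim\mu$, so it is sub-Gaussian with parameter at most $\sigma\vee1$ and mean of norm at most $m$. Hence $\log\bbE e^{c\Delta_{\max}^2\|X_s\|^2}\lesssim c\Delta_{\max}^2(m^2+\sigma^2 d)$ in the small-$c\Delta_{\max}^2$ regime, which is the second term of $\cE(\lambda)$. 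A final Cauchy--Schwarz separates the drift and noise contributions.

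I expect the main obstacle to be the drift contribution. The supremum over $s\in[t_k,t]$ of $\|X_s\|$ and of the score along the path must be exponentially integrable with the right constants. Controlling $\nabla\log\rho_s(\bar m_s)$ by $L$ and the moments, rather than by something uncontrolled, also needs care. My first attempt would be Jensen in time, $\|\int f\|^2\le\Delta\int\|f\|^2$, followed by convexity of $\exp$ again. This reduces the supremum to marginal exponential moments at fixed times, where the sub-Gaussian bound on $\rho_s$ applies directly. The price is only the higher-order terms collected into $o(\cE(\lambda))$.
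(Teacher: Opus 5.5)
Your skeleton matches the paper's proof. Peeling off the score term with $\|a+b\|^2\le2\|a\|^2+2\|b\|^2$ and Cauchy--Schwarz gives exactly $\log\bbE_\rho[\exp(4\lambda\tilde{\Gamma}(T)\scoreError(X))]$; the paper uses a three-way split and a generalized H\"older inequality and reaches the same factor $4$. Jensen in time, the $\chi^2$ MGF and Lemma~\ref{lemma:moments of quadratic sub-Gaussian} are also how the paper closes. The gap is in how you control the \emph{magnitude of the true score}. It enters your plan twice, and neither of your treatments yields $\cE(\lambda)$.

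\textbf{Temporal increment.} The Gaussian-convolution structure does not give ``$L\Delta_{\max}\|X_{t_k}\|$ plus $L\sqrt{\Delta_{\max}}$ times a Gaussian''. The correct bound (Lemma~\ref{lemma:score perturbation}) is $\lesssim L\sqrt{\Delta_{\max}d}+L\Delta_{\max}\|x\|+(1+L)\Delta_{\max}\|\nabla\log\rho_t(x)\|$, and the last term cannot be dropped. Take $\mu=\cN(0,L^{-1}I_d)$ with $L>1$, so Assumption~\ref{assumption:Lipschitz score} holds with constant $L$, and take $t$ in the last step. Then $\nabla\log\rho_{t_k}(x)-\nabla\log\rho_t(x)\approx 2L(L-1)\,\delta\,x$ with $\delta=\tilde{\Gamma}(t)-\tilde{\Gamma}(t_k)$, a factor $L$ above your claim. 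To control $\log\bbE\exp(c\lambda\tilde{\Gamma}(T)L^2\Delta_{\max}^2\|\nabla\log\rho_t\|^2)$ at the right scale you need the ingredient your plan lacks. Under $\rho_t$ the score has mean zero and is sub-Gaussian with variance proxy $L$ (Lemma~\ref{lemma:score concentration}). This makes the contribution $\approx\lambda L^3\tilde{\Gamma}(T)d\Delta_{\max}^2$. That term is absorbed into $\lambda L^2\tilde{\Gamma}(T)d\Delta_{\max}$ only because $L\Delta_{\max}\lesssim1$, a condition the perturbation lemma also requires and which you never impose.

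\textbf{Drift part.} Your substitute for this ingredient is to linearize $\nabla\log\rho_s$ around $\bar{m}_s$. This bounds $\|\nabla\log\rho_s(X_s)\|^2$ by $L^2\|X_s-\bar{m}_s\|^2$ plus $\|\nabla\log\rho_s(\bar{m}_s)\|^2\le L^2(\bbE\|X_s-\bar{m}_s\|)^2$. The resulting exponential moment is of order $L^2(\sigma\vee1)^2d$ rather than the Fisher-information scale $Ld$.
\begin{itemize}
\item The extra term $\lambda L^4\tilde{\Gamma}(T)(\sigma\vee1)^2d\Delta_{\max}^2$ equals $L^2(\sigma\vee1)^2\Delta_{\max}$ times $\lambda L^2\tilde{\Gamma}(T)d\Delta_{\max}$.
\item That factor is not bounded by a universal constant under $L\Delta_{\max}\lesssim1$; for example, take $\mu$ an anisotropic Gaussian with one tiny variance, where $L\sigma^2\gg1$.
\item The term is also first order in $\lambda$, so its ratio to $\cE(\lambda)$ does not vanish as $\cE(\lambda)\to0$ through $\lambda\to0$. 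It therefore fits neither in $\lesssim$ nor in the $(1+o(1))$ factor.
\end{itemize}
Using Lemma~\ref{lemma:score concentration} here repairs your route.

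The paper, however, avoids the path drift entirely. Since $X$ is the time reversal of the forward OU process, $(X_{t_k},X_t)$ has the joint law of $(X^\OU_{T-t_k},X^\OU_{T-t})$. Hence $X_t-X_{t_k}\stackrel{\d}{=}(e^{-\bar{\Gamma}(t)}-e^{-\bar{\Gamma}(t_k)})X^\OU_0+\bar{\sigma}_{t_k,t}\Xi^d$, with $\Xi^d$ independent of $X^\OU_0\sim\mu$ and $\bar{\sigma}_{t_k,t}^2\le2(\tilde{\Gamma}(t)-\tilde{\Gamma}(t_k))$. The two MGF lemmas then give the $d\Delta_{\max}$ and $(m^2+\sigma^2d)\Delta_{\max}^2$ terms directly, with no supremum over the path and no score evaluated along it. The score magnitude then enters only through the perturbation term above.
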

The proof of Proposition~\ref{proposition:upper-bound of Phi(lambda)} relies on a score perturbation inequality (\cf~Lemma~\ref{lemma:score perturbation}), a score concentration lemma (\cf~Lemma~\ref{lemma:score concentration}), and an upper bound on the moment generating function of quadratic sub-Gaussian random vectors (\cf~Lemma~\ref{lemma:moments of quadratic sub-Gaussian}). The complete proof is deferred to Appendix~\ref{section:Appendix:proof of main results}.

With Proposition~\ref{proposition:upper-bound of Phi(lambda)} established, we set $\beta = \beta^* = 2$ and substitute $\lambda = 2(2\alpha-1)(4\alpha-1) \lesssim \alpha^2$ into \eqref{eq:upper-bound the target in terms of lambda:proposition}. Combining \eqref{eq:decompose the Renyi divergence into two parts}, \eqref{eq:proof of main result:intermediate result 1}, and \eqref{eq:express the upper-bound in terms of Phi lambda} completes the proof of Theorem~\ref{theorem:Renyi divergence error of DDPM}.

\subsection{Extension to Critically Damped \acp{ddpm}}
\label{subsection:Renyi divergence error for DDPM:CD}
In this subsection, we extend our \Renyi divergence convergence analysis of \acp{ddpm} to their kinetic variants, which are governed by critically damped Langevin dynamics. Compared to standard \acp{ddpm} that rely on a standard \ac{ou} process, critically damped \acp{ddpm}~\cite{dockhorn2021score} augment the state space with an auxiliary velocity (or momentum) variable to form a kinetic \ac{ou} process, governed by the \acp{sde}
\begin{equation}
	\left\{
	\begin{aligned}
		\d X^\CD_t &= \gamma(t)V^\CD_t \d t, & X^\CD_0 &\sim \mu, \\
		\d V^\CD_t &= -\gamma(t)\left(X^\CD_t + 2 V^\CD_t\right)\d t + 2\sqrt{\gamma(t)} \d W_t, & V^\CD_0 &\sim \xi^d.
	\end{aligned}
	\right.
	\label{eq:forward_CD_OU}
\end{equation}
Let $[X_t, V_t]^\top$ denote the time-reversal of $[X^\CD_t, V^\CD_t]^\top$ over the time interval $[0, T]$. Analogous to~\eqref{eq:reversed OU process:simplified notation}, $[X_t, V_t]^\top$ satisfies the reverse \acp{sde}
\begin{equation}
	\left\{
	\begin{aligned}
		\d X_t &= -\bar{\gamma}(t) V_t\d t, & X_0 &\stackrel{\d}{=} X^\CD_T, \\
		\d V_t &= \bar{\gamma}(t) \left(X_t + 2V_t + 4\nabla_{V_t}\log \rho_t \left(X_t, V_t\right)\right)\d t + 2\sqrt{\bar{\gamma}(t)}\d W_t, & V_0 &\stackrel{\d}{=} V^\CD_T.
	\end{aligned}
	\right.
	\label{eq:exact_reverse_CD}
\end{equation}
Here, with a slight abuse of notation, we use $\rho_t$ to denote the marginal distribution of $[X_t, V_t]^\top$.

Analogous to the standard setting in~\eqref{eq:DDPM:SDE form}, the practical algorithmic implementation of critically damped \ac{ddpm} replaces the true score $\nabla_{V_t}\log \rho_t$ with a learned score estimator $S^\CD: \mathbb{R}_+ \times \mathbb{R}^d \times \mathbb{R}^d \to \mathbb{R}^d$ evaluated at discrete time steps $0 = t_0 < t_1 < \dots < t_K = T$. Over each interval $t \in [t_k, t_{k+1})$, the practical reverse process $[\widehat{X}_t, \widehat{V}_t]^\top$ evolves according to \acp{sde}
\begin{equation}
	\left\{
	\begin{aligned}
		\d \widehat{X}_t &= -\bar{\gamma}(t)\widehat{V}_t \d t, & \widehat{X}_0 &\sim  \xi^{d}, \\
		\d \widehat{V}_t &= \bar{\gamma}(t) \left(\widehat{X}_t + 2\widehat{V}_t + 4S^\CD_{t_k} \big(\widehat{X}_{t_k}, \widehat{V}_{t_k}\big)\right)\d t + 2\sqrt{\bar{\gamma}(t)}\d W_t, &  \widehat{V}_0  &\sim \xi^{d}.
	\end{aligned}
	\right.
	\label{eq:practical_reverse_CD}
\end{equation}

To analyze the \Renyi divergence error for critically damped \acp{ddpm}, we adapt Definition~\ref{definition:score estimation error} and Assumption~\ref{assumption:Lipschitz score} as follows.

\begin{definition}[Score Error for Critically Damped \ac{ddpm}]
	\label{definition:score estimation error:CD}
	The pathwise score estimation error of critically damped \ac{ddpm} is defined as
	\begin{equation}
		\scoreErrorCD(X,V) \coloneqq \sum_{k=0}^{K-1}\frac{\Delta\tilde{\Gamma}_{k}}{\tilde{\Gamma}(T)} \left\|S^\CD_{t_k}(X_{t_{k}},V_{t_{k}}) - \nabla_{V_{t_k}}\log \rho_{t_k}(X_{t_k},V_{t_k})\right\|^2,
	\end{equation}
	where the weights $\displaystyle  \frac{\Delta\tilde{\Gamma}_{k}}{\tilde{\Gamma}(T)}$  are defined as in Definition~\ref{definition:score estimation error}.
\end{definition}

\begin{assumption}[Lipschitz Score for Critically Damped \ac{ddpm}]
	\label{assumption:Lipschitz score:CD}
	The score function $\nabla_V\log \rho_t(\cdot,\cdot)$ is $L$-Lipschitz continuous on $\mathbb{R}^{2d}$ uniformly across all $t \in [0,T]$.
\end{assumption}

Equipped with these adaptations, we establish the \Renyi divergence error bound for critically damped \acp{ddpm} in the following theorem. Recall that $\displaystyle \Delta_{\max} = \max_{0 \le k \le K-1}\int_{t_k}^{t_{k+1}}\bar{\gamma}(\tau)\d \tau$ denotes the maximum effective discretization step.

\begin{theorem}[\Renyi Divergence Error of Critically Damped \acp{ddpm}]
	\label{theorem:Renyi divergence error of Critically Damped DDPM}
	Suppose Assumptions~\ref{assumption:data distribution} and \ref{assumption:Lipschitz score:CD} hold. Let $\widehat{\mu}$ denote the $X$-marginal distribution generated by the critically damped \ac{ddpm} in~\eqref{eq:practical_reverse_CD}. Let $\rho$ denote the path measure of the exact reverse process $[X,V]^\top$ in~\eqref{eq:exact_reverse_CD}. For any order $\alpha > 1$, the \Renyi divergence of the target distribution $\mu$ relative to the generated distribution $\widehat{\mu}$ satisfies
	\begin{equation}
		\RenyiDivergence{\mu}{\widehat{\mu}} \lesssim \cE^{\CD,M} + \cE^{\CD,S} + \cE^{\CD,D} + o\left(\cE^{\CD,D}\right),
		\label{eq:Renyi divergence error of DDPM:CD}
	\end{equation}
	where $o\left(\cE^{\CD,D}\right)$ denotes a higher-order term satisfying $o\left(\cE^{\CD,D}\right)/\cE^{\CD,D}\to 0$ as $\Delta_{\max}\to 0$.
	The mixing error $\cE^{\CD,M}$, score estimation error $\cE^{\CD,S}$, and discretization error $\cE^{\CD,D}$ are given by
	\begin{align}
		\cE^{\CD,M} &\coloneqq \exponential{-\frac{2\left(\widetilde{\Gamma}(T) - \log(1+\widetilde{\Gamma}(T))\right)}{2\alpha-1}}\RenyiDivergence[2\alpha-1]{\mu}{\xi^d}, \label{eq:mixing error:CD}\\
		\cE^{\CD,S} &\coloneqq \frac{1}{\alpha-1}\log\expectation{\rho}{\exponential{16(2\alpha-1)(4\alpha-1)\widetilde{\Gamma}(T)\scoreErrorCD(X,V)}}, \label{eq:score error:CD}\\ 
		\cE^{\CD,D} &\coloneqq \frac{\alpha^2}{\alpha-1}L^2\widetilde{\Gamma}(T) \left(d \Delta_{\max} + \left(m^2+\sigma^2d\right)\Delta_{\max}^2\right). \label{eq:discretization error:CD}
	\end{align}
\end{theorem}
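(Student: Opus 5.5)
The plan is to mirror the proof of Theorem~\ref{theorem:Renyi divergence error of DDPM}, working on the joint state $[X,V]^\top\in\bbR^{2d}$ and projecting to the $X$-marginal only at the very end. Since $\mu$ and $\widehat{\mu}$ are the $X$-marginals of $\rho_T$ and $\widehat{\rho}_T$, the data processing inequality (Lemma~\ref{lemma:data processing inequality of Renyi divergence}) applied to the projection gives $\RenyiDivergence{\mu}{\widehat{\mu}}\le \RenyiDivergence{\rho_T}{\widehat{\rho}_T}$, so it suffices to bound the joint divergence. I would introduce an auxiliary process $[X^\Aux_t,V^\Aux_t]^\top$ that follows the practical SDE~\eqref{eq:practical_reverse_CD} but starts from the exact law $\rho_0$. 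The weak triangle inequality (Lemma~\ref{lemma:weak triangle inequality}) then splits $\RenyiDivergence{\rho_T}{\widehat{\rho}_T}$ into $\frac{2\alpha-1}{2(\alpha-1)}\RenyiDivergence[2\alpha]{\rho_T}{\rho^\Aux_T}+\RenyiDivergence[2\alpha-1]{\rho^\Aux_T}{\widehat{\rho}_T}$, exactly as in~\eqref{eq:decompose the Renyi divergence into two parts}.

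\textbf{Mixing term.} Data processing through the common Markov channel gives $\RenyiDivergence[2\alpha-1]{\rho^\Aux_T}{\widehat{\rho}_T}\le \RenyiDivergence[2\alpha-1]{\rho_0}{\xi^{2d}}$, where $\rho_0$ is the law of $[X^\CD_T,V^\CD_T]$ started from $\mu\otimes\xi^d$. This step is where the argument departs from the standard case, and I expect it to be the main obstacle. The kinetic OU generator is hypocoercive rather than reversible, so the LSI-based decay of Lemma~\ref{lemma:convergence of Renyi divergence along LD under LSI} does not apply directly.

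I would instead use the explicit linear Gaussian structure. The solution is $[X^\CD_T,V^\CD_T]^\top = e^{-\Gamma(T)A}[X_0,V_0]^\top + \text{Gaussian noise}$, where $A=\begin{pmatrix}0&-1\\1&2\end{pmatrix}$ has a double eigenvalue at $1$, so that $e^{-sA}=e^{-s}(I-s(A-I))$ with $\|e^{-sA}\|\lesssim (1+s)e^{-s}$. Conditioning on $X_0$ and using convexity of $e^{(\alpha-1)\mathcal{R}_\alpha}$ (or a Rényi version of the coupling/shift argument for Gaussian channels), the divergence contracts by the factor $(1+\widetilde{\Gamma}(T))^2e^{-2\widetilde{\Gamma}(T)}$ in place of $e^{-2\widetilde{\Gamma}(T)}$.

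The simplest concrete route is to establish a Rényi analogue of the OU contraction for the kinetic channel by interpolating the time-$T$ law between $\xi^{2d}$ and its pushforward. With the $1/(2\alpha-1)$ rate, this yields the exponent $-2(\widetilde{\Gamma}(T)-\log(1+\widetilde{\Gamma}(T)))/(2\alpha-1)$. The $V_0\sim\xi^d$ component contributes nothing, because it is already stationary relative to the product, so the reference quantity is $\RenyiDivergence[2\alpha-1]{\mu}{\xi^d}$.

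\textbf{Girsanov term.} Apply Theorem~\ref{theorem:Renyi divergence bound}(2) on $\bbR^{2d}$. The diffusion matrix is degenerate, since only the $V$ component carries noise, but the two drifts differ only in that component, with difference $4\bar{\gamma}(t)(S^\CD_{t_k}(X_{t_k},V_{t_k})-\nabla_V\log\rho_t(X_t,V_t))$ and $\sigma_t^2=4\bar{\gamma}(t)$. So Girsanov applies to the $V$-driving Brownian motion alone. The integrand is $4\bar{\gamma}(t)\|\cdot\|^2$, twice the standard $2\bar{\gamma}\|\cdot\|^2$, which accounts for the factor $16$ in place of $8$ in $\cE^{\CD,S}$.

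It remains to prove an analogue of Proposition~\ref{proposition:upper-bound of Phi(lambda)}. I would split $S^\CD_{t_k}(Z_{t_k})-\nabla_V\log\rho_t(Z_t)$ into a score-error term at $t_k$, a time-perturbation term, and a space-Lipschitz term via Assumption~\ref{assumption:Lipschitz score:CD}. The increment $Z_t-Z_{t_k}$ over an effective step $\Delta\tilde{\Gamma}_k$ consists of a drift part of order $\Delta(\|Z\|+\|\text{score}\|)$ and a Gaussian part of order $\sqrt{\Delta}$ in $d$ coordinates. Its exponential moments are then controlled by Lemma~\ref{lemma:moments of quadratic sub-Gaussian}, using sub-Gaussianity of $Z_t$: this holds because $\mu$ is sub-Gaussian, the map is linear, and $\|e^{-sA}\|$ is bounded, so the proxy is $\sigma^2+O(1)$ with mean magnitude at most $m$. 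Together with the score concentration of Lemma~\ref{lemma:score concentration} and the score perturbation of Lemma~\ref{lemma:score perturbation}, this produces $\cE^{\CD,D}$ of the same form, up to constants.

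\textbf{Assembly.} Take $\beta=\beta^*=2$ and $\lambda\asymp(2\alpha-1)(4\alpha-1)$, then combine the mixing and Girsanov bounds as in the standard proof.
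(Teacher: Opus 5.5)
Your architecture is the paper's: data processing to pass from the joint laws to the $X$-marginals, the auxiliary process started from $\rho_0$, and the weak triangle inequality. The Girsanov step is applied to the Brownian motion driving $V$ alone; the integrand $4\bar{\gamma}(t)\|\cdot\|^2$ is exactly why $16$ replaces $8$. You then use a critically damped analogue of Proposition~\ref{proposition:upper-bound of Phi(lambda)} built from the explicit Gaussian representations of the kinetic OU process, and set $\beta=\beta^*=2$. At the paper's level of detail those parts are fine. The genuine gap is the mixing step, which as written does not establish $\cE^{\CD,M}$.

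\textbf{Why the proposed mixing routes fail.} The norm bound $\|e^{-sA}\|\lesssim(1+s)e^{-s}$ on the mean map does not by itself give a contraction of the form $c(T)^{1/(2\alpha-1)}\RenyiDivergence[2\alpha-1]{\mu}{\xi^d}$, and the tools you name do not produce one.
\begin{itemize}
\item Conditioning on $X_0$ and using convexity of $e^{(q-1)\mathcal{R}_q}$ yields $\frac{1}{q-1}\log\mathbb{E}_{x\sim\mu}[\exp((q-1)\RenyiDivergence[q]{\cN(A(T)x,\,I_{2d}-A(T)A(T)^\top)}{\xi^{2d}})]$. This is strictly positive even for $\mu=\xi^d$, where the true divergence is $0$, so it cannot be bounded by a multiple of $\RenyiDivergence[q]{\mu}{\xi^d}$.
\item Shift or coupling arguments give Wasserstein-type quantities instead.
\item The ``$1/(2\alpha-1)$ rate'' you invoke is exactly what Lemma~\ref{lemma:convergence of Renyi divergence along LD under LSI} delivers for an OU flow, and you had set that lemma aside.
\end{itemize}

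\textbf{The missing idea.} You need an exact reduction of the kinetic channel to an ordinary OU channel, after which that lemma applies verbatim. The paper does this through Proposition~\ref{proposition:properties of CD}\ref{enumerate:properties of CDL:3}: $a^X(T)=e^{-\Gamma(T)}(1+\Gamma(T))=e^{-t_{\mathrm{eff}}(T)}$. Hence $X^\CD_T$ is an OU process at effective time $t_{\mathrm{eff}}(T)=\Gamma(T)-\log(1+\Gamma(T))$ started from $\mu$.

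\textbf{Completing your joint approach.} The data-processing step actually compares the joint law $\rho_0$ with $\xi^{2d}$, so your choice to work jointly is the more careful one, and it closes as follows.
\begin{itemize}
\item By Proposition~\ref{proposition:properties of CD}\ref{enumerate:properties of CDL:5}, $[X^\CD_T,V^\CD_T]^\top\stackrel{\d}{=}(a\otimes I_d)X^\CD_0+\bigl(I_{2d}-(aa^\top)\otimes I_d\bigr)^{1/2}\Xi^{2d}$, where $a\coloneqq(a^X(T),-a^V(T))^\top$.
\item Rotate by $U\otimes I_d$, where $U$ is orthogonal with first column $a/\|a\|$. This turns the law into $\|a\|X^\CD_0+\sqrt{1-\|a\|^2}\,\Xi^d$ in one block and an independent $\Xi^d$ in the other.
\item Therefore $\RenyiDivergence[2\alpha-1]{\rho_0}{\xi^{2d}}$ equals the divergence of an OU process at time $-\log\|a\|$. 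The lemma bounds this by $\|a\|^{2/(2\alpha-1)}\RenyiDivergence[2\alpha-1]{\mu}{\xi^d}$.
\item Finally, $\|a\|^2=e^{-2\Gamma(T)}(1+2\Gamma(T)+2\Gamma(T)^2)\le 2(1+\Gamma(T))^2e^{-2\Gamma(T)}$, which recovers $\cE^{\CD,M}$ up to the factor $2^{1/(2\alpha-1)}$.
\end{itemize}
The same works with the full $e^{-sA}$, and this is the rigorous form of your operator-norm heuristic. After orthogonal changes of coordinates, the channel is a product of one-dimensional OU channels whose factors are the singular values of $e^{-sA}$. Each of these factors through the OU channel with factor $\|e^{-sA}\|$, giving the contraction $\|e^{-sA}\|^{2/(2\alpha-1)}$.
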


\begin{remark}[Constant Factor Differences in Score and Discretization Error]
	\label{remark:constant_factor_differences}
	Compared with Theorem~\ref{theorem:Renyi divergence error of DDPM}, both $\cE^{\CD,S}$ and $\cE^{\CD,D}$ differ from their standard counterparts $\cE^{S}$ and $\cE^{D}$ only by constant factors. First, the coefficient in the exponent of $\cE^{\CD,S}$ is twice that of $\cE^{S}$ (16 versus 8). Second, since any constant factor associated with $\cE^{\CD,D}$ is absorbed into the $\lesssim$ notation, $\cE^{\CD,D}$ retains an identical expression to $\cE^{D}$.
\end{remark}
\begin{remark}[Logarithmic Difference in Effective Running Time]
	Critically damped \acp{ddpm} introduce a logarithmic term $ \log(1+\widetilde{\Gamma}(T))$ in the exponent of the mixing error $\cE^{\CD,M}$, which is absent for standard \acp{ddpm}. Notably, this logarithmic penalty can be absorbed into a constant factor $C$ in the exponent, yielding
	\begin{equation}
		\cE^{\CD,M} \le \exponential{-\frac{2C\widetilde{\Gamma}(T)}{2\alpha-1}}\RenyiDivergence[2\alpha-1]{\mu}{\xi^d},
	\end{equation}
	provided $\widetilde{\Gamma}(T) > \Gamma_0$ for some fixed threshold $\Gamma_0 > 0$. This follows because $x - \log(1+x) > Cx$ holds for all $x > \Gamma_0$ whenever $C < 1 - \frac{\log(1+\Gamma_0)}{\Gamma_0}$. 
	In comparison, this exponential decay in mixing error aligns with established results for critically damped \acp{ddpm} in \ac{tv} distance~\cite{chen2022sampling}, \ac{kl} divergence~\cite{conforti2024klconvergenceguaranteesscore}, and Wasserstein distance~\cite{strasman2026wasserstein}.
\end{remark}

In the remainder of this section, we present a proof sketch for Theorem~\ref{theorem:Renyi divergence error of Critically Damped DDPM}, with full technical details deferred to Appendix~\ref{section:appendix:proof of theorem:CD}.

\begin{proof}[Proof Sketch]
	Recall that the standard \ac{ou} process satisfies $X^\OU_t \stackrel{\d}{=} e^{-\Gamma(t)} X^\OU_0 + \sqrt{1 - e^{-2\Gamma(t)}} \xi^d$ (\cf Proposition~\ref{proposition:properties of OU}\ref{enumerate:properties of OU process:2}), whereas the critically damped \ac{ou} process satisfies $X^\CD_t \stackrel{\d}{=} a^X(t) X^\CD_0 + \sqrt{1 - a^X(t)^2} \xi^d$ with $a^X(t) \coloneqq e^{-\Gamma(t)}(1 + \Gamma(t))$ (\cf Proposition~\ref{proposition:properties of CD}\ref{enumerate:properties of CDL:3}). 
	Consequently, the marginal process $X^\CD_t$ can be identified as a standard \ac{ou} process $\d X^{\OU,\text{std}}_t = - X^{\OU,\text{std}}_t \d t + \sqrt{2}\d W_t$ under the effective time $t_{\text{eff}}(t) \coloneqq \Gamma(t) - \log(1 + \Gamma(t))$.
	
	Therefore, the mixing error $\cE^{\CD,M}$ is obtained directly by replacing $\widetilde{\Gamma}(T)$ in $\cE^{M}$ with $t_{\text{eff}}(T)$. The bounds for the score estimation error $\cE^{\CD,S}$ and discretization error $\cE^{\CD,D}$ then follow using arguments analogous to the proof of Theorem~\ref{theorem:Renyi divergence error of DDPM}.
\end{proof}

\section{Discussion}
\label{section:Discussion}

While \Renyi divergence provides a significantly stronger convergence guarantee than \ac{tv} or \ac{kl} divergence, deriving such bounds naturally requires stronger theoretical assumptions. In this section, we compare our assumptions and conclusions with state-of-the-art results in the literature, and highlight promising directions for future research.

\paragraph{Target Distribution}
We assume that the target distribution is sub-Gaussian, a property leveraged to upper-bound the \ac{cgf} of the quadratic form of random vectors. This condition represents a natural middle ground among common distributional assumptions. It is stronger than merely requiring a finite second moment~\cite{chen2022sampling}, yet weaker than assuming bounded support (as bounded random variables are sub-Gaussian) or assuming an \ac{lsi} (since distributions satisfying an \ac{lsi} exhibit sub-Gaussian tails). 
\lwl{Notably, the sub-Gaussian condition offers a distinct advantage over isoperimetric conditions like the Poincaré inequality or \ac{lsi} by focusing exclusively on tail behavior rather than global landscape geometry. For multi-modal distributions, the Poincaré or \ac{lsi} constant $C$ grows exponentially with the separation distance between modes, rendering the nominally exponential mixing decay $e^{-t/C}$ virtually linear in the running time $t$. In contrast, the sub-Gaussian parameter scales only linearly with the separation distance.} Whether \Renyi divergence convergence guarantees can be extended beyond sub-Gaussian target distributions remains an intriguing open question.

\paragraph{Smoothness of the Score Function}
We assume that the score function $\nabla\log\rho_t$ is spatially Lipschitz continuous for all $t$, matching the uniform smoothness assumption employed in \cite{chen2022sampling}. However, this is stronger than requiring Lipschitz continuity only at $t=0$, as adopted in \cite{chen2023improved}. Furthermore, when score smoothness is omitted entirely, sharp convergence guarantees can still be established for \ac{kl} divergence with~\cite{chen2023improved} or without~\cite{conforti2024klconvergenceguaranteesscore}  an early stopping setting. A key open direction is whether \Renyi divergence convergence can similarly be proved under relaxed smoothness conditions.

\paragraph{Score Estimation Error}
Our main result characterizes the score estimation error via the \ac{cgf}, which requires an $L^p$-error control for every $1 \le p < \infty$. It remains an open question whether this \ac{cgf} characterization can be further refined. \lwl{Notably, the work of \cite{yang2022convergence-v2} provided a negative result in the related setting of inexact \ac{lmc}, demonstrating that a purely $L^p$-score error assumption for any fixed $1 \le p < \infty$ is insufficient to guarantee convergence even in \ac{tv} distance.}

Recently, powerful methodologies have been developed to analyze $L^2$-score error, including differential inequality approaches \cite{chen2023improved} and stochastic localization techniques \cite{benton2024nearlydlinearconvergencebounds}, both achieving improved convergence rates. However, because the \ac{cgf} cannot be upper-bounded by moments without additional constraints (\eg, Bernstein-type conditions), these $L^2$-focused advances do not directly translate to the \Renyi divergence framework. Adapting such advanced techniques to sharpen \Renyi divergence error bounds remains an exciting avenue for future work.

\section{Conclusion}
\label{section:Conclusions}

In this paper, we analyzed the convergence of \ac{ddpm}-generated distributions under the \Renyi divergence. After highlighting the benefits of adopting \Renyi divergence over other standard discrepancy measures, we established an upper bound on the \Renyi divergence between general diffusion processes with mismatched drift coefficients via a Girsanov-type change-of-measure argument.
Applying this framework to \acp{ddpm}, we derived explicit error bounds under a sub-Gaussian target distribution and a Lipschitz continuous score function.
\lwl{The error bound decomposes into an exponentially decaying mixing error, a score estimation error characterized via the \ac{cgf}, and a discretization error scaling asymptotically linearly with the effective step size. This establishes the first \Renyi characterization of \acp{ddpm}, under assumptions that avoid the isoperimetric conditions and $L^\infty$-bounded score error required by inexact \ac{lmc}.}
Our analysis also yields an asymptotic iteration complexity of $\widetilde{\bigTheta}\left(\alpha^3 L^2 d / \epsilon\right)$ to achieve an $\epsilon$-accurate distribution in $\alpha$-\Renyi divergence, demonstrating that controlling the stronger \Renyi divergence incurs at most an additional $\alpha^3$ factor.
Furthermore, we extended our results to critically damped \acp{ddpm}, demonstrating they inherit the same upper bounds up to a mild logarithmic factor.
Finally, our findings open several promising avenues for future research, including extending \Renyi divergence guarantees beyond sub-Gaussian distributions and uniform Lipschitz score functions, as well as exploring alternative proof techniques for \Renyi divergence error analysis.

%% file: appendix.tex
\section{Proof of Proposition~\ref{proposition:KL-close distributions can diverge under Renyi divergence}}
\label{section:Appendix:Why Renyi}
	\begin{proof}
	Let $(\Omega,d)$ be a metric space and $P$, $Q$ be probabilities supported on $\Omega$. Denote by $D$ the diameter of $\Omega$ and assume $D<\infty$.
	Let $A\subseteq \Omega$ be a Borel subset and define $P$ and $Q$ as follows.
	First, let $P$ be any Borel probability measure \st $P(A) = \delta$ and $P(\Omega-A) = 1-\delta$.
	Next, define $Q$ via its Radon-Nikodym derivative relative to $P$ as
	\begin{equation}
		\RadonNikodym{Q}{P}(\omega) = 
		\begin{cases}
			e^{-\kappa} & \text{if } \omega \in A, \\
			\frac{1-\delta e^{-\kappa}}{1-\delta} & \text{otherwise.}
		\end{cases}
	\end{equation}
	Then it follows that $Q(A) = \delta e^{-\kappa}$ and $Q(\Omega-A) = 1 - \delta e^{-\kappa}$. In what follows, we prove the proposition by choosing proper $\delta\in(0,1)$ and $\kappa>0$.
	Since the density $\RadonNikodym{Q}{P}$ is less than 1 on $A$ and greater than 1 on $\Omega - A$, we have 
	\begin{equation}
		\totalVariation{P}{Q} = \vert{}P(A) - Q(A)\vert{} = \delta - \delta e^{-\kappa} < \delta.
	\end{equation}
	By the coupling lemma of \ac{tv}, there exists a coupling $\pi$ of $P$ and $Q$ (\ie, a joint distribution $\pi$ on $\Omega \times \Omega$ with marginals $P$ and $Q$) such that $\totalVariation{P}{Q} = \pi(X\neq Y)$. Therefore,
	\begin{equation}
		\WassersteinDistance[p]{P}{Q} \leq \left(\int_{x \neq y} d(x,y)^p \d\pi(x,y)\right)^{1/p} \leq D \left(\int_{x \neq y} \d \pi(x,y)\right)^{1/p} 
		= D \cdot \totalVariation{P}{Q}^{1/p}
		< D\delta^{1/p}.
	\end{equation}
	The KL-divergence and \Renyi divergence follow by direct calculation. 
	\begin{equation}
		\KLDivergence{P}{Q} = \delta \log \frac{\delta}{\delta e^{-\kappa}}
		+ \underbrace{(1-\delta) \log \frac{1-\delta}{1-\delta e^{-\kappa}}}_{<0} < \delta\kappa, 
	\end{equation}
	\begin{equation}
		\RenyiDivergence{P}{Q} = \frac{1}{\alpha-1}\log\Big(\delta^\alpha\left(\delta e^{-\kappa}\right)^{1-\alpha} + \underbrace{(1-\delta)^\alpha\left(1-\delta e^{-\kappa}\right)^{1-\alpha}}_{>0}\Big) >   \frac{1}{\alpha-1}\log\left(\delta e^{(\alpha-1)\kappa} \right) = \kappa + \frac{\log \delta}{\alpha-1}.
		\label{eq:appendix:Renyi divergence between P and Q}
	\end{equation}
	Setting $\kappa =\dfrac{\eta}{2\delta}$ yields $\KLDivergence{P}{Q}<\eta$. 
	Then, it remains to find a solution $\delta$ such that $\delta < \eta$, $\displaystyle D\delta^{1/p} < \eta$, and $\dfrac{\eta}{2\delta} + \dfrac{\log \delta}{\alpha-1} > M$. To tackle the last constraint, we use the inequality $\log x \leq 2(\sqrt{x} - 1) < 2\sqrt{x}$, and it suffices to find $\delta$ to
	satisfy a stricter condition $\dfrac{\eta}{2\delta} - \dfrac{2}{(\alpha-1)\sqrt{\delta}} \ge M$.
	Solving this inequality yields
	\begin{equation}
		\delta<\delta^* \coloneqq \frac{\eta^2}{\left( \frac{2}{\alpha-1} + \sqrt{\frac{4}{(\alpha-1)^2} + 2\eta M} \right)^2}.
	\end{equation}
	Combining with the first two constraints, it suffices to choose $\delta = \dfrac{1}{2} \min\left\{ \eta, \left(\dfrac{\eta}{D}\right)^p, \delta^* \right\}$.
\end{proof}

\section{Proof of Theorem~\ref{theorem:Renyi divergence bound}}
\label{section:Appendix:proof of theorem:Renyi divergence bound}

The proof relies on a change-of-measure argument via Girsanov's theorem. For completeness,  we state the classic result below.

\begin{theorem}[Girsanov Theorem, \cite{le2016brownian}]
	\label{theorem:Girsanov}
	Let $(W_t)_{t\in[0,T]}$ be a $d$-dimensional Brownian motion under a path measure $Q$, and let $(b_s)_{s\in[0,T]}$ be an $\mathbb{R}^d$-valued, $W$-adapted stochastic process satisfying $\expectation{Q}{\int_0^T \|b_s\|^2 \d s} < \infty$.
	Let $\cE(M)_t$ denote the stochastic exponential of the martingale $M_t \coloneqq \int_0^t \langle b_s, \d W_s \rangle$, defined as
	\begin{equation}
		\cE(M)_t \coloneqq \exp\left( \int_0^t \langle b_s, \d W_s \rangle - \frac{1}{2}\int_0^t \|b_s\|^2 \d s \right).
	\end{equation}
	If $\expectation{Q}{\cE(M)_T} = 1$, then under the path measure $P$ defined via the Radon--Nikodym derivative $\RadonNikodym{P}{Q} = \cE(M)_T$, the process
	\begin{equation}
		\widetilde{W}_t \coloneqq W_t - \int_0^t b_s \d s, \quad t \in [0, T],
	\end{equation}
	is a standard $d$-dimensional Brownian motion under $P$.
\end{theorem}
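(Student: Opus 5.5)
The plan is to prove the statement through Lévy's characterization of Brownian motion. It suffices to show that, under $P$, each coordinate of $\widetilde{W}$ is a continuous local martingale and that the quadratic covariations are $\langle \widetilde{W}^i,\widetilde{W}^j\rangle_t=\delta_{ij}t$. The covariation part is immediate. $\widetilde{W}$ differs from $W$ by the finite-variation process $\int_0^t b_s\,\d s$, so its covariations coincide with those of $W$ computed pathwise. Moreover, $P\ll Q$, so pathwise limits of quadratic-variation sums that hold $Q$-a.s.\ also hold $P$-a.s. Continuity of paths transfers in the same way. The real content is therefore the local martingale property under $P$.

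\textbf{Step 1: $\cE(M)$ is a true martingale.} Write $Z_t\coloneqq\cE(M)_t$. The assumption $\expectation{Q}{\int_0^T\|b_s\|^2\d s}<\infty$ makes $M$ a well-defined continuous (square-integrable) martingale. By Itô's formula, $\d Z_t = Z_t\,\d M_t = Z_t\langle b_t,\d W_t\rangle$, so $Z$ is a nonnegative continuous local martingale. By Fatou's lemma it is therefore a supermartingale with $\expectation{Q}{Z_t}\le 1$. Together with $\expectation{Q}{Z_T}=1$, the function $t\mapsto\expectation{Q}{Z_t}$ is nonincreasing, starts at $1$ and ends at $1$, so it is constant. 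A supermartingale with constant expectation is a martingale. Consequently $P$ is a probability measure, and $\RadonNikodym{P}{Q}\big|_{\cF_t}=Z_t$ for every $t\le T$.

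\textbf{Step 2: transfer of the martingale property.} I would prove the standard lemma that an adapted continuous process $Y$ is a $P$-local martingale if $YZ$ is a $Q$-local martingale. Localize with stopping times $\tau_n$ that make $YZ$ a bounded $Q$-martingale. Then use the martingale property of $Z$ from Step 1 and Bayes' formula,
\[
\expectation{P}{Y_{t\wedge\tau_n}\mid\cF_s}=\frac{\expectation{Q}{Y_{t\wedge\tau_n}Z_t\mid\cF_s}}{Z_s},
\]
reducing the $P$-martingale identity to the $Q$-martingale identity. One needs optional stopping for $Z$ to replace $Z_t$ by $Z_{t\wedge\tau_n}$. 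Note also that $\tau_n\to\infty$ $Q$-a.s.\ implies the same $P$-a.s., since $P\ll Q$.

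\textbf{Step 3: Itô product rule.} Apply the product rule to $Y=\widetilde{W}^i$:
\[
\d(\widetilde{W}^i_tZ_t)=\widetilde{W}^i_t\,\d Z_t+Z_t\,\d W^i_t - Z_tb^i_t\,\d t+\d\langle W^i,Z\rangle_t .
\]
Since $\d\langle W^i,Z\rangle_t=Z_tb^i_t\,\d t$, the drift terms cancel exactly. What remains is a sum of stochastic integrals against $Q$-Brownian motion, hence a $Q$-local martingale. By Step 2, $\widetilde{W}^i$ is a $P$-local martingale. Combined with the covariation computation above, Lévy's characterization concludes that $\widetilde{W}$ is a standard $d$-dimensional Brownian motion under $P$.

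The main obstacle I expect is Step 1 together with the localization in Step 2. The Itô cancellation is mechanical. The delicate point is upgrading $Z$ from a local martingale to a genuine martingale so that $P$ is consistent across times $t\le T$, and then handling the stopping times carefully when passing between $Q$ and $P$. The hypothesis $\expectation{Q}{\cE(M)_T}=1$ is exactly what resolves the first issue, via the supermartingale argument.
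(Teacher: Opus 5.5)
The paper gives no proof of its own here; it quotes the result from Le Gall. Your argument is essentially the standard proof found there, and it is correct: you upgrade $\cE(M)$ to a true martingale by the supermartingale argument, use It\^o's product rule to show $\widetilde{W}^i\cE(M)$ is a $Q$-local martingale, transfer to $P$ by Bayes' formula, and conclude with L\'evy's characterization.

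One detail in Step~2 needs fixing. Inside $\bbE_Q[\,\cdot\mid\cF_s]$ you cannot replace $Z_t$ by $Z_{t\wedge\tau_n}$ on $\{\tau_n<s\}$, because there the conditional expectation produces $Z_s$, not $Z_{\tau_n}$. Either split into $\{\tau_n\le s\}$ and $\{\tau_n>s\}$, applying optional stopping at $(\tau_n\vee s)\wedge t$ on the latter. Or localize $\widetilde{W}^i$ itself, with $\tau_n$ its first hitting time of level $n$: then $\widetilde{W}^i_{\cdot\wedge\tau_n}Z$ is a $Q$-local martingale dominated by $nZ$, hence a true martingale, and Bayes' formula gives the $P$-martingale property directly.
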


Before completing the main proof of Theorem~\ref{theorem:Renyi divergence bound}, we establish the following auxiliary lemma.

\begin{lemma}[Moments of Stochastic Exponential]
	\label{lemma:moments of exponential martingale}
	Let $M_t$ be a local martingale such that $M_0=0$ almost surely. Then, for any $\alpha,\beta>1$, 
	\begin{align}
		\expectation{}{\cE(M)_t^\alpha}
		&\leq \expectation{}{\exp\left(\frac{\alpha\beta^*(\alpha\beta-1)}{2}[M]_t\right)}^{1 / \beta^*}, 
		\label{eq:moments of exponential martingale:quadratic variation} \\
		\expectation{}{\cE(M)_t^{-\alpha}}
		&\leq \expectation{}{\exp\left(\frac{\alpha\beta^*(\alpha\beta+1)}{2}[M]_t\right)}^{1 / \beta^*}.
		\label{eq:moments of exponential martingale:negative moment}
	\end{align}
\end{lemma}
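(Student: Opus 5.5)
We prove both bounds by factoring the $\alpha$-th (respectively $-\alpha$-th) power of the stochastic exponential into a new stochastic exponential times a compensating factor, and then splitting the product with Hölder's inequality. Write $\cE(M)_t = \exp(M_t - \tfrac12[M]_t)$. For any exponent $c \in \mathbb{R}$ and any $\beta>1$ we have the identity
\begin{equation*}
	\cE(M)_t^{c} = \exp\left(cM_t - \tfrac{c}{2}[M]_t\right) = \cE(c\beta M)_t^{1/\beta}\,\exp\left(\tfrac{c(c\beta-1)}{2}[M]_t\right).
\end{equation*}
To verify it, note that $\cE(c\beta M)_t^{1/\beta} = \exp(cM_t - \tfrac{c^2\beta}{2}[M]_t)$. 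The remaining exponent is then $-\tfrac{c}{2}[M]_t + \tfrac{c^2\beta}{2}[M]_t = \tfrac{c(c\beta-1)}{2}[M]_t$, which matches the second factor.

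Next we take expectations and apply Hölder's inequality with exponents $\beta$ and $\beta^*$. This gives
\begin{equation*}
	\expectation{}{\cE(M)_t^{c}} \le \expectation{}{\cE(c\beta M)_t}^{1/\beta}\,\expectation{}{\exp\left(\tfrac{c\beta^*(c\beta-1)}{2}[M]_t\right)}^{1/\beta^*}.
\end{equation*}
We must show the first factor is at most $1$. Since $c\beta M$ is a continuous local martingale started at $0$, its stochastic exponential $\cE(c\beta M)$ is a nonnegative local martingale with initial value $1$. By Fatou's lemma it is a supermartingale, so $\expectation{}{\cE(c\beta M)_t}\le 1$. No Novikov-type condition is needed here, since the supermartingale property holds unconditionally; this is why no such hypothesis appears in the statement.

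With the factor bounded by $1$, choosing $c=\alpha$ yields \eqref{eq:moments of exponential martingale:quadratic variation} directly. Choosing $c=-\alpha$, the compensating coefficient becomes $\tfrac{-\alpha\beta^*(-\alpha\beta-1)}{2} = \tfrac{\alpha\beta^*(\alpha\beta+1)}{2}$, and $\cE(-\alpha\beta M)$ is again a nonnegative local martingale, so \eqref{eq:moments of exponential martingale:negative moment} follows the same way.

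The only delicate point is the supermartingale bound. It requires $M$ to be continuous, or more generally to have a nonnegative stochastic exponential of the above form, which holds in the Brownian-integral setting where the lemma is applied. Beyond this, the argument is an algebraic identity plus Hölder's inequality, so we do not expect any further obstacle.
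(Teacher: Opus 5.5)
Your proposal is correct and follows the paper's proof: the same factorization $\cE(M)_t^{c}=\cE(c\beta M)_t^{1/\beta}\exp\bigl(\tfrac{c(c\beta-1)}{2}[M]_t\bigr)$, Hölder's inequality with exponents $\beta$ and $\beta^*$, and the supermartingale bound $\mathbb{E}[\cE(c\beta M)_t]\le 1$, applied with $c=\pm\alpha$. You also state explicitly the continuity of $M$ that the paper uses implicitly when it writes $\cE(M)_t=\exp\bigl(M_t-\tfrac12[M]_t\bigr)$.
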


\begin{proof}[Proof of Lemma~\ref{lemma:moments of exponential martingale}]
	\begin{align}
		\expectation{}{\cE(M)_t^\alpha}
		&= \expectation{}{\exp\left(\alpha M_t - \frac{\alpha}{2}[M]_t\right)} \\
		&= \expectation{}{\exp\left(\alpha M_t - \frac{\alpha^2\beta}{2}[M]_t\right) \exp\left(\frac{\alpha(\alpha\beta-1)}{2} [M]_t\right)} \\
		&\leq \expectation{}{\cE\left(\alpha\beta M\right)_t}^{1/\beta} \expectation{}{\exp\left(\frac{\alpha\beta^*(\alpha\beta-1)}{2} [M]_t\right)}^{1/\beta^*} \\
		&\leq \expectation{}{\exp\left(\frac{\alpha\beta^*(\alpha\beta-1)}{2} [M]_t\right)}^{1/\beta^*}. \nonumber
	\end{align}
	The first inequality follows from Hölder's inequality, and the second inequality follows because $\cE\left(\alpha\beta M\right)_t$ is a supermartingale, and hence $\expectation{}{\cE\left(\alpha\beta M\right)_t}\leq \expectation{}{\cE\left(\alpha\beta M\right)_0} = 1$. This proves~\eqref{eq:moments of exponential martingale:quadratic variation}. Inequality~\eqref{eq:moments of exponential martingale:negative moment} follows by an analogous argument.
	\begin{align}
		\expectation{}{\cE(M)_t^{-\alpha}}
		&= \expectation{}{\exp\left(-\alpha M_t + \frac{\alpha}{2}[M]_t\right)} \\
		&= \expectation{}{\exp\left(-\alpha M_t - \frac{\alpha^2\beta}{2}[M]_t\right) \exp\left(\frac{\alpha(\alpha\beta+1)}{2} [M]_t\right)} \\
		&\leq \expectation{}{\cE\left(-\alpha\beta M\right)_t}^{1/\beta} \expectation{}{\exp\left(\frac{\alpha\beta^*(\alpha\beta+1)}{2} [M]_t\right)}^{1/\beta^*} \\
		&\leq \expectation{}{\exp\left(\frac{\alpha\beta^*(\alpha\beta+1)}{2} [M]_t\right)}^{1/\beta^*}. \nonumber
	\end{align}
\end{proof}

\begin{lemma}[Data Processing Inequality for \Renyi Divergence \cite{van2014renyi}]
	\label{lemma:data processing inequality of Renyi divergence}
	Let $P$ and $Q$ be probability distributions defined on a measurable space $(\Omega, \mathcal{F})$. For any $\alpha\in[0,\infty]$ and any sub-$\sigma$-field $\mathcal{G}\subseteq\mathcal{F}$,
	\begin{equation}
		\label{eq:data processing inequality of Renyi divergence}
		\RenyiDivergence{P|_{\mathcal{G}}}{Q|_{\mathcal{G}}}\leq \RenyiDivergence{P}{Q},
	\end{equation}
	where $P|_{\mathcal{G}}$ denotes the restriction of $P$ to $\mathcal{G}$.
\end{lemma}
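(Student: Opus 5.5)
The plan is to reduce everything to a conditional Jensen inequality applied to a perspective-type function, which handles all orders $\alpha$ at once and avoids case analysis on absolute continuity. Let $\nu \coloneqq (P+Q)/2$, which dominates both $P$ and $Q$, and write $p \coloneqq \RadonNikodym{P}{\nu}$, $q \coloneqq \RadonNikodym{Q}{\nu}$. The first step is to verify that $\nu|_{\mathcal{G}}$ dominates $P|_{\mathcal{G}}$ and $Q|_{\mathcal{G}}$, with densities $p_{\mathcal{G}} = \expectation{\nu}{p \mid \mathcal{G}}$ and $q_{\mathcal{G}} = \expectation{\nu}{q \mid \mathcal{G}}$. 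This is immediate from the defining property of conditional expectation: for $G \in \mathcal{G}$, $P(G) = \int_G p\,\d\nu = \int_G \expectation{\nu}{p\mid\mathcal{G}}\,\d\nu$. For $\alpha \in (0,1)\cup(1,\infty)$ I will use the representation $\RenyiDivergence{P}{Q} = \frac{1}{\alpha-1}\log \int p^\alpha q^{1-\alpha}\,\d\nu$, with the usual conventions at $q=0$. This coincides with Definition~\ref{definition:Renyi divergence} when $P \ll Q$, and for $\alpha>1$ it equals $+\infty$ otherwise.

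The key step is the following observation about $f_\alpha(x,y) \coloneqq x^\alpha y^{1-\alpha}$ on $[0,\infty)^2$. It is positively homogeneous of degree one, since it is the perspective of $x\mapsto x^\alpha$. Hence it is jointly convex for $\alpha>1$ (perspective of a convex function, extended by lower semicontinuity, taking $+\infty$ at $y=0<x$) and jointly concave for $\alpha\in(0,1)$. Applying the conditional Jensen inequality for jointly convex or concave functions of the random vector $(p,q)$ gives
\begin{equation}
	f_\alpha\left(p_{\mathcal{G}},q_{\mathcal{G}}\right) \le \expectation{\nu}{f_\alpha(p,q)\mid\mathcal{G}} \quad (\alpha>1),
\end{equation}
and the reverse inequality for $\alpha\in(0,1)$. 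Integrating against $\nu$ yields $\int f_\alpha(p_{\mathcal{G}},q_{\mathcal{G}})\,\d\nu \le \int f_\alpha(p,q)\,\d\nu$ for $\alpha>1$, and $\ge$ for $\alpha<1$. Composing with the increasing map $\log$ and multiplying by $\frac{1}{\alpha-1}$ gives the claim in both ranges: for $\alpha<1$ the negative factor $\frac{1}{\alpha-1}$ flips the inequality back.

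The remaining orders are handled separately. For $\alpha=1$, the same argument applies with the jointly convex perspective $f_1(x,y)=x\log(x/y)$. Alternatively, one can take the limit $\alpha\downarrow 1$, using monotonicity of $\alpha\mapsto\RenyiDivergence{P}{Q}$. For $\alpha=\infty$, it suffices to note that if $p \le e^{c} q$ $\nu$-a.e., then taking conditional expectations gives $p_{\mathcal{G}} \le e^{c} q_{\mathcal{G}}$ $\nu$-a.e. Hence the essential supremum of $\log(p_{\mathcal{G}}/q_{\mathcal{G}})$ is at most $c$. For $\alpha=0$, where $\RenyiDivergence[0]{P}{Q} = -\log Q(p>0)$, I would argue directly. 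The set $\{p_{\mathcal{G}}>0\}$ is $\mathcal{G}$-measurable and contains $\{p>0\}$ up to a $P$-null set. Conversely, its $Q$-measure is at least as large, because on $\{p_{\mathcal{G}}=0\}$ one has $p=0$ $\nu$-a.e. Thus $Q|_{\mathcal{G}}(p_{\mathcal{G}}>0)\ge Q(p>0)$.

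I expect the main obstacle to be the measure-theoretic bookkeeping rather than any inequality. This includes justifying conditional Jensen for a jointly convex function that may take the value $+\infty$ on the boundary $\{y=0\}$ when $\alpha>1$, and the behaviour at $q=0$ more generally. I would handle this by first proving the inequality for the truncated functions $f_\alpha(x,y+\varepsilon)$, which are finite and jointly convex or concave. I would then pass to $\varepsilon\downarrow 0$ by monotone convergence, since $f_\alpha(x,y+\varepsilon)$ is monotone in $\varepsilon$ for fixed $x$ and $\alpha\ne1$. The conditional-expectation identity is unaffected, because $q+\varepsilon$ has conditional expectation $q_{\mathcal{G}}+\varepsilon$.
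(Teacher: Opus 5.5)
The paper gives no proof of this lemma. It states it with a citation to van Erven and Harremo\"es \cite{van2014renyi} and uses it as a black box, so your proposal is a self-contained substitute for the citation. It is correct.

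Working with the common dominating measure $\nu=(P+Q)/2$ lets you treat every simple order at once. The key fact is the joint convexity ($\alpha>1$) or joint concavity ($0<\alpha<1$) of the perspective $x^\alpha y^{1-\alpha}$, and no case split on whether $P\ll Q$ is needed. The bound $p,q\le 2$ makes conditional Jensen, the $\varepsilon$-truncation, and the monotone-convergence passage unproblematic. The extended orders are handled cleanly: monotonicity of conditional expectation gives $\alpha=\infty$, and the observation that $p=0$ $\nu$-a.e.\ on $\{p_{\mathcal{G}}=0\}$ gives $\alpha=0$. You also supply the convention $\RenyiDivergence[0]{P}{Q}=-\log Q(p>0)$, which Definition~\ref{definition:Renyi divergence} omits even though the lemma is stated for $\alpha\in[0,\infty]$. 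The citation buys brevity. Your route buys transparency: the inequality is exposed as Jensen's inequality for a positively homogeneous convex (or concave) function of the density pair.

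One step should be corrected. Your fallback for $\alpha=1$, letting $\alpha\downarrow 1$, does not give the claim. Since $\alpha\mapsto\RenyiDivergence{P}{Q}$ is nondecreasing, it only yields $\KLDivergence{P|_{\mathcal{G}}}{Q|_{\mathcal{G}}}\le\inf_{\alpha>1}\RenyiDivergence{P}{Q}$. This infimum can be $+\infty$ while $\KLDivergence{P}{Q}<\infty$, e.g.\ for $Q$ uniform on $(0,e^{-1})$ and $\RadonNikodym{P}{Q}(x)\propto x^{-1}\log^{-3}(1/x)$. The limit has to be taken from below, $\alpha\uparrow 1$, where $\lim_{\alpha\uparrow 1}\RenyiDivergence{P}{Q}=\KLDivergence{P}{Q}$ holds in general. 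Alternatively, keep your primary argument with $f_1(x,y)=x\log(x/y)$. It goes through once you note that $f_1(p,q+\varepsilon)\ge p-q-\varepsilon\ge -2-\varepsilon$, so the conditional expectations are well defined and monotone convergence applies as $\varepsilon\downarrow 0$.
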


\begin{proof}[Proof of Theorem~\ref{theorem:Renyi divergence bound}]
	Define the stochastic process $X^Z_t\coloneqq \frac{1}{\sigma_t}\left(\widehat{b}_t(Z_t) - b_t(Z_t)\right)$,
	where the superscript on $X$ indicates its dependence on $Z$. 
	First, assume that Novikov's condition $\expectation{\rho}{\exponential{\frac{1}{2}\int_0^t \left\|Z^X_s\right\|^2 \d s}}<\infty$ holds. This implies that $\cE\left(\int_0^t \left\langle X^Z_s, \d W_s\right\rangle \right)$ is a uniformly integrable martingale (\cf~\cite[Theorem 5.23]{le2016brownian}). 
	Thus, if $(\Omega, \cF, \{\cF_t\}_{t\geq 0}, P)$ is the filtered probability space on which the process $Z_t$ is defined, we can define another probability measure $Q$ on $(\Omega,\cF)$ via its density relative to $P$, satisfying
	\begin{equation}
		\left.\RadonNikodym{Q}{P}\right|_{\cF_t} = \cE\left(\int_0^t \left\langle X^Z_s,\d W_s\right\rangle \right).
	\end{equation}
	By the Girsanov theorem (\cf Theorem~\ref{theorem:Girsanov}), there exists a $Q$-Brownian motion $\widetilde{W}_t$ whose relation with the $P$-Brownian motion $W_t$  is characterized by the \ac{sde} $\displaystyle \d W_t = X^Z_t \d t + \d \widetilde{W}_t$. Inserting this result into~\eqref{eq:target diffusion} yields
	\begin{equation}
		\d Z_t = \widehat{b}_t(Z_t) \d t + \sigma_t \d \widetilde{W}_t.
		\label{eq:appendix:target diffusion under a change of measure}
	\end{equation}
	At time $0$, $\left.\RadonNikodym{Q}{P}\right|_{\cF_0} = 1$ almost surely under $P$, which implies that changing the measure from $P$ to $Q$ leaves the initial distribution $\rho_0$ of $Z_0$ and $\widehat{Z}_0$ unchanged. 
	Therefore, the \acp{sde} in~\eqref{eq:appendix:target diffusion under a change of measure} and~\eqref{eq:learned diffusion} induce the same diffusion process in $\bbR^d$, and we conclude that $(Z_t)_{\#}Q = (\widehat{Z}_t)_{\#}P$ holds for every $t$.\footnote{For a function $f$ mapping $\mathcal{X}$ to $\mathcal{Y}$ and a probability distribution $P$ on $\mathcal{X}$, $f_{\#}P$ denotes the pushforward measure of $P$, defined by $f_{\#}P(A)\coloneqq P(f^{-1}(A))$ for measurable subsets $A\subseteq \mathcal{Y}$.}
	Denoting by $\nu$ the path measure of $Z$ under $Q$ (\ie, $\nu = Z_{\#}Q$) and recalling that $\rho$ is the path measure of $Z$ under $P$, we have
	\begin{align}
		\RenyiDivergence{\widehat{\rho}_t}{\rho_t}
		&= \RenyiDivergence{\nu_t}{\rho_t}  \leq \RenyiDivergence{\nu}{\rho} 
		\label{eq:proof:Renyi divergence bound:1}
		= \frac{1}{\alpha - 1}\log \expectation{\rho}{\left(\RadonNikodym{\nu}{\rho}\right)^\alpha} \\
		&= \frac{1}{\alpha - 1}\log \expectation{\rho}{\cE\left(\int_0^t \left\langle X^Z_s, \d W_s\right\rangle \right)^\alpha} \\
		&\leq
		\frac{1}{(\alpha - 1)\beta^*}\log
		\expectation{\rho}{\exponential{\frac{\alpha\beta^*(\alpha\beta-1)}{2}\int_0^t\left\|X^Z_s\right\|^2 \d s}} .
		\label{eq:proof:Renyi divergence bound:2}
	\end{align}
	In the above, the inequality in~\eqref{eq:proof:Renyi divergence bound:1} follows from the data processing inequality of \Renyi divergence (\cf~Lemma~\ref{lemma:data processing inequality of Renyi divergence}), and~\eqref{eq:proof:Renyi divergence bound:2} follows from~\eqref{eq:moments of exponential martingale:quadratic variation} in Lemma~\ref{lemma:moments of exponential martingale}. This proves~\eqref{eq:Renyi divergence bound:quadratic variation}. The upper bound~\eqref{eq:Renyi divergence bound between Q and Q-hat} follows from~\eqref{eq:moments of exponential martingale:negative moment} in Lemma~\ref{lemma:moments of exponential martingale} via an analogous argument.
	\begin{align}
		\RenyiDivergence{\rho_t}{\widehat{\rho}_t}
		&\leq \RenyiDivergence{\rho}{\nu} = \frac{1}{\alpha - 1}\log \expectation{\rho}{\left(\RadonNikodym{\rho}{\nu}\right)^{\alpha-1}} \\
		&= \frac{1}{\alpha - 1}\log \expectation{\rho}{\cE\left(\int_0^t \left\langle X^Z_s, \d W_s\right\rangle \right)^{-(\alpha-1)}} \\
		&\leq \frac{1}{(\alpha-1)\beta^*}\log
		\expectation{\rho}{\exponential{\frac{(\alpha-1)\beta^*((\alpha-1)\beta+1)}{2}\int_0^t\left\|X^Z_s\right\|^2 \d s}}.
	\end{align}
	Finally, we show that Novikov's condition can be removed for~\eqref{eq:Renyi divergence bound:quadratic variation}. This is because the coefficient $\dfrac{\alpha\beta^*(\alpha\beta-1)}{2}$, as a function of $\beta$ for $\beta>1$, achieves its minimum at $\beta=1+\sqrt{(\alpha-1)/\alpha}$, and the minimum is $\dfrac{\alpha}{2}\left(\sqrt{\alpha} + \sqrt{\alpha-1}\right)^2$. For $\alpha>1$, this minimum is strictly larger than $1/2$. Therefore, Novikov's condition is satisfied whenever the right-hand side of~\eqref{eq:Renyi divergence bound:quadratic variation} is finite. Otherwise, we obtain the trivial bound $\RenyiDivergence{\widehat{\rho}_t}{\rho_t}\leq \infty$.
\end{proof}

\section{Supplementary Materials for the Proof of Theorem~\ref{theorem:Renyi divergence error of DDPM}}
\label{section:Appendix:proof of main results}

In this section, we first provide several technical lemmas in Section~\ref{subsection:appendix:technical lemmas} that are required to prove Proposition~\ref{proposition:upper-bound of Phi(lambda)}. Subsequently, Section~\ref{subsection:appendix:proof of the proposition on Phi-lambda} presents the complete proof of Proposition~\ref{proposition:upper-bound of Phi(lambda)}.

\subsection{Technical Lemmas}
\label{subsection:appendix:technical lemmas}
\begin{lemma}[Weak Triangle Inequality {\cite[Lemma 2.7]{altschuler2024faster}}]
	\label{lemma:weak triangle inequality}
	Let $\alpha>1$ and $\eta\in(0,1)$. For any probability measures $P$, $Q$, and $R$,
	\begin{equation}
		\RenyiDivergence{P}{Q}\leq \frac{\alpha-\eta}{\alpha-1}\RenyiDivergence[\alpha / \eta]{P}{R}
		+ \RenyiDivergence[(\alpha-\eta) / (1-\eta)]{R}{Q}.
	\end{equation}
	In particular, for $\eta = 1/2$,
	\begin{equation}
		\RenyiDivergence{P}{Q}\leq \frac{2\alpha-1}{2(\alpha-1)}\RenyiDivergence[2\alpha]{P}{R}
		+ \RenyiDivergence[2\alpha - 1]{R}{Q}.
		\label{eq:weak triangle inequality}
	\end{equation}
\end{lemma}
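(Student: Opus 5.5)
The plan is a single application of H\"older's inequality after inserting the intermediate density $r$. First I reduce to densities. Fix a $\sigma$-finite measure $\lambda$ dominating $P$, $Q$, and $R$ (e.g.\ $\lambda = P+Q+R$), and write $p,q,r$ for the corresponding densities. I may assume both divergences on the right-hand side are finite, since otherwise there is nothing to prove. Finiteness of $\RenyiDivergence[\alpha/\eta]{P}{R}$ with $\alpha/\eta>1$ forces $P\ll R$. Finiteness of $\RenyiDivergence[(\alpha-\eta)/(1-\eta)]{R}{Q}$ forces $R\ll Q$, because $(\alpha-\eta)/(1-\eta)>1$ for $\alpha>1$. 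Hence $p=0$ a.e.\ on $\{r=0\}$, and all ratios below are well defined $R$-a.e. I will use the identity $\expectation{Q}{(\d P/\d Q)^\alpha}=\int p^\alpha q^{1-\alpha}\,\d\lambda$.

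The key step is to rewrite this integral against $R$ as
\begin{equation*}
\int p^\alpha q^{1-\alpha}\,\d\lambda=\expectation{R}{\Big(\frac{p}{r}\Big)^{\alpha}\Big(\frac{r}{q}\Big)^{\alpha-1}}.
\end{equation*}
Then I apply H\"older's inequality with conjugate exponents $1/\eta$ and $1/(1-\eta)$, which gives
\begin{equation*}
\expectation{R}{\Big(\frac{p}{r}\Big)^{\alpha}\Big(\frac{r}{q}\Big)^{\alpha-1}}\le \expectation{R}{\Big(\frac{p}{r}\Big)^{\alpha/\eta}}^{\eta}\,\expectation{R}{\Big(\frac{r}{q}\Big)^{\frac{\alpha-1}{1-\eta}}}^{1-\eta}.
\end{equation*}

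Next I identify each factor with a R\'enyi divergence, using Definition~\ref{definition:Renyi divergence}.

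For the first factor, $\expectation{R}{(p/r)^{\alpha/\eta}}=\exp\big((\alpha/\eta-1)\RenyiDivergence[\alpha/\eta]{P}{R}\big)$. Raising it to the power $\eta$ gives $\exp\big((\alpha-\eta)\RenyiDivergence[\alpha/\eta]{P}{R}\big)$.

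For the second factor, set $\gamma\coloneqq(\alpha-\eta)/(1-\eta)$ and note that $1+\frac{\alpha-1}{1-\eta}=\gamma$. Therefore
\begin{equation*}
\expectation{R}{(r/q)^{(\alpha-1)/(1-\eta)}}=\int r^{\gamma}q^{1-\gamma}\,\d\lambda=\exp\big((\gamma-1)\RenyiDivergence[\gamma]{R}{Q}\big).
\end{equation*}
Since $\gamma-1=(\alpha-1)/(1-\eta)$, raising this to the power $1-\eta$ gives $\exp\big((\alpha-1)\RenyiDivergence[\gamma]{R}{Q}\big)$.

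Finally I take logarithms and divide by $\alpha-1>0$. This yields exactly
\begin{equation*}
\RenyiDivergence{P}{Q}\le\frac{\alpha-\eta}{\alpha-1}\RenyiDivergence[\alpha/\eta]{P}{R}+\RenyiDivergence[(\alpha-\eta)/(1-\eta)]{R}{Q}.
\end{equation*}
Substituting $\eta=1/2$ gives the orders $2\alpha$ and $2\alpha-1$ and the constant $\frac{2\alpha-1}{2(\alpha-1)}$ in~\eqref{eq:weak triangle inequality}.

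There is no deep obstacle here. The only point needing care is the choice of splitting: the powers $\alpha$ and $\alpha-1$ must be distributed so that H\"older reproduces the R\'enyi integrals at the stated orders. Beyond that, the remaining care is the measure-theoretic bookkeeping of null sets handled in the first step.
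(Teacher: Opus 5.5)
Your proof is correct, and it is essentially the standard argument. The paper does not prove this lemma itself; it only cites \cite[Lemma 2.7]{altschuler2024faster}, which rests on the same rewriting $\int p^\alpha q^{1-\alpha}\,\d\lambda=\expectation{R}{(p/r)^{\alpha}(r/q)^{\alpha-1}}$ followed by H\"older's inequality with exponents $1/\eta$ and $1/(1-\eta)$. Your exponent bookkeeping checks out ($\gamma-1=(\alpha-1)/(1-\eta)$, and the factor $\alpha-\eta$ after raising to the power $\eta$), as do the absolute-continuity reductions $P\ll R\ll Q$.
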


\begin{lemma}[Convergence of \Renyi Divergence along Langevin Dynamics under \ac{lsi}]
	\label{lemma:convergence of Renyi divergence along LD under LSI}
	Let $V\colon \bbR^d\to \bbR$ be the potential function, and let $(X_t)_{t\geq 0}$ evolve according to the (time-rescaled) Langevin dynamics $\d X_t = -\gamma(t)\nabla V(X_t)\d t + \sqrt{2\gamma(t)}\d W_t$. 
	Denote by $\rho_t$ the distribution of $X_t$, and by $\pi\propto\exp(-V)$ the stationary distribution of $X_t$. If $\pi$ satisfies \ac{lsi} with parameter $\eta$, \ie, for any smooth function $f$,
	\begin{equation}
		\expectation{\pi}{f^2\log f^2} - \expectation{\pi}{f^2}\log \expectation{\pi}{f^2} = \ent{\pi}{f^2} \leq \frac{2}{\eta}\expectation{\pi}{\|\nabla f\|^2},
	\end{equation}
	then for any $\alpha>1$,
	\begin{equation}
		\RenyiDivergence{\rho_t}{\pi}\leq \exponential{-\frac{2\eta}{\alpha}\int_0^t\gamma_s \d s}\RenyiDivergence{\rho_0}{\pi}.
		\label{eq:lemma:convergence of Renyi divergence along Langevin dynamics under LSI}
	\end{equation}
\end{lemma}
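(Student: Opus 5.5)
The plan is the standard Vempala--Wibisono argument: differentiate the R\'enyi divergence along the Fokker--Planck flow, and let the time-rescaling enter only as a multiplicative factor $\gamma(t)$. Write $h_t \coloneqq \rho_t/\pi$, $F(t) \coloneqq \expectation{\pi}{h_t^\alpha}$, and $R(t) \coloneqq \RenyiDivergence{\rho_t}{\pi} = \frac{1}{\alpha-1}\log F(t)$.

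\textbf{Step 1: the dissipation identity.} Since $\pi\propto e^{-V}$, the Fokker--Planck equation of the time-rescaled dynamics can be written as
\begin{equation}
\partial_t \rho_t = \gamma(t)\,\nabla\cdot\left(\pi \nabla h_t\right).
\end{equation}
Integrating by parts against $\pi$ then gives
\begin{equation}
F'(t) = \alpha\,\expectation{\pi}{h_t^{\alpha-1}\partial_t h_t} = -\gamma(t)\,\alpha(\alpha-1)\,\expectation{\pi}{h_t^{\alpha-2}\|\nabla h_t\|^2}.
\end{equation}
Set $f_t \coloneqq h_t^{\alpha/2}$, so that $\|\nabla f_t\|^2 = \frac{\alpha^2}{4}h_t^{\alpha-2}\|\nabla h_t\|^2$. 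The identity becomes
\begin{equation}
R'(t) = -\frac{4\gamma(t)}{\alpha}\,\frac{\expectation{\pi}{\|\nabla f_t\|^2}}{\expectation{\pi}{f_t^2}}.
\end{equation}

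\textbf{Step 2: apply the LSI.} The LSI gives $\expectation{\pi}{\|\nabla f_t\|^2}\ge \frac{\eta}{2}\ent{\pi}{f_t^2}$. It remains to lower-bound $\ent{\pi}{h_t^\alpha}/\expectation{\pi}{h_t^\alpha}$ by the R\'enyi divergence itself.

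Let $\psi(a) \coloneqq \log\expectation{\pi}{h_t^a}$. This function is convex in $a$ by H\"older's inequality, and $\psi(1)=0$. A direct computation gives
\begin{equation}
\frac{\ent{\pi}{h_t^\alpha}}{\expectation{\pi}{h_t^\alpha}} = \alpha\psi'(\alpha)-\psi(\alpha).
\end{equation}
Convexity yields $\psi(\alpha) = \psi(\alpha)-\psi(1)\le(\alpha-1)\psi'(\alpha)$, so that
\begin{equation}
\alpha\psi'(\alpha)-\psi(\alpha)\ge \frac{\alpha\psi(\alpha)}{\alpha-1}-\psi(\alpha) = \frac{\psi(\alpha)}{\alpha-1} = R(t).
\end{equation}

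\textbf{Step 3: Gr\"onwall.} Combining the two steps,
\begin{equation}
R'(t)\le -\frac{4\gamma(t)}{\alpha}\cdot\frac{\eta}{2}R(t) = -\frac{2\eta\gamma(t)}{\alpha}R(t).
\end{equation}
Gr\"onwall's inequality with the variable rate $\gamma(t)$ then gives $R(t)\le \exp\left(-\frac{2\eta}{\alpha}\int_0^t\gamma_s\,\d s\right)R(0)$, which is exactly~\eqref{eq:lemma:convergence of Renyi divergence along LD under LSI}.

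The main obstacle is not the algebra but justifying Step 1 rigorously: differentiating under the integral, discarding boundary terms in the integration by parts, and applying the LSI to $f_t$, which need not be smooth or bounded. I would first prove the inequality for $\rho_0$ with a smooth, positive, bounded density ratio to $\pi$, where parabolic regularity makes all of these manipulations legitimate. The general case would then follow by truncation/approximation of $\rho_0$, together with lower semicontinuity of the R\'enyi divergence; if $R(0)=\infty$ the bound is trivial.
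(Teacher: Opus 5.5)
Your proposal is correct and follows the paper's proof. Both use the same Fokker--Planck plus integration-by-parts dissipation identity: your $-\frac{4\gamma(t)}{\alpha}\expectation{\pi}{\|\nabla f_t\|^2}/\expectation{\pi}{f_t^2}$ is exactly the paper's $-\alpha\gamma(t)\expectation{\pi}{h_t^\alpha\|\nabla\log h_t\|^2}/\expectation{\pi}{h_t^\alpha}$, which is then bounded via the LSI by a multiple of $\RenyiDivergence{\rho_t}{\pi}$ and integrated in time. The only differences are that your Step~2 proves inline, via convexity of $a\mapsto\log\expectation{\pi}{h_t^a}$, the inequality the paper imports as Lemma~5 of \cite{vempala2019rapid}, and that your closing regularity and approximation remarks address technicalities the paper leaves implicit.
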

For the \ac{ou} process, the potential function $V(x) = \frac{1}{2}\|x\|^2$ is $1$-strongly convex, and by the Bakry-Émery criterion, the \ac{lsi} holds with $\eta=1$.
The case $\gamma(t) \equiv 1$ was proved in~\cite[Theorem 2]{vempala2019rapid}, and their proof can be directly extended to general $\gamma(t)$. We provide the proof for completeness. 

\begin{proof}[Proof of Lemma~\ref{lemma:convergence of Renyi divergence along LD under LSI}]
	We first show that
	\begin{equation}
		\partial_t \RenyiDivergence{\rho_t}{\pi} = -\alpha \gamma(t) \frac{\expectation{\pi}{\left(\frac{\rho_t}{\pi}\right)^\alpha\left\|\nabla\log\frac{\rho_t}{\pi}\right\|^2}}{\expectation{\pi}{\left(\frac{\rho_t}{\pi}\right)^\alpha}}.
		\label{eq:appendix:partial t of Renyi divergence along Langevin dynamics}
	\end{equation}
	Assuming~\eqref{eq:appendix:partial t of Renyi divergence along Langevin dynamics} holds, we invoke Lemma 5 of~\cite{vempala2019rapid}, which reads
	\begin{equation}
		\frac{\expectation{\pi}{\left(\frac{\rho_t}{\pi}\right)^\alpha\left\|\nabla\log\frac{\rho_t}{\pi}\right\|^2}}{\expectation{\pi}{\left(\frac{\rho_t}{\pi}\right)^\alpha}} \geq \frac{2\eta}{\alpha^2} \RenyiDivergence{\rho_t}{\pi}.
	\end{equation}
	Inserting this result into~\eqref{eq:appendix:partial t of Renyi divergence along Langevin dynamics} yields
	\begin{equation}
		\partial_t \log \RenyiDivergence{\rho_t}{\pi} \leq -\frac{2\eta}{\alpha}\gamma(t).
	\end{equation}
	Integrating over $[0,t]$ yields the desired result~\eqref{eq:lemma:convergence of Renyi divergence along Langevin dynamics under LSI}. Thus, it remains to prove~\eqref{eq:appendix:partial t of Renyi divergence along Langevin dynamics}. This follows from the Fokker-Planck equation for $X_t$ and integration by parts.
	{
		\allowdisplaybreaks
		\begin{align}
			\partial_t \RenyiDivergence{\rho_t}{\pi} 
			&= \frac{1}{\alpha-1} \partial_t \log \expectation{\pi}{\left(\frac{\rho_t}{\pi}\right)^\alpha} \\
			&= \frac{\alpha}{\alpha-1} \frac{1}{\expectation{\pi}{\left(\frac{\rho_t}{\pi}\right)^\alpha}} \int \left(\frac{\rho_t}{\pi}\right)^{\alpha-1} \partial_t \rho_t \, \d x \\
			&= \frac{\alpha}{\alpha-1} \frac{1}{\expectation{\pi}{\left(\frac{\rho_t}{\pi}\right)^\alpha}} \int  \left(\frac{\rho_t}{\pi}\right)^{\alpha-1} \gamma(t) \nabla \cdot \left(\rho_t \nabla \log \frac{\rho_t}{\pi}\right) \d x \\
			&= -\frac{\alpha\gamma(t)}{\alpha-1} \frac{1}{\expectation{\pi}{\left(\frac{\rho_t}{\pi}\right)^\alpha}} \int \rho_t \left\langle \nabla \left(\frac{\rho_t}{\pi}\right)^{\alpha-1}, \nabla \log \frac{\rho_t}{\pi} \right\rangle \d x \\
			&= -\alpha\gamma(t) \frac{1}{\expectation{\pi}{\left(\frac{\rho_t}{\pi}\right)^\alpha}} \int \rho_t \left(\frac{\rho_t}{\pi}\right)^{\alpha-1} \left\langle \nabla \log \frac{\rho_t}{\pi}, \nabla \log \frac{\rho_t}{\pi} \right\rangle \d x \\
			&= -\alpha\gamma(t) \frac{\expectation{\pi}{\left(\frac{\rho_t}{\pi}\right)^\alpha\left\|\nabla\log\frac{\rho_t}{\pi}\right\|^2}}{\expectation{\pi}{\left(\frac{\rho_t}{\pi}\right)^\alpha}}.
		\end{align}
	}
\end{proof}

\begin{lemma}[Score Perturbation for DDPM {\cite{lee2022convergence, chen2022sampling}}]
	\label{lemma:score perturbation}
	Let $X\in\bbR^d$ be a random variable with density $\pi^X(x) \propto e^{-V(x)}$. Let $Y = AX+\Sigma^{1/2}\,\Xi^d$, where $A\in\bbR^{d\times d}$ is a fixed matrix , $\Sigma\in\bbR^{d\times d}$ is a positive definite matrix, and $\Xi^d$ is a standard $d$-dimensional Gaussian random vector. Assume that $\|A-I_d\|_2\leq \zeta$ and $\nabla V$ is $L$-Lipschitz with $L\leq \dfrac{1}{4\|\Sigma\|_2}$. Then,
	\begin{equation}
		\left\|\nabla\log\frac{\pi^Y}{\pi^X}(x)\right\|\lesssim L\sqrt{\|\Sigma\|_2d} + L\zeta\|x\| + \left(\zeta + L\|\Sigma\|_2\right)\|\nabla V(x)\|,
	\end{equation}
	where $\pi^Y$ denotes the distribution of $Y$.
\end{lemma}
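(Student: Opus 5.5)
The approach is to write the score of $Y$ as a posterior average of the score of $AX$, and then control how far that posterior sits from the evaluation point $x$.

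\textbf{Reduction.} I assume $\zeta\le 1/2$, which is the regime in which the lemma is used; for larger $\zeta$ the bound is vacuous or follows from cruder estimates. Then $A$ is invertible and $\|A^{-1}\|_2\le 2$, $\|A^{-1}-I_d\|_2\lesssim\zeta$, $\|A^{-T}-I_d\|_2\lesssim\zeta$.

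Set $Z\coloneqq AX$. Its density is $p_A(z)\propto e^{-V(A^{-1}z)}$, so
\[
\nabla\log p_A(z)=-A^{-T}\nabla V(A^{-1}z).
\]
Since $\pi^Y=p_A * \mathcal N(0,\Sigma)$, the standard identity for the score of a Gaussian convolution gives
\[
\nabla\log\pi^Y(x)=\expectation{}{\nabla\log p_A(Z)\mid Y=x}.
\]

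\textbf{Decomposition.} Subtracting $\nabla\log\pi^X(x)=-\nabla V(x)$ gives
\[
\nabla\log\frac{\pi^Y}{\pi^X}(x)=\expectation{}{(I_d-A^{-T})\nabla V(A^{-1}Z)+\nabla V(x)-\nabla V(A^{-1}Z)\,\middle|\,Y=x}.
\]
Write $\nabla V(A^{-1}Z)=\nabla V(x)+\big(\nabla V(A^{-1}Z)-\nabla V(x)\big)$ inside the first term and apply the $L$-Lipschitz property to both differences. This bounds the norm by
\[
\zeta\|\nabla V(x)\| + L\,\expectation{}{\|A^{-1}Z-x\|\mid Y=x}
\]
up to constants.

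Next expand $A^{-1}Z-x=(A^{-1}-I_d)Z+(Z-x)$ and use $\|Z\|\le\|x\|+\|Z-x\|$. This yields
\[
\|A^{-1}Z-x\|\lesssim \zeta\|x\|+\|Z-x\|.
\]
Everything therefore reduces to the posterior deviation estimate
\[
\expectation{}{\|Z-x\|\mid Y=x}\lesssim \sqrt{\|\Sigma\|_2 d}+\|\Sigma\|_2\|\nabla V(x)\|.
\]
Multiplying by $L$ produces exactly the claimed terms $L\sqrt{\|\Sigma\|_2 d}$, $L\zeta\|x\|$, and $L\|\Sigma\|_2\|\nabla V(x)\|$.

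\textbf{Main obstacle: the posterior estimate.} The conditional law of $Z$ given $Y=x$ has density proportional to
\[
\exp\Big(-V(A^{-1}z)-\tfrac12(z-x)^\top\Sigma^{-1}(z-x)\Big).
\]
The Hessian of $V\circ A^{-1}$ is bounded in operator norm by $L\|A^{-1}\|_2^2\le 4L\le 1/\|\Sigma\|_2$. This is where the hypothesis $L\le 1/(4\|\Sigma\|_2)$ enters, although the constant is tight at exactly this step. If it leaves no margin, I would first try sharpening $\|A^{-1}\|_2$ by taking $\zeta$ smaller, and otherwise accept an adjusted universal constant. Granting a margin, the posterior is $c/\|\Sigma\|_2$-strongly log-concave and $\cO(1/\lambda_{\min}(\Sigma))$-smooth.

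Strong log-concavity gives two facts:
\begin{itemize}
\item The posterior mean lies within $\cO(\sqrt{d\|\Sigma\|_2})$ of the mode $z^\star$.
\item The posterior fluctuations around the mean have size $\cO(\sqrt{d\|\Sigma\|_2})$.
\end{itemize}
To locate the mode, note that the gradient of the potential at $z=x$ equals $A^{-T}\nabla V(A^{-1}x)$. Its norm is $\lesssim\|\nabla V(x)\|+L\zeta\|x\|$. Strong convexity then places the mode within $\|\Sigma\|_2$ times this quantity of $x$.

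The leftover contribution $\|\Sigma\|_2 L\zeta\|x\|$ is at most $\zeta\|x\|/4$. After multiplication by $L$ it is absorbed into the $L\zeta\|x\|$ term. Combining these pieces gives the posterior deviation estimate, and hence the lemma.
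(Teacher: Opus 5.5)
The paper does not prove this lemma; it imports it from \cite{lee2022convergence, chen2022sampling}. Your argument is essentially the standard one behind it:
\begin{itemize}
\item write $\nabla\log\pi^Y$ as a posterior average of the score of $AX$;
\item use the Lipschitz property to reduce to the posterior displacement $\bbE[\|Z-x\|\mid Y=x]$;
\item control that displacement by strong log-concavity of the posterior potential $U(z)=V(A^{-1}z)+\tfrac12(z-x)^\top\Sigma^{-1}(z-x)$.
\end{itemize}
For the last step, the mode lies within $\|\nabla U(x)\|/\mu\lesssim\|\Sigma\|_2\|\nabla U(x)\|$ of $x$, and $\bbE\|Z-z^\star\|\le\sqrt{d/\mu}$. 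Handling the scaling and the Gaussian convolution in a single posterior is only a packaging choice. If you assume $\zeta\le 1/4$ instead of $\zeta\le 1/2$, every step you wrote goes through. Then $\|A^{-1}\|_2^2\le 16/9$, so $U$ is $\frac{5}{9\|\Sigma\|_2}$-strongly convex, and the zero-margin problem you flagged disappears.

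The step that fails is your opening reduction. For $\zeta$ not small, the inequality is neither vacuous nor a consequence of cruder estimates: it is false. Take $d=1$, $V(x)=Lx^2/2$, $A=a\in(0,1)$ (so $\zeta=1-a$), and $\Sigma=s\le 1/(4L)$. Then $\pi^Y=\cN(0,a^2/L+s)$ and
\[
\Big|\nabla\log\frac{\pi^Y}{\pi^X}(x)\Big| = L|x|\,\Big|\frac{1}{a^2+Ls}-1\Big|,
\]
while the right-hand side is at most $L\sqrt{s}+\tfrac94 L|x|$. Sending $|x|\to\infty$ and then $a,Ls\to 0$ rules out any universal constant.

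Even your own threshold $\zeta=1/2$ fails, so the fallback of keeping $\zeta\le1/2$ and enlarging the constant cannot work. Take $A=\tfrac12$, $s=\tfrac1{4L}$, and $V(x)=-\tfrac L2x^2$ on $[-R,R]$, continued by $\tfrac L2(|x|-2R)^2-LR^2$, so that $\nabla V$ is $L$-Lipschitz. Given $Y=y$, the posterior density of $Z$ on $[-R/2,R/2]$ is proportional to $e^{4Lyz}$, because the quadratic terms cancel exactly. Hence for $LRy\gg1$ the posterior concentrates near $R/2$, and Tweedie's formula gives $\nabla\log\pi^Y(y)\approx 2LR$. The right-hand side, by contrast, equals $\tfrac{\sqrt L}{2}+\tfrac54L|y|$; for example, take $L=1$, $y=R^{-1/2}$, and $R\to\infty$.

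The correct repair is therefore an added hypothesis such as $\zeta\le 1/4$, which the statement as printed omits. Any fixed $c<1/2$ also works, with $c$-dependent constants. This costs nothing in the paper, which only applies the lemma with $\zeta=\cO(\Delta_{\max})$. With that hypothesis, your proof is complete.
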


For the OU process~\eqref{eq:OU process} and $0<s<t$, we have
\begin{equation}
	X^\OU_t\stackrel{\d}{=}e^{ -\left(\Gamma(t) - \Gamma(s)\right)} X^\OU_s + \sqrt{1-e^{-2\left(\Gamma(t) - \Gamma(s)\right)}}\,\Xi^d.
\end{equation}
Thus, by choosing $A = e^{-\left(\Gamma(t) - \Gamma(s)\right)} I_d$ and $\Sigma = \left(1-e^{-2\left(\Gamma(t) - \Gamma(s)\right)}\right)I_d$, we obtain $\zeta = \cO\left(\Gamma(t)-\Gamma(s)\right)$ and $\|\Sigma\|_2 = \cO\left(\Gamma(t)-\Gamma(s)\right)$. For $t\in[t_k,t_{k+1})$, the above lemma yields
\begin{align}
	\left\|\nabla\log \frac{\rho_{t_k}}{\rho_{t}}\left(X_{t_k}\right)\right\|^2 &\lesssim L^2\|\Sigma\|_2d + L^2\zeta^2\left\|X_{t_k}\right\|^2 + \left(\zeta + L\|\Sigma\|_2\right)^2\left\|\nabla \log \rho_t\left(X_{t_k}\right)\right\|^2\\
	&\lesssim L^2\left(\bar{\Gamma}(t_k)-\bar{\Gamma}(t)\right)d + L^2\left(\bar{\Gamma}(t_k)-\bar{\Gamma}(t)\right)^2\left\|X_{t_k}\right\|^2 + L^2\left(\bar{\Gamma}(t_k)-\bar{\Gamma}(t)\right)^2\left\|\nabla \log \rho_t\left(X_{t_k}\right)\right\|^2\\
	&= L^2\left(\tilde{\Gamma}(t)-\tilde{\Gamma}(t_k)\right)d + L^2\left(\tilde{\Gamma}(t)-\tilde{\Gamma}(t_k)\right)^2\left\|X_{t_k}\right\|^2 + L^2\left(\tilde{\Gamma}(t)-\tilde{\Gamma}(t_k)\right)^2\left\|\nabla \log \rho_t\left(X_{t_k}\right)\right\|^2.
\end{align}

\begin{lemma}[Score Concentration~\cite{altschuler2023shifted}]
	\label{lemma:score concentration}
	Let $\pi\propto\exp(-V)$ be a probability measure with potential $V$. If the Hessian of $V$ satisfies $-L I \preceq \nabla^2 V \preceq L I$, then $\nabla V$ is $\sqrt{L}$-sub-Gaussian, in the sense 
	\begin{equation}
		\expectation{\pi}{\exp\left(\lambda\langle v,\nabla V\rangle\right)} \leq \exp\left(\frac{\lambda^2L}{2}\right),
	\end{equation}
	for any $v\in\bbR^d$ satisfying $\|v\|=1$.
\end{lemma}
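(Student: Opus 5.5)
Write $f(x)\coloneqq\innerProd{v}{\nabla V(x)}$ for a fixed unit vector $v$ and $M(\lambda)\coloneqq\expectation{\pi}{e^{\lambda f}}$. The plan is a Herbst-type argument. I derive a differential inequality for $M$ from the Gaussian-free integration-by-parts identity satisfied by any Gibbs measure $\pi\propto e^{-V}$. Since $\nabla\pi=-\pi\nabla V$, for every sufficiently regular $g$ with suitable decay one has
\begin{equation*}
	\expectation{\pi}{\innerProd{v}{\nabla V}\, g}=\expectation{\pi}{\innerProd{v}{\nabla g}} .
\end{equation*}
This holds because $\int \innerProd{v}{\nabla(g\pi)}\,\d x=0$. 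The only property of the Hessian I will use is $v^\top\nabla^2V\,v\le L$; the lower bound $\nabla^2V\succeq -LI$ serves only to make $\nabla V$ globally $L$-Lipschitz, so that $|f(x)|\le |f(0)|+L\|x\|$ grows at most linearly.

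The key step is to apply the identity with $g=e^{\lambda f}$. Then $\nabla g=\lambda e^{\lambda f}\nabla^2V\,v$, and so
\begin{equation*}
	M'(\lambda)=\expectation{\pi}{f e^{\lambda f}}=\lambda\,\expectation{\pi}{\left(v^\top\nabla^2V\,v\right)e^{\lambda f}} .
\end{equation*}
For $\lambda\ge 0$ this gives $M'(\lambda)\le \lambda L M(\lambda)$, hence $(\log M)'\le\lambda L$. For $\lambda\le0$ we have $\lambda\, v^\top\nabla^2V v\ge\lambda L$, which gives $(\log M)'\ge\lambda L$. Since $\log M(0)=0$, integrating from $0$ to $\lambda$ in the first case and from $\lambda$ to $0$ in the second yields $\log M(\lambda)\le\lambda^2L/2$ for all $\lambda\in\bbR$. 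This is the claim.

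The main obstacle is rigor: I must justify differentiating $M$ under the integral, the finiteness of $M$, and the vanishing of boundary terms in the integration by parts. Finiteness of $M$ is not automatic, because $e^{\lambda f}$ may grow like $e^{|\lambda|L\|x\|}$ and $\pi$ is not yet known to have exponential tails. I would first try a truncation. Replace $e^{\lambda f}$ by $g_n=\phi_n(f)$, where $\phi_n$ is a smooth bounded nondecreasing approximation of $s\mapsto e^{\lambda s}$ satisfying $0\le\phi_n'(s)\le\lambda\phi_n(s)$ for $\lambda>0$, for instance by capping the exponential at level $n$ smoothly. Then $g_n$ and $\nabla g_n$ are bounded. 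Multiplying additionally by a cutoff $\chi_R(x)$ and letting $R\to\infty$ removes the boundary terms, using dominated convergence together with $\expectation{\pi}{\|\nabla V\|}<\infty$; the latter follows from $\expectation{\pi}{\|\nabla V\|^2}=\expectation{\pi}{\Delta V}\le dL$, obtained from the same identity applied with bounded test functions. This yields the differential inequality for $M_n(\lambda)=\expectation{\pi}{\Phi_n(\lambda,f)}$ with a Grönwall argument in $\lambda$, uniformly in $n$. Monotone convergence as $n\to\infty$ then gives the bound for $M$ itself, and with it finiteness. The case $\lambda<0$ follows symmetrically by replacing $v$ with $-v$.
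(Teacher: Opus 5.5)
Your argument is correct, but it takes a different and heavier route than the usual one. The paper gives no proof of its own and simply imports the lemma from the cited reference.

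The standard short proof is a shift argument. Since $\nabla^2 V\preceq LI$, for every $x$ and every $\lambda\in\bbR$ one has $V(x-\lambda v)\le V(x)-\lambda\innerProd{v}{\nabla V(x)}+\frac{L\lambda^2}{2}$. So with $Z=\int e^{-V}$,
\[
\expectation{\pi}{e^{\lambda\innerProd{v}{\nabla V}}}=\frac{1}{Z}\int e^{\lambda\innerProd{v}{\nabla V(x)}-V(x)}\,\d x\le\frac{e^{L\lambda^2/2}}{Z}\int e^{-V(x-\lambda v)}\,\d x=e^{L\lambda^2/2}
\]
by translation invariance of Lebesgue measure. This argument needs none of the following:
\begin{itemize}
\item integration by parts, cutoffs, or boundary terms;
\item a priori finiteness of $M$, since finiteness comes out as a conclusion;
\item a split according to the sign of $\lambda$;
\item the lower bound $\nabla^2V\succeq -LI$, since only the one-sided descent inequality is used.
\end{itemize}
What your Herbst/Stein-identity route buys in exchange is the exact formula $(\log M)'(\lambda)=\lambda\,\bbE_{\pi_\lambda}\bigl[v^\top\nabla^2V\,v\bigr]$ with $\d\pi_\lambda\propto e^{\lambda f}\,\d\pi$. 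This shows that the sub-Gaussian constant is the directional curvature averaged under exponential tilts. The price is the regularization layer, where the lower Hessian bound enters only to give linear growth of $\nabla V$.

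If you keep your route, one step of that layer needs to be made explicit. After capping, $\partial_\lambda\Phi_n(\lambda,s)\neq s\,\Phi_n(\lambda,s)$. So the integration-by-parts bound on $\bbE_\pi[f\,\Phi_n(\lambda,f)]$ does not by itself control $M_n'(\lambda)=\bbE_\pi[\partial_\lambda\Phi_n(\lambda,f)]$.

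To fix this, take $\Phi_n(\lambda,s)=\psi_n(\lambda s)$, where $\psi_n$ is bounded, equals $\exp$ on $(-\infty,a_n]$ for some $a_n>0$, and satisfies $0\le\psi_n'\le\psi_n$. Then for $\lambda\ge0$ you have $\partial_\lambda\Phi_n(\lambda,s)=s\,\psi_n'(\lambda s)\le s\,\psi_n(\lambda s)$: there is equality for $s\le 0$, and $\psi_n'\le\psi_n$ handles $s>0$. Hence $M_n'(\lambda)\le\bbE_\pi[f\,\psi_n(\lambda f)]\le\lambda L\,\bbE_\pi[\psi_n'(\lambda f)]\le\lambda L\,M_n(\lambda)$, and $M_n(0)=1$, which is exactly what Gr\"onwall needs. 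Finally, pass to the limit with Fatou's lemma, unless you check that $\psi_n\uparrow\exp$ so that monotone convergence applies.
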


\begin{lemma}[Moment of Quadratic Sub-Gaussian Variables~\cite{hsu2012tail}]
	\label{lemma:moments of quadratic sub-Gaussian}
	Let $X\in\bbR^d$ be a sub-Gaussian random vector with parameter $\sigma$, \ie, for any $v\in\bbR^d$,
	\begin{equation}
		\expectation{}{\exp\left(\langle v, X-\expectation{}{X}\rangle\right)}\leq \exp\left(\frac{\|v\|^2\sigma^2}{2}\right).
	\end{equation}
	Then, for $\lambda<\dfrac{1}{2\sigma^2}$,
	\begin{equation}
		\expectation{}{\exp\left(\lambda\|X\|^2\right)}\leq \exp\left(\sigma^2d\lambda + \frac{\|\expectation{}{X}\|^2\lambda + \sigma^4\lambda^2d}{1-2\sigma^2\lambda}\right).
	\end{equation}
\end{lemma}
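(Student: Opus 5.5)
The plan is to use the Gaussian linearization trick from~\cite{hsu2012tail}: write $\exp(\lambda\|X\|^2)$ as a Gaussian average of exponentials that are linear in $X$, apply the sub-Gaussian hypothesis conditionally, and finish with an explicit Gaussian integral. Set $\mu_X \coloneqq \expectation{}{X}$ and $c \coloneqq \sigma^2\lambda$, so that the hypothesis $\lambda<\frac{1}{2\sigma^2}$ reads $c<\tfrac12$. We may assume $\lambda>0$: for $\lambda\le 0$ the left-hand side is at most $1$, and the right-hand side is at least $1$ once one checks that its exponent is nonnegative when $\lambda<0$. Let $G\sim\xi^d$ be independent of $X$.

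\emph{Step 1 (linearization).} For every fixed $x\in\bbR^d$, the Gaussian moment generating function gives
\begin{equation*}
\expectation{G}{\exp\big(\sqrt{2\lambda}\,\langle G,x\rangle\big)}=\exp(\lambda\|x\|^2).
\end{equation*}
Therefore $\expectation{}{\exp(\lambda\|X\|^2)}=\expectation{X}{\expectation{G}{\exp(\sqrt{2\lambda}\langle G,X\rangle)}}$. The integrand is nonnegative, so Tonelli's theorem lets us exchange the order of integration and take the expectation over $X$ first.

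\emph{Step 2 (sub-Gaussian step).} Fix $G=g$. Write $\langle g,X\rangle=\langle g,\mu_X\rangle+\langle g,X-\mu_X\rangle$ and apply the hypothesis with $v=\sqrt{2\lambda}\,g$:
\begin{equation*}
\expectation{X}{\exp\big(\sqrt{2\lambda}\langle g,X\rangle\big)}\le \exp\big(\sqrt{2\lambda}\langle g,\mu_X\rangle+c\|g\|^2\big).
\end{equation*}

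\emph{Step 3 (Gaussian integral).} For $c<\tfrac12$ and any $b\in\bbR^d$, completing the square gives
\begin{equation*}
\expectation{G}{\exp\big(c\|G\|^2+\langle b,G\rangle\big)}=(1-2c)^{-d/2}\exp\Big(\frac{\|b\|^2}{2(1-2c)}\Big).
\end{equation*}
With $b=\sqrt{2\lambda}\,\mu_X$ we have $\|b\|^2/2=\lambda\|\mu_X\|^2$. This yields
\begin{equation*}
\expectation{}{\exp(\lambda\|X\|^2)}\le(1-2c)^{-d/2}\exp\Big(\frac{\lambda\|\mu_X\|^2}{1-2c}\Big).
\end{equation*}

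\emph{Step 4 (elementary log bound).} It remains to show $-\tfrac d2\log(1-2c)\le dc+\frac{dc^2}{1-2c}$. Put $x=2c\in(0,1)$ and expand
\begin{equation*}
-\log(1-x)=\sum_{k\ge1}\frac{x^k}{k}\le x+\frac12\sum_{k\ge2}x^k=x+\frac{x^2}{2(1-x)},
\end{equation*}
using $\frac1k\le\frac12$ for $k\ge2$. Multiplying by $d/2$ gives $dc+\frac{dc^2}{1-2c}$, since $\frac{d}{2}\cdot\frac{4c^2}{2(1-2c)}=\frac{dc^2}{1-2c}$. Substituting $c=\sigma^2\lambda$, so that $dc=\sigma^2 d\lambda$ and $dc^2=\sigma^4\lambda^2 d$, reproduces exactly the claimed exponent
\begin{equation*}
\sigma^2d\lambda+\frac{\|\mu_X\|^2\lambda+\sigma^4\lambda^2d}{1-2\sigma^2\lambda}.
\end{equation*}

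There is no real obstacle here. The points needing care are the measure-theoretic exchange in Step 1, which nonnegativity settles, and choosing the exact form of the $\log(1-x)$ bound in Step 4 so that the constants match the statement.
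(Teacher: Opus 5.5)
Your Steps 1--4 are correct for $0<\lambda<\frac{1}{2\sigma^2}$. They follow the route of the cited source \cite{hsu2012tail}, specialized to $A=I_d$; the paper cites that result rather than reproving it. The ingredients match: Gaussian linearization, the sub-Gaussian bound applied conditionally on $G$, the explicit Gaussian integral, and the inequality $-\log(1-x)\le x+\frac{x^2}{2(1-x)}$.

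The step that fails is your preliminary reduction for $\lambda<0$. Set $c=\sigma^2\lambda$. The exponent on the right-hand side then simplifies to
\[
\sigma^2 d\lambda+\frac{\|\mathbb{E}[X]\|^2\lambda+\sigma^4\lambda^2 d}{1-2\sigma^2\lambda}=\frac{d\,c(1-c)+\|\mathbb{E}[X]\|^2\lambda}{1-2c}.
\]
When $\lambda<0$, the numerator is negative and the denominator exceeds $1$. So the exponent is strictly negative, and the check you defer comes out the wrong way.

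Worse, the inequality is genuinely false for every $\lambda<0$. The constant $X\equiv 0$ satisfies the sub-Gaussian hypothesis for any $\sigma>0$. Yet the left-hand side equals $1$, while the right-hand side equals $\exp\bigl(d\,c(1-c)/(1-2c)\bigr)<1$. So this range cannot be proved and must be excluded. The correct hypothesis is $0\le\lambda<\frac{1}{2\sigma^2}$, which is how \cite{hsu2012tail} states it.

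The case $\lambda=0$ is trivial, since both sides equal $1$. Every use of the lemma in the paper has a positive $\lambda$. With your opening remark replaced by this restriction, the proof is complete.
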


\subsection{Proof of Proposition~\ref{proposition:upper-bound of Phi(lambda)}}
\label{subsection:appendix:proof of the proposition on Phi-lambda}

Recall that $\Phi(\lambda)$ is defined as $\displaystyle \Phi(\lambda) = 
	\log\expectation{X\sim\rho}{\exponential{\lambda \sum_{k=0}^{K-1} \int_{t_k}^{t_{k+1}}\bar{\gamma}(t)\left\|S_{t_k}\left(X_{t_k}\right) - \nabla\log \rho_{t}(X_t)\right\|^2 \d t}}$. For $t\in[t_k,t_{k+1})$, the integrand can be decomposed as
\begin{multline}
	\bar{\gamma}(t)\left\|S_{t_k}\left(X_{t_k}\right) - \nabla\log \rho_{t}(X_t)\right\|^2
	\leq 3\bar{\gamma}(t)\Big(\left\|S_{t_k}\left(X_{t_k}\right) - \nabla\log \rho_{t_k}\left(X_{t_k}\right)\right\|^2 \\
	+ \left\|\nabla\log \rho_{t_k}\left(X_{t_k}\right) - \nabla\log \rho_{t}\left(X_{t_k}\right)\right\|^2 + \left\|\nabla\log \rho_{t}\left(X_{t_k}\right) - \nabla\log \rho_{t}\left(X_{t}\right)\right\|^2 \Big).
\end{multline}
Applying the score perturbation lemma (\cf\ Lemma~\ref{lemma:score perturbation}) to the second term and using the $L$-Lipschitz property of $\nabla\log \rho_t$ (\cf\ Assumption~\ref{assumption:Lipschitz score}) for the third term, we have, for some constant $C>0$,
\begin{multline}
	\bar{\gamma}(t)\left\|S_{t_k}\left(X_{t_k}\right) - \nabla\log \rho_{t}(X_t)\right\|^2 
	\leq 3\bar{\gamma}(t)
	\left\|S_{t_k}\left(X_{t_k}\right) - \nabla\log \rho_{t_k}\left(X_{t_k}\right)\right\|^2 \\
	 + CL^2\bar{\gamma}(t)\left(
	d\Delta\tilde{\Gamma}_{t_k,t} + \left\|\nabla \log \rho_t\left(X_{t}\right)\right\|^2 \left(\Delta\tilde{\Gamma}_{t_k,t}\right)^2 
	+ \left\|X_{t_k}\right\|^2 \left(\Delta\tilde{\Gamma}_{t_k,t}\right)^2 + \left\|X_t-X_{t_k}\right\|^2\right).
	\label{eq:upperbound of the integrand of the path integral}
\end{multline}
Here, we use $\Delta\tilde{\Gamma}_{t_k,t}$ to denote $\tilde{\Gamma}(t) - \tilde{\Gamma}(t_k)$.
Inserting~\eqref{eq:upperbound of the integrand of the path integral} into~\eqref{eq:an expression that we want to upper-bound} yields
\begin{equation}
	\Phi(\lambda)
	\leq \log\expectation{\rho}{\prod_{i=1}^{5}\cI_i(\lambda)},
	\label{eq:upper bound of the form prod I_i}
\end{equation}
where we define
{\allowdisplaybreaks
\begin{align}
	\cI_1(\lambda) &\coloneqq \exponential{3\lambda\sum_{k=0}^{K-1}\int_{t_{k}}^{t_{k+1}}\bar{\gamma}(t)\left\|S_{t_k}\left(X_{t_k}\right)-\nabla\log\rho_{t_k}\left(X_{t_k}\right)\right\|^2\d t},\\
	\cI_2(\lambda) &\coloneqq
	\exponential{\lambda CL^2\sum_{k=0}^{K-1}\int_{t_{k}}^{t_{k+1}}\bar{\gamma}(t) \Delta\tilde{\Gamma}_{t_k,t}\d t},\\
	\cI_3(\lambda) &\coloneqq
	\exponential{\lambda CL^2\sum_{k=0}^{K-1}\int_{t_{k}}^{t_{k+1}}\bar{\gamma}(t)\left\|\nabla\log\rho_{t}\left(X_{t}\right)\right\|^2\left(\Delta\tilde{\Gamma}_{t_k,t}\right)^2\d t},\label{eq:definition of I_3}\\
	\cI_4(\lambda) &\coloneqq
	\exponential{\lambda CL^2\sum_{k=0}^{K-1}\int_{t_{k}}^{t_{k+1}}\bar{\gamma}(t)\left\|X_{t_k}\right\|^2\left(\Delta\tilde{\Gamma}_{t_k,t}\right)^2\d t},
	\label{eq:definition of I_4}\\
	\cI_5(\lambda) &\coloneqq
	\exponential{\lambda CL^2\sum_{k=0}^{K-1}\int_{t_{k}}^{t_{k+1}}\bar{\gamma}(t)\left\|X_t - X_{t_k}\right\|^2\d t}.
	\label{eq:definition of I_5} 
\end{align}
}Since
$\partial_t \Delta\tilde{\Gamma}_{t_k,t} = \bar{\gamma}(t)$, we have
\begin{equation}
	\cI_2(\lambda) = \exponential{\frac{1}{2}\lambda CL^2\sum_{k=0}^{K-1}\left( \tilde{\Gamma}(t_{k+1}) - \tilde{\Gamma}(t_k) \right)^2} = \exponential{\frac{1}{2}\lambda CL^2\sum_{k=0}^{K-1}\left(\Delta \tilde{\Gamma}_k\right)^2}.
	\label{eq:I_2}
\end{equation}
Inserting~\eqref{eq:I_2} into~\eqref{eq:upper bound of the form prod I_i} yields
{\allowdisplaybreaks
\begin{align}
	\log\expectation{\rho}{\prod_{i=1}^{5}\cI_i(\lambda)}
	&= \frac{1}{2}\lambda CL^2\sum_{k=0}^{K-1}\left(\Delta \tilde{\Gamma}_k\right)^2 + \log\expectation{\rho}{\cI_1(\lambda)\cI_3(\lambda)\cI_4(\lambda)\cI_5(\lambda)} \\
	\begin{split}
		&\leq \frac{1}{2}\lambda CL^2  \sum_{k=0}^{K-1}\Delta \tilde{\Gamma}_k \cdot \Delta_{\max}
		+ \frac{3}{4}\log\expectation{\rho}{\cI_1\left(\frac{4}{3}\lambda\right)}
		+ \frac{1}{8}\log\expectation{\rho}{\cI_3\left(8\lambda\right)} \\
		&\qquad + \frac{1}{16}\log\expectation{\rho}{\cI_4\left(16\lambda\right)}
		+ \frac{1}{16}\log\expectation{\rho}{\cI_5\left(16\lambda\right)}
	\end{split} 
	\label{eq:Holder's inequality is used here}\\
	\begin{split}
		&= \frac{1}{2}\lambda CL^2\tilde{\Gamma}(T) \Delta_{\max}+ 
		\frac{3}{4}\log\expectation{\rho}{\exponential{4\lambda\tilde{\Gamma}(T)\scoreError(X)}}
		+ \frac{1}{8}\log\expectation{\rho}{\cI_3\left(8\lambda\right)} \\
		&\qquad + \frac{1}{16}\log\expectation{\rho}{\cI_4\left(16\lambda\right)}
		+ \frac{1}{16}\log\expectation{\rho}{\cI_5\left(16\lambda\right)}.
	\end{split}
	\label{eq:I_1 is linked to the score error}
\end{align}
}In the above, \eqref{eq:Holder's inequality is used here} follows from the generalized \Holder inequality $\int \prod_i f_i \leq \prod_i \left( \int f_i^{\alpha_i} \right)^{1/\alpha_i}$
for non-negative functions $f_i$ and exponents $\alpha_i$ satisfying $\sum_i \frac{1}{\alpha_i} = 1$. Here, the exponents are chosen according to the identity $\frac{3}{4} + \frac{1}{4}\left(\frac{1}{2} + \frac{1}{4} + \frac{1}{4}\right) = 1$. Additionally, \eqref{eq:I_1 is linked to the score error} follows from Definition~\ref{definition:score estimation error}. Consequently, it remains to upper-bound $\expectation{\rho}{\cI_i(\lambda)}$ for $i \in \{3, 4, 5\}$ and some $\lambda > 0$.

\paragraph{Bound $\expectation{\rho}{\cI_3(\lambda)}$} Following~\eqref{eq:definition of I_3}, we have
\begin{align}
	\expectation{\rho}{\cI_3(\lambda)}
	&= \expectation{\rho}{\exponential{\lambda CL^2\int_{0}^{T}\bar{\gamma}(t)\sum_{k=0}^{K-1}\left\|\nabla\log\rho_t(X_t)\right\|^2\left(\Delta\tilde{\Gamma}_{t_k,t}\right)^2\1{[t_k,t_{k+1})}(t)\d t}} 
	\label{eq:upper-bound I_3:We start here}\\
	&\leq \frac{1}{\tilde{\Gamma}(T)}\int_{0}^{T}\bar{\gamma}(t)\expectation{\rho}{\exponential{
		\lambda CL^2\tilde{\Gamma}(T)\sum_{k=0}^{K-1}\left\|\nabla\log\rho_t(X_t)\right\|^2\left(\Delta\tilde{\Gamma}_{t_k,t}\right)^2
		\1{[t_k,t_{k+1})]}(t)}}\d t 
	\label{eq:upper-bound I_3:Jensen's inequality is used here}\\
	&\leq \sup_{t\in[0,T]} \expectation{\rho}{\exponential{\lambda CL^2\tilde{\Gamma}(T)\sum_{k=0}^{K-1}\left\|\nabla\log\rho_t(X_t)\right\|^2\left(\Delta\tilde{\Gamma}_{t_k,t}\right)^2\1{[t_k,t_{k+1})]}(t)}}
	\label{eq:upper-bound I_3:average is less than supreme}\\
	&\leq \sup_{t\in[0,T]} \expectation{\rho}{\exponential{\lambda CL^2\tilde{\Gamma}(T) \Delta_{\max}^2 \left\|\nabla\log\rho_t(X_t)\right\|^2}}.
	\label{eq:upper-bound I_3:monotonicity of Gamma-tilde is used here}
\end{align}
In the above, \eqref{eq:upper-bound I_3:Jensen's inequality is used here} follows from Jensen's inequality, and~\eqref{eq:upper-bound I_3:monotonicity of Gamma-tilde is used here} follows from the fact that $\Delta\tilde{\Gamma}_{t_k,t}\leq \Delta\tilde{\Gamma}_k$ for $t\in[t_k,t_{k+1})$.
By Lemma~\ref{lemma:score concentration}, $\nabla\log\rho_t$ is $\sqrt{L}$-sub-Gaussian for every $t\in[0,T]$. Hence, we can invoke Lemma~\ref{lemma:moments of quadratic sub-Gaussian} to upper-bound~\eqref{eq:upper-bound I_3:monotonicity of Gamma-tilde is used here}. Notice that
\begin{equation}
	\expectation{\rho_t}{\nabla\log \rho_t} = \int\nabla\rho_t = \nabla \int \rho_t = 0.
\end{equation}
Define $\tilde{\lambda} \coloneqq \lambda CL^2\tilde{\Gamma}(T) \Delta_{\max}^2$, Lemma~\ref{lemma:moments of quadratic sub-Gaussian} then yields
\begin{align}
	\eqref{eq:upper-bound I_3:monotonicity of Gamma-tilde is used here}
	&\leq \exp\left(Ld\tilde{\lambda}+ \frac{L^2\tilde{\lambda}^2d}{1-2L\tilde{\lambda}}\right) 
	= \exp\left(Ld\tilde{\lambda}+ L^2\tilde{\lambda}^2d \left(1+2L\tilde{\lambda}+o\left( L\tilde{\lambda}\right)\right)\right) 
	= \exponential{Ld\tilde{\lambda}+ o\left(Ld\tilde{\lambda}\right)} 
	\label{eq:upper-bound I_3:final result:-1}\\
	&= 1 + Ld\tilde{\lambda}+ o\left(Ld\tilde{\lambda}\right).
	\label{eq:upper-bound I_3:final result}
\end{align}
Here, we assume $L\tilde{\lambda}= o(1)$ and $Ld\tilde{\lambda}= o(1)$ to obtain \eqref{eq:upper-bound I_3:final result:-1}  and \eqref{eq:upper-bound I_3:final result}, respectively. Note that the former condition is implied by the latter.

\paragraph{Bound $\expectation{\rho}{\cI_4(\lambda)}$} 
Similar to the steps~\eqref{eq:upper-bound I_3:We start here}--\eqref{eq:upper-bound I_3:monotonicity of Gamma-tilde is used here}, we have
\begin{align}
	\expectation{\rho}{\cI_4(\lambda)}
	&= \expectation{\rho}{\exponential{\lambda CL^2\int_{0}^{T}\bar{\gamma}(t)\sum_{k=0}^{K-1}\left\|X_{t_k}\right\|^2\left(\Delta\tilde{\Gamma}_{t_k,t}\right)^2\1{[t_k,t_{k+1})}(t)\d t}} 
	\label{eq:upper-bound I_4:move the sum inside the integral} \\
	&\leq \frac{1}{\tilde{\Gamma}(T)}\int_{0}^{T}\bar{\gamma}(t)\expectation{\rho}{\exponential{
			\lambda CL^2\tilde{\Gamma}(T)\sum_{k=0}^{K-1}\left\|X_{t_k}\right\|^2\left(\Delta\tilde{\Gamma}_{t_k,t}\right)^2
			\1{[t_k,t_{k+1})]}(t)}}\d t 
	\label{eq:upper-bound I_4:Jensen's inequality is used here} \\
	&\leq \sup_{t\in[0,T]} \expectation{\rho}{\exponential{\lambda CL^2\tilde{\Gamma}(T)\sum_{k=0}^{K-1}\left\|X_{t_k}\right\|^2\left(\Delta\tilde{\Gamma}_{t_k,t}\right)^2\1{[t_k,t_{k+1})]}(t)}}
	\label{eq:upper-bound I_4:average is less than supreme} \\
	&= \sup_{k\in\{0,1,\dots,K-1\}}\expectation{\rho}{\exponential{\tilde{\lambda} \left\|X_{t_k}\right\|^2}},
	\label{eq:upper-bound I_4:monotonicity of Gamma-tilde is used here}	
\end{align}
where $\tilde{\lambda} = \lambda CL^2\tilde{\Gamma}(T) \Delta_{\max}^2$ has been defined when bounding $\expectation{\rho}{\cI_3(\lambda)}$. Using the property of the \ac{ou} process (\cf Proposition~\ref{proposition:properties of OU}\ref{enumerate:properties of OU process:2})
\begin{equation}
	X^\OU_t \stackrel{\d}{=} e^{-\Gamma(t)}X^\OU_0 + \sqrt{1-e^{-2\Gamma(t)}}\, \Xi^d, 
	\label{eq:distribution of X-OU-t}
\end{equation}
where $\Gamma(t) =  \int_{0}^{t} \gamma(\tau)\d \tau$ and $\Xi^d\sim\xi^d$, we have
\begin{equation}
	X_{t_k} \stackrel{\d}{=} X^\OU_{T-t_k} \stackrel{\d}{=} e^{-\bar{\Gamma}(t_k)}X^\OU_0 + \sqrt{1-e^{-2\bar{\Gamma}(t_k)}}\,\Xi^d.
\end{equation}
Following~\eqref{eq:upper-bound I_4:monotonicity of Gamma-tilde is used here}, we obtain
{\allowdisplaybreaks
\begin{align}
	\expectation{\rho}{\cI_4(\lambda)}
	&\leq \sup_{k\in\{0,1,\dots,K-1\}} \expectation{\rho}{\exponential{\tilde{\lambda}\left\|X_{t_k}\right\|^2}} \\
	&= \sup_{k\in\{0,1,\dots,K-1\}} \expectation{\mu,\xi^d}{ \exponential{\tilde{\lambda}\left\|e^{-\bar{\Gamma}(t_k)}X^\OU_0 + \sqrt{1-e^{-2\bar{\Gamma}(t_k)}}\,\Xi^d\right\|^2}} \\
	&\leq \sup_{k\in\{0,1,\dots,K-1\}} \expectation{\mu,\xi^d}{\exponential{
		2\tilde{\lambda}e^{-2\bar{\Gamma}(t_k)}\left\|X^\OU_0\right\|^2 + 2 \tilde{\lambda} \left(1-e^{-2\bar{\Gamma}(t_k)}\right)\left\|\Xi^d\right\|^2}} \\
	&= \sup_{k\in\{0,1,\dots,K-1\}}  \expectation{\mu}{\exponential{2\tilde{\lambda}e^{-2\bar{\Gamma}(t_k)}\left\|X^\OU_0\right\|^2}}
		\expectation{\xi^d}{ \exponential{2\tilde{\lambda}	\left(1-e^{-2\bar{\Gamma}(t_k)}\right)\left\|\Xi^d\right\|^2}} \\
	&< \expectation{\mu}{\exponential{2\tilde{\lambda}\left\|X^\OU_0\right\|^2}}
	\expectation{\xi^d}{ \exponential{2\tilde{\lambda}\left\|\Xi^d\right\|^2}}.
	\label{eq:upper-bound I_4:product of two expectations}
\end{align}
}By Assumption~\ref{assumption:data distribution}, $\mu$ is $\sigma$-sub-Gaussian. Hence, we can invoke Lemma~\ref{lemma:moments of quadratic sub-Gaussian} to upper-bound the first expectation in~\eqref{eq:upper-bound I_4:product of two expectations}.
\begin{align}
	\expectation{\mu}{\exponential{2\tilde{\lambda} \left\|X^\OU_0\right\|^2}} 
	&\leq \exp\left(2\sigma^2d\tilde{\lambda}  + \frac{2m^2\tilde{\lambda}  + 4\sigma^4\tilde{\lambda}^2d}{1-4\sigma^2\tilde{\lambda}}\right) 
	\label{eq:upper-bound I_4:upper bound of the first expectation:term 1} \\
	&= \exp\left(2\sigma^2d\tilde{\lambda} + \left(2m^2\tilde{\lambda} + 4\sigma^4\tilde{\lambda}^2d\right)\left(1+4\sigma^2\tilde{\lambda} + o\left(\sigma^2\tilde{\lambda}\right)\right)\right) 
	\label{eq:upper-bound I_4:upper bound of the first expectation:-1}\\
	&= \exponential{2\left(m^2+\sigma^2d\right)\tilde{\lambda} + o\left(\left(m^2+\sigma^2d\right)\tilde{\lambda}\right)} \\
	&= 1 + 2\left(m^2+\sigma^2d\right)\tilde{\lambda} + o\left(\left(m^2+\sigma^2d\right)\tilde{\lambda}\right).
	\label{eq:upper-bound I_4:upper bound of the first expectation}
\end{align}
We assume $\sigma^2\tilde{\lambda} = o(1)$ and $\left(m^2+\sigma^2d\right)\tilde{\lambda} = o(1)$ to obtain~\eqref{eq:upper-bound I_4:upper bound of the first expectation:-1} and \eqref{eq:upper-bound I_4:upper bound of the first expectation}, respectively. Note that the former condition is implied by the latter.

Since $\left\|\Xi^d\right\|^2$ is a $\chi^2$-random variable with $d$ degrees of freedom, its moment generating function is given by
\begin{equation}
	\expectation{\xi^d}{ \exponential{2\tilde{\lambda}\left\|\Xi^d\right\|^2}} 
	= \left(1- 4\tilde{\lambda} \right)^{-d/2} 
	= 1 + 2d\tilde{\lambda} + o\left(d\tilde{\lambda}\right). 
	\label{eq:upper-bound I_4:upper bound of the second expectation}
\end{equation}
Here, we assume $d\tilde{\lambda}= o(1)$ to obtain the second equality. Notice that this condition is also implied by $Ld\tilde{\lambda} = o(1)$, which is used in~\eqref{eq:upper-bound I_3:final result}.
Inserting~\eqref{eq:upper-bound I_4:upper bound of the first expectation} and \eqref{eq:upper-bound I_4:upper bound of the second expectation} into \eqref{eq:upper-bound I_4:product of two expectations} yields
\begin{equation}
	\expectation{\rho}{\cI_4(\lambda)} \leq 
	 1+ 2\left(m^2+(1+\sigma^2)d\right)\tilde{\lambda} + o\left(\left(m^2+(1+\sigma^2\right)d)\tilde{\lambda}\right).
	\label{eq:upper-bound I_4:final result}
\end{equation}

\paragraph{Bound $\expectation{\rho}{\cI_5(\lambda)}$} 
Bounding $\expectation{\rho}{\cI_5(\lambda)}$ follows the same technical approach as that used for $\expectation{\rho}{\cI_4(\lambda)}$. For completeness, we present the full derivation below, focusing primarily on where the arguments differ.

\begin{align}
	\expectation{\rho}{\cI_5(\lambda)}
	&= \expectation{\rho}{\exponential{\lambda CL^2\int_{0}^{T}\bar{\gamma}(t)\sum_{k=0}^{K-1}\left\|X_{t} - X_{t_k}\right\|^2\1{[t_k,t_{k+1})}(t)\d t}} 
	\label{eq:upper-bound I_5:move the sum inside the integral} \\
	&\leq \frac{1}{\tilde{\Gamma}(T)}\int_{0}^{T}\bar{\gamma}(t)\expectation{\rho}{\exponential{\lambda CL^2\tilde{\Gamma}(T)\sum_{k=0}^{K-1}\left\|X_{t} - X_{t_k}\right\|^2\1{[t_k,t_{k+1})]}(t)}}\d t 
	\label{eq:upper-bound I_5:Jensen's inequality is used here}\\
	&\leq \sup_{t\in[0,T]} \expectation{\rho}{\exponential{\lambda CL^2\tilde{\Gamma}(T)\sum_{k=0}^{K-1}\left\|X_{t} - X_{t_k}\right\|^2\1{[t_k,t_{k+1})]}(t)}}
	\label{eq:upper-bound I_5:average is less than supreme}\\
	&= \sup_{k\in\{0,1,\dots,K-1\}, t\in[t_k,t_{k+1}]} \expectation{\rho}{\exponential{\lambda CL^2\tilde{\Gamma}(T)\left\|X_{t} - X_{t_k}\right\|^2}}.
	\label{eq:upper-bound I_5:monotonicity is used here}
\end{align}
According to the property of the \ac{ou} process (\cf Proposition~\ref{proposition:properties of OU}\ref{enumerate:properties of OU process:3}), we have, for $0<s<t$,
\begin{equation}
	X^\OU_t - X^\OU_s \stackrel{\d}{=} \left(e^{-\Gamma(t)} - e^{-\Gamma(s)}\right)X^\OU_0 + \sigma_{s,t}\Xi^d,
	\label{eq:distribution of X-OU-t minus X-OU-s}
\end{equation}
where $\displaystyle\sigma_{s,t}^2 = 2\left(1-e^{-(\Gamma(t)-\Gamma(s))}\right) - \left( e^{-\Gamma(t)} - e^{-\Gamma(s)}\right)^2$. Consequently,
\begin{equation}
	X_{t} - X_{t_k} \stackrel{\d}{=} X^\OU_{T-t} - X^\OU_{T-t_k} 
	\stackrel{\d}{=} \left(e^{-\bar{\Gamma}(t)} - e^{-\bar{\Gamma}(t_{k})}\right)X^\OU_0 + \bar{\sigma}_{t_k,t}\Xi^d,
	\label{eq:distribution of X-t-k+1 minus X-t-k}
\end{equation}
with $\bar{\sigma}_{t_k,t}$ given by
\begin{equation}
	\bar{\sigma}_{t_k,t}^2 = \sigma_{T-t_k,T-t}^2 =
	2\left(1-e^{-\left(\bar{\Gamma}(t_{k})-\bar{\Gamma}(t)\right)}\right) - \left( e^{-\bar{\Gamma}(t_{k})} - e^{-\bar{\Gamma}(t)}\right)^2 
	< 2\left(1-e^{-\Delta\tilde{\Gamma}_{t_k,t}}\right) \leq 2\Delta\tilde{\Gamma}_{t_k,t}. \label{eq:upper-bound of tilde-sigma-k} 
\end{equation}
The first inequality follows from the identity $\bar{\Gamma}(t_{k})-\bar{\Gamma}(t) = \Delta\tilde{\Gamma}_{t_k,t}$. Inserting~\eqref{eq:distribution of X-t-k+1 minus X-t-k} into~\eqref{eq:upper-bound I_5:monotonicity is used here} yields
{
\allowdisplaybreaks
\begin{align}
	\expectation{\rho}{\cI_5(\lambda)}
	&\leq \sup_{k\in\{0,1,\dots,K-1\}, t\in[0,T]} \expectation{\rho}{\exponential{\frac{\tilde{\lambda}}{\Delta_{\max}^2}\left\|X_{t} - X_{t_k}\right\|^2}} \\
	&= \sup_{k,t} \expectation{\mu,\xi^d}{ \exponential{\frac{\tilde{\lambda}}{\Delta_{\max}^2}
		\left\|\left(e^{-\bar{\Gamma}(t)} - e^{-\bar{\Gamma}(t_{k})}\right)X^\OU_0 + \bar{\sigma}_{t_k,t}\Xi^d\right\|^2}} \\
	&\leq \sup_{k,t} \expectation{\mu,\xi^d}{\exponential{
		2\frac{\tilde{\lambda}\left(e^{-\bar{\Gamma}(t)} - e^{-\bar{\Gamma}(t_{k})}\right)^2}{\Delta_{\max}^2}\left\|X^\OU_0\right\|^2 + 2\frac{\tilde{\lambda}\bar{\sigma}_{t_k,t}^2}{\Delta_{\max}^2}\left\|\Xi^d\right\|^2}} \\
	&\leq \sup_{k,t} \expectation{\mu,\xi^d}{\exponential{
		2\frac{\tilde{\lambda}\left(\Delta\tilde{\Gamma}_{t_k,t}\right)^2}{\Delta_{\max}^2}\left\|X^\OU_0\right\|^2
		+ 4\frac{\tilde{\lambda}\Delta\tilde{\Gamma}_{t_k,t}}{\Delta_{\max}^2}\left\|\Xi^d\right\|^2}} 
	\label{eq:upper-bound I_5:inequality of tilde-sigma-k is used here}\\
	&\leq  \expectation{\mu}{\exponential{2\tilde{\lambda}\left\|X^\OU_0\right\|^2}}
		\expectation{\xi^d}{ \exponential{4\frac{\tilde{\lambda}}{\Delta_{\max}}\left\|\Xi^d\right\|^2}} 
		\label{eq:upper-bound I_5:product of two expectations} \\ 
	&\leq \exponential{2\sigma^2d\tilde{\lambda} + \frac{2m^2\tilde{\lambda} 
		+ 4\sigma^4\tilde{\lambda}^2d}{1-4\sigma^2\tilde{\lambda}}}\cdot \left(1-8\frac{\tilde{\lambda}}{\Delta_{\max}}\right)^{-d/2} 
		\label{eq:upper-bound I_5:product of two expectations:upper bounds} \\
	&=
		\left(1 + 2\left(m^2 + \sigma^2d\right)\tilde{\lambda} + o\left(\left(m^2 + \sigma^2d\right)\tilde{\lambda}\right)\right)
		\left(1 + 4d\frac{\tilde{\lambda}}{\Delta_{\max}} + o\left(d\frac{\tilde{\lambda}}{\Delta_{\max}}\right)\right)
		\label{eq:upper-bound I_5:final result:-1}\\
	&=   1 + 4d\frac{\tilde{\lambda}}{\Delta_{\max}} +  2\left(m^2 + \sigma^2d\right)\tilde{\lambda} + o\left( d\frac{\tilde{\lambda}}{\Delta_{\max}} + \left(m^2+\sigma^2d\right)\tilde{\lambda} \right).
		\label{eq:upper-bound I_5:final result}
\end{align}
}Here, \eqref{eq:upper-bound I_5:inequality of tilde-sigma-k is used here} follows from~\eqref{eq:upper-bound of tilde-sigma-k} and the Lipschitz property of the function $x\mapsto e^{-x}$ for $x > 0$, while \eqref{eq:upper-bound I_5:product of two expectations} follows because $\Delta\tilde{\Gamma}_{t_k,t}\leq \Delta_{\max}$. To obtain~\eqref{eq:upper-bound I_5:final result:-1} we assume $d \tilde{\lambda} / \Delta_{\max} = o(1)$ and $\left(m^2 + \sigma^2d\right)\tilde{\lambda} = o(1)$. Note that the second condition has been used to establish~\eqref{eq:upper-bound I_4:upper bound of the first expectation}. 

\paragraph{Combine the Results}
Inserting~\eqref{eq:upper-bound I_3:final result}, \eqref{eq:upper-bound I_4:final result}, and \eqref{eq:upper-bound I_5:final result} into~\eqref{eq:I_1 is linked to the score error}, we obtain an upper bound of $\Phi(\lambda)$ via \eqref{eq:upper bound of the form prod I_i}.
{\allowdisplaybreaks
\begin{align}
	\Phi(\lambda)&\leq \frac{\tilde{\lambda}}{2\Delta_{\max}} + 
	\frac{3}{4}\log\expectation{\rho}{\exponential{4\lambda\tilde{\Gamma}(T)\scoreError(X)}}
	+ \frac{1}{8}\log  \left(1 + 8Ld\tilde{\lambda}+ o\left(Ld\tilde{\lambda}\right)\right) \nonumber \\
	\begin{split}
		&\qquad + \frac{1}{16}\log \left(
		1+ 32\left(m^2+(1+\sigma^2)d\right)\tilde{\lambda}+ o\left(\left(m^2+(1+\sigma^2)d\right)\tilde{\lambda}\right)
		\right) \\
		&\qquad + \frac{1}{16}\log  \left( 1 + 64d\frac{\tilde{\lambda}}{\Delta_{\max}} +  32\left(m^2 + \sigma^2d\right)\tilde{\lambda}+ o\left( d\frac{\tilde{\lambda}}{\Delta_{\max}} + \left(m^2+\sigma^2d\right)\tilde{\lambda}\right)\right)
	\end{split}\\
	&\lesssim \log\expectation{\rho}{\exponential{4\lambda\tilde{\Gamma}(T)\scoreError(X)}} + 
		(d+1)\frac{\tilde{\lambda}}{\Delta_{\max}} + (L+1)d\tilde{\lambda}+ \left(m^2+\sigma^2d\right)\tilde{\lambda}+ o\text{-term}
		\label{eq:logx is less than 1 plus x is used here} \\
	&\lesssim \log\expectation{\rho}{\exponential{4\lambda\tilde{\Gamma}(T)\scoreError(X)}} +
		d\frac{\tilde{\lambda}}{\Delta_{\max}} + Ld\tilde{\lambda}+ \left(m^2+\sigma^2d\right)\tilde{\lambda}+ o\text{-term}
		\label{eq:L is larger sim than 1 is used here} \\
	&\lesssim \log\expectation{\rho}{\exponential{4\lambda\tilde{\Gamma}(T)\scoreError(X)}} + \lambda L^2\widetilde{\Gamma}(T)
	\left( d  \Delta_{\max} + \left(m^2+\sigma^2d\right) \Delta_{\max}^2\right) + o\text{-term}.
	\label{eq:upper-bound the target in terms of lambda:final result} 
\end{align}
}In the above, \eqref{eq:logx is less than 1 plus x is used here} follows from $\log (1+x) \leq  x $ for $x \geq 0$, \eqref{eq:L is larger sim than 1 is used here} follows from $L\gtrsim 1$, and \eqref{eq:upper-bound the target in terms of lambda:final result} follows from $L\Delta_{\max}\lesssim 1$. This proves Proposition~\ref{proposition:upper-bound of Phi(lambda)}.

\section{Proof of Corollary~\ref{corollary:iteration complexity of DDPM}}
\label{subsection:proof of corollary:iteration complexity of DDPM}
\begin{proof}[Proof of Corollary~\ref{corollary:iteration complexity of DDPM}]
	We first control the mixing error $\cE^M$ by requiring $\cE^{M} =  \exponential{-\frac{2\tilde{\Gamma}(T)}{2\alpha-1}}\RenyiDivergence[2\alpha-1]{\mu}{\xi^d} \leq \epsilon$.
	This condition is satisfied whenever
	$\widetilde{\Gamma}(T) \geq \left(\alpha-\frac{1}{2}\right)\log\frac{ \RenyiDivergence[2\alpha-1]{\mu}{\xi^d}}{\epsilon}.$
	Accordingly, we choose 
	\begin{equation}
		\widetilde{\Gamma}(T) = C\alpha\log\left( 1 + \frac{\RenyiDivergence[2\alpha-1]{\mu}{\xi^d}}{\epsilon}\right),
	\end{equation}
    for some constant $C>1$, where the additional 1 inside the logarithm ensures $\widetilde{\Gamma}(T)>0$. 
	
	We next control the discretization error $\cE^D$. To simplify its upper
	bound, we first impose
	$(m^2+\sigma^2d) \Delta_{\max}^2\lesssim d \Delta_{\max} $, 
	so that the quadratic term in $\Delta_{\max}$ can be absorbed into the
	linear term. A sufficient condition is
	\begin{equation}
		\Delta_{\max} \lesssim \frac{d}{m^2} \wedge\frac{1}{\sigma^2}.
		\label{eq:constraint on Delta_max:1}
	\end{equation}
	Under this constraint, we further require $\frac{\alpha^2}{\alpha-1}L^2\widetilde{\Gamma}(T)d \Delta_{\max} \leq \epsilon$ which ensures that $\cE^D\leq \epsilon$. Substituting $\widetilde{\Gamma}(T)$ yields
	\begin{equation}
		 \Delta_{\max}\lesssim  \frac{(\alpha-1)\epsilon}{\alpha^3L^2d}\left(\log \left(1 + \dfrac{\RenyiDivergence[2\alpha-1]{\mu}{\xi^d}}{\epsilon}\right)\right)^{-1}. 
		 \label{eq:constraint on Delta_max:2}
	\end{equation}
	In addition, Lemma~\ref{lemma:score perturbation} requires
	$L\Delta_{\max}\lesssim 1$. Combining this condition with
	\eqref{eq:constraint on Delta_max:1} and
	\eqref{eq:constraint on Delta_max:2} establishes the provided $\Delta_{\max}$.
	Finally, by adopting a uniform discretization (\ie, $\Delta\widetilde{\Gamma}_k=\Delta_{\max}$ for every
	$k\in\{0,1,\ldots,K-1\}$), we derive the iteration complexity~\eqref{eq:iteration complexity of DDPM} via the relation $K =  \dfrac{\widetilde{\Gamma}(T)}{\Delta_{\max}}$.
%
\end{proof}

\section{Supplementary Materials for the Proof of Theorem~\ref{theorem:Renyi divergence error of Critically Damped DDPM}}
\label{section:appendix:proof of theorem:CD}

The key to completing the proof of Theorem~\ref{theorem:Renyi divergence error of Critically Damped DDPM} is to establish an analogous counterpart to Proposition~\ref{proposition:upper-bound of Phi(lambda)}, which we summarize as follows.

\begin{proposition}
	\label{proposition:upper-bound of Phi(lambda):critically damped}
	The upper bound derived for $\Phi(\lambda)$ in Proposition~\ref{proposition:upper-bound of Phi(lambda)} remains valid when replacing $\Phi(\lambda)$ with its critically damped counterpart, defined as
	\begin{equation}
		\label{eq:an expression that we want to upper-bound:critically damped}
		\Phi^\CD(\lambda)\coloneqq
		\log\expectation{(X,V)\sim\rho}{\exponential{\lambda \sum_{k=0}^{K-1} \int_{t_k}^{t_{k+1}}\bar{\gamma}(t)\left\|S^\CD_{t_k}\left(X_{t_k}, V_{t_k}\right) - \nabla_{V_t}\log \rho_{t}(X_t,V_t)\right\|^2 \d t}}.
	\end{equation}
\end{proposition}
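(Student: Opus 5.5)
I will mirror the proof of Proposition~\ref{proposition:upper-bound of Phi(lambda)} term by term, replacing the state $X_t$ by the joint state $Y_t \coloneqq (X_t,V_t)\in\bbR^{2d}$.

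\textbf{Step 1: split the integrand.} For $t\in[t_k,t_{k+1})$ I write $S^\CD_{t_k}(Y_{t_k})-\nabla_V\log\rho_t(Y_t)$ as a sum of three pieces: the score error at $t_k$, the time perturbation $\nabla_V\log\rho_{t_k}(Y_{t_k})-\nabla_V\log\rho_t(Y_{t_k})$, and the space perturbation $\nabla_V\log\rho_t(Y_{t_k})-\nabla_V\log\rho_t(Y_t)$. Then $(a+b+c)^2\le 3(a^2+b^2+c^2)$ applies as before.

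\textbf{Step 2: bound the perturbations.} The space perturbation is at most $L\|Y_t-Y_{t_k}\|$ by Assumption~\ref{assumption:Lipschitz score:CD}. For the time perturbation I use Lemma~\ref{lemma:score perturbation} in dimension $2d$. The forward kinetic OU process is linear, so $Y^\CD_t \stackrel{\d}{=} A_{s,t}Y^\CD_s+\Sigma_{s,t}^{1/2}\Xi^{2d}$ with $A_{s,t}=\exp\!\big(-(\Gamma(t)-\Gamma(s))\,\big[\begin{smallmatrix}0&-1\\1&2\end{smallmatrix}\big]\big)\otimes I_d$. From this I read off $\|A_{s,t}-I\|_2=\cO(\Gamma(t)-\Gamma(s))$ and $\|\Sigma_{s,t}\|_2=\cO(\Gamma(t)-\Gamma(s))$. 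A Taylor expansion of the matrix exponential and of $\Sigma$ (which is $\cO$ of the interval, up to a constant factor 4 from the noise coefficient) gives these bounds. Since $\nabla_V$ is a coordinate projection of $\nabla$, the lemma yields the same bound
\[
L^2\Delta\tilde\Gamma_{t_k,t}\,d+L^2(\Delta\tilde\Gamma_{t_k,t})^2\big(\|Y_{t_k}\|^2+\|\nabla\log\rho_t(Y_{t_k})\|^2\big).
\]
There is one subtlety: Lemma~\ref{lemma:score perturbation} needs the full gradient $\nabla\log\rho_t$ to be Lipschitz, not just $\nabla_V\log\rho_t$. I would either interpret Assumption~\ref{assumption:Lipschitz score:CD} as covering this, or note that only constants change. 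Together these give five factors $\cI_1,\dots,\cI_5$, and I combine them with the same generalized H\"older split $\tfrac34+\tfrac14(\tfrac12+\tfrac14+\tfrac14)=1$. The score-error factor becomes $\exp(4\lambda\tilde\Gamma(T)\scoreErrorCD)$. The factor of $2$ in Remark~\ref{remark:constant_factor_differences} presumably comes from the drift coefficient $4$ versus $2$ and the diffusion $2\sqrt{\bar\gamma}$ when the proposition is plugged into Theorem~\ref{theorem:Renyi divergence bound}, not from the proposition itself.

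\textbf{Step 3: bound $\cI_3,\cI_4,\cI_5$ as before.} I first apply Jensen in time and a supremum over $t$.
\begin{itemize}
\item $\cI_3$: Lemma~\ref{lemma:score concentration} applies to $\rho_t$ on $\bbR^{2d}$ with mean-zero score. Lemma~\ref{lemma:moments of quadratic sub-Gaussian} then applies with $d\to 2d$.
\item $\cI_4$: I use $Y_{t_k}\stackrel{\d}{=}A(X^\CD_0,V^\CD_0)+\Sigma^{1/2}\Xi^{2d}$ with $\|A\|_2\lesssim 1$ and $\|\Sigma\|_2\lesssim 1$. The initial vector $(X^\CD_0,V^\CD_0)$ is sub-Gaussian with parameter $\max(\sigma,1)$ and mean norm $m$. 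Lemma~\ref{lemma:moments of quadratic sub-Gaussian} plus the $\chi^2$ moment generating function then give $1+\cO((m^2+(1+\sigma^2)d)\tilde\lambda)$.
\item $\cI_5$: I write $Y_t-Y_{t_k}\stackrel{\d}{=}(A_t-A_{t_k})Y^\CD_0+\Sigma'^{1/2}\Xi^{2d}$. Here $\|A_t-A_{t_k}\|_2\lesssim\Delta\tilde\Gamma_{t_k,t}$ because the semigroup generator is bounded, and $\|\Sigma'\|_2\lesssim\Delta\tilde\Gamma_{t_k,t}$ by the It\^o isometry over the short interval.
\end{itemize}
These reproduce the same first-order terms, with constants absorbed into $\lesssim$.

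\textbf{Main obstacle.} I expect the real work to be the operator-norm estimates on the $2\times2$ block matrices $A$ and $\Sigma$ in Steps 2 and 3, which replace the scalar OU identities. The difficulty is showing that they are uniform in the absolute time and not merely in the increment. I would try to settle this first by using the contractivity $\|e^{-sM}\|\le C$ for the critically damped generator $M$ (whose eigenvalue is $1$ with multiplicity two), which gives a polynomial-times-$e^{-s}$ bound and hence a uniform constant.
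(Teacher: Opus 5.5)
Your proposal is correct and follows the paper's proof: the same three-way split of the integrand, Lemma~\ref{lemma:score perturbation} on $\bbR^{2d}$ with $\zeta,\|\Sigma\|_2=\cO(\Gamma(t)-\Gamma(s))$, and the same five factors and H\"older exponents. The only difference is in $\cI^\CD_4$ and $\cI^\CD_5$. You use generic semigroup and It\^o-isometry operator-norm bounds together with joint $\max(\sigma,1)$-sub-Gaussianity of $(X^\CD_0,V^\CD_0)$. The paper instead folds $V^\CD_0\sim\xi^d$ into the Gaussian part and computes the determinants explicitly (Lemma~\ref{lemma:lower bound of det(I-2lambda C)}), which gives explicit constants but nothing essential beyond your argument.

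One correction concerns your fallback that ``only constants change'' if only $\nabla_V\log\rho_t$ is Lipschitz; this is false. Assumption~\ref{assumption:Lipschitz score:CD} gives no control of $\nabla_X\log\rho_t$: at reverse time $T$, $\rho_T=\mu\otimes\xi^d$ has $\nabla_V\log\rho_T(x,v)=-v$ for every $\mu$. You must therefore adopt the full-gradient reading, which the paper also tacitly needs for Lemmas~\ref{lemma:score perturbation} and~\ref{lemma:score concentration}.

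There is also a small step to add. Before invoking score concentration, you need to move $\|\nabla\log\rho_t(Y_{t_k})\|$ to $\|\nabla\log\rho_t(Y_t)\|$ by Lipschitzness, as in the standard case, because that lemma requires the argument to be distributed as $\rho_t$. The resulting term $L^2(\Delta\widetilde\Gamma_{t_k,t})^2\|Y_t-Y_{t_k}\|^2$ is absorbed into $\cI_5$ via $L\Delta_{\max}\lesssim 1$.
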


The proof of this proposition follows the same technical trajectory as that of Proposition~\ref{proposition:upper-bound of Phi(lambda)}, albeit with more complex mathematical expressions. The remainder of this section focuses on the steps that differ, omitting details that carry over directly from the standard case.

According to Proposition~\ref{proposition:properties of CD}\ref{enumerate:properties of CDL:2},  we invoke the score perturbation lemma~\ref{lemma:score perturbation} with $\zeta = \cO\left(\Gamma(t)-\Gamma(s)\right)$ and $\|\Sigma\|_2 = \cO\left(\Gamma(t)-\Gamma(s)\right)$, yielding

\begin{equation}
	\Phi^\CD(\lambda)
	\leq \log\expectation{\rho}{\prod_{i=1}^{5}\cI^\CD_i(\lambda)}.
	\label{eq:upper bound of the form prod I_i:CD}
\end{equation}
Here, we define
{\allowdisplaybreaks
	\begin{align}
		\cI^\CD_1(\lambda) &\coloneqq \exponential{3\lambda\sum_{k=0}^{K-1}\int_{t_{k}}^{t_{k+1}}\bar{\gamma}(t)\left\|S^\CD_{t_k}\left(X_{t_k},V_{t_k}\right)-\nabla_{V_{t_k}}\log\rho_{t_k}\left(X_{t_k},V_{t_k}\right)\right\|^2\d t},\\
		\cI^\CD_2(\lambda) &\coloneqq \cI_2(\lambda) = 
		\exponential{\lambda CL^2\sum_{k=0}^{K-1}\int_{t_{k}}^{t_{k+1}}\bar{\gamma}(t) \Delta\tilde{\Gamma}_{t_k,t}\d t},\\
		\cI^\CD_3(\lambda) &\coloneqq
		\exponential{\lambda CL^2\sum_{k=0}^{K-1}\int_{t_{k}}^{t_{k+1}}\bar{\gamma}(t)\left\|\nabla_{V_t}\log\rho_{t}\left(X_{t},V_t\right)\right\|^2\left(\Delta\tilde{\Gamma}_{t_k,t}\right)^2\d t}\label{eq:definition of I_3:CD}\\
		\cI^\CD_4(\lambda) &\coloneqq
		\exponential{\lambda CL^2\sum_{k=0}^{K-1}\int_{t_{k}}^{t_{k+1}}\bar{\gamma}(t)\left(\left\|X_{t_k}\right\|^2 + \left\|V_{t_k}\right\|^2\right)\left(\Delta\tilde{\Gamma}_{t_k,t}\right)^2\d t},
		\label{eq:definition of I_4:CD}\\
		\cI^\CD_5(\lambda) &\coloneqq
		\exponential{\lambda CL^2\sum_{k=0}^{K-1}\int_{t_{k}}^{t_{k+1}}\bar{\gamma}(t)\left(\left\|X_t - X_{t_k}\right\|^2 + \left\|V_t - V_{t_k}\right\|^2\right)\d t}.
		\label{eq:definition of I_5:CD} 
	\end{align}
}
The first three terms can be handled analogously to their standard counterparts, the primary differences lie in $\cI^\CD_4$ and $\cI^\CD_5$.

\paragraph{Bound $\expectation{\rho}{\cI^\CD_4(\lambda)}$}

Define $A(t) = \begin{bmatrix} (1+\Gamma(t))e^{-\Gamma(t)} \\ -\Gamma(t)e^{-\Gamma(t)} \end{bmatrix}\otimes I_d$, where $\otimes$ denotes the Kronecker product.
The critically damped analogue of \eqref{eq:distribution of X-OU-t} is 
\begin{equation}
	\begin{bmatrix} X^\CD_t  \\ V^\CD_t  \end{bmatrix} \stackrel{\d}{=}
	A(t)  X^\CD_0 + \left(I_{2d} - A(t)A(t)^\top\right)^{1/2} \Xi^{2d},
\end{equation}
and hence 
\begin{equation}
	\begin{bmatrix} X_t  \\ V_t  \end{bmatrix} \stackrel{\d}{=} \begin{bmatrix} X^\CD_{T-t}  \\ V^\CD_{T-t}  \end{bmatrix} \stackrel{\d}{=}
	\bar{A}(t)  X^\CD_0 + \left(I_{2d} - \bar{A}(t)\bar{A}(t)^\top\right)^{1/2} \Xi^{2d}.
\end{equation}
Following a similar argument to that in~\eqref{eq:upper-bound I_4:move the sum inside the integral}-\eqref{eq:upper-bound I_4:monotonicity of Gamma-tilde is used here}, we obtain
\begin{align}
	\expectation{\rho}{\cI^\CD_4(\lambda)}
	&\leq \sup_{k\in\{0,1,\dots,K-1\}} \expectation{\rho}{\exponential{\tilde{\lambda}\left(\left\|X_{t_k}\right\|^2+\left\|V_{t_k}\right\|^2\right)}} \\
	&= \sup_{k\in\{0,1,\dots,K-1\}} \expectation{\mu,\xi^{2d}}{ \exponential{\tilde{\lambda}\left\|\bar{A}(t_k)X^\CD_0 + \left(I_{2d} - \bar{A}(t_k)\bar{A}(t_k)^\top\right)^{1/2} \Xi^{2d}\right\|^2}} \\
	&\leq \sup_{k\in\{0,1,\dots,K-1\}}  \expectation{\mu}{\exponential{2\tilde{\lambda} \left\|\bar{A}(t_k)X^\CD_0\right\|^2}}
	\expectation{\xi^{2d}}{ \exponential{2\tilde{\lambda}	\left\| \left(I_{2d} - \bar{A}(t_k)\bar{A}(t_k)^\top\right)^{1/2} \Xi^{2d}\right\|^2}}. 
	\label{eq:upper-bound I_4:product of two expectations:CD}
\end{align}
The first expectation admits the form
\begin{align}
	\expectation{\mu}{\exponential{2\tilde{\lambda} \left\|\bar{A}(t_k)X^\CD_0\right\|^2}}
	&= \expectation{\mu}{\exponential{2\tilde{\lambda}\left(1 + 2\bar{\Gamma}(t_k) + 2\bar{\Gamma}(t_k)^2\right)e^{-2\bar{\Gamma}(t_k)} \left\|X^\CD_0\right\|^2}} \\
	&\leq \expectation{\mu}{\exponential{2\tilde{\lambda} \left\|X^\CD_0\right\|^2}},
	\label{eq:upper-bound I_4:product of two expectations:term 1:CD}
\end{align}
where the inequality follows from the fact that $x\mapsto (1+2x+2x^2)e^{-2x} \le 1$ for all $x \geq 0$, with equality at $x=0$.
Notice that~\eqref{eq:upper-bound I_4:product of two expectations:term 1:CD} coincides with the first term in~\eqref{eq:upper-bound I_4:product of two expectations}, and hence it can be handled following an identical process.

To evaluate the second expectation in~\eqref{eq:upper-bound I_4:product of two expectations:CD}, we recall that for a centered Gaussian random variable $Z \sim \cN(0, \Sigma)$, its quadratic form possesses the moment generating function 
\begin{equation}
	\expectation{}{e^{t\|Z\|^2}} = \det (I_d - 2t\Sigma )^{-1/2} = \prod_{i=1}^d (1 - 2t\lambda_i)^{-1/2}, \quad \text{for } t \in \left(0, \frac{1}{2\lambda_{\max}(\Sigma)}\right),
	\label{eq:mgf of gaussian quadratic form}
\end{equation}
where $\lambda_i$ denotes the $i$-th eigenvalue of $\Sigma$. Applying~\eqref{eq:mgf of gaussian quadratic form} yields
{\allowdisplaybreaks
\begin{align}
	&\expectation{\xi^{2d}}{ \exponential{2\tilde{\lambda}	\left\| \left(I_{2d} - \bar{A}(t_k)\bar{A}(t_k)^\top\right)^{1/2} \Xi^{2d}\right\|^2}} = \det \left(I_{2d} - 4\tilde{\lambda} \left(I_{2d} - \bar{A}(t_k)\bar{A}(t_k)^\top\right) \right)^{-1/2} \\
	&= \det \left(\begin{bmatrix}
		1 - 4\tilde{\lambda} + 4\tilde{\lambda}\left(1+\bar{\Gamma}(t_k)\right)^2e^{-2\bar{\Gamma}(t_k)} & -4\tilde{\lambda}\bar{\Gamma}(t_k)\left(1+\bar{\Gamma}(t_k)\right)e^{-2\bar{\Gamma}(t_k)} \\
		-4\tilde{\lambda}\bar{\Gamma}(t_k)\left(1+\bar{\Gamma}(t_k)\right)e^{-2\bar{\Gamma}(t_k)} & 1 - 4\tilde{\lambda} + 4\tilde{\lambda} \bar{\Gamma}(t_k)^2e^{-2\bar{\Gamma}(t_k)} 
	\end{bmatrix}\otimes I_d\right)^{-1/2}\\
	&= \left(1-4\tilde{\lambda}\right)^{-d/2}\left(1-4\tilde{\lambda}\left(1 - \left(1 + 2\bar{\Gamma}(t_k) + 2\bar{\Gamma}(t_k)^2\right)e^{-2\bar{\Gamma}(t_k)} \right)\right)^{-d/2} \\
	&\leq \left(1-4\tilde{\lambda}\right)^{-d},
	\label{eq:upper-bound I_4:product of two expectations:term 2:CD}
\end{align}
}
where the inequality follows by the fact that $\left(1+2x+2x^2\right)e^{-2x}\geq 0$ for $x>0$. 
Notice that~\eqref{eq:upper-bound I_4:product of two expectations:term 2:CD} coincides with~\eqref{eq:upper-bound I_4:upper bound of the second expectation}, except for a constant factor difference in the dimension $d$ ($d$ versus $d/2$). This difference in the constant coefficient is ultimately absorbed by the $\lesssim$ notation.

\paragraph{Bound $\expectation{\rho}{\cI^\CD_5(\lambda)}$}
The critically damped counterpart to~\eqref{eq:distribution of X-OU-t minus X-OU-s} is
\begin{equation}
	\begin{bmatrix} X^\CD_t  \\ V^\CD_t  \end{bmatrix} - \begin{bmatrix} X^\CD_s  \\ V^\CD_s  \end{bmatrix} \stackrel{\d}{=}
	\left(A(t)-A(s)\right)  X^\CD_0 + C(s,t)^{1/2}\, \Xi^{2d},
\end{equation}
where $A(t) = \begin{bmatrix} (1+\Gamma(t))e^{-\Gamma(t)} \\ -\Gamma(t)e^{-\Gamma(t)} \end{bmatrix}\otimes I_d$ has been defined when bounding $\expectation{\rho}{\cI^\CD_4(\lambda)}$, and
$C(s,t) \coloneqq 2\left(I_{2d} - M(s,t)\right) - \left(A(t)-A(s)\right)\left(A(t)-A(s)\right)^\top$, with $M(s,t) \coloneqq \begin{bmatrix} 
	(1+\Delta\Gamma_{s,t})e^{-\Delta\Gamma_{s,t}} & 0 \\
	0 & (1-\Delta\Gamma_{s,t})e^{-\Delta\Gamma_{s,t}}\end{bmatrix} \otimes I_d$ and $\Delta\Gamma_{s,t} = \Gamma(t) - \Gamma(s)$.
Consequently, we have
\begin{equation}
	\begin{bmatrix} X_{t_{k}}  \\ V_{t_{k}}  \end{bmatrix} - \begin{bmatrix} X_{t}  \\ V_{t}  \end{bmatrix} 
	\stackrel{\d}{=}  \begin{bmatrix} X^\CD_{T-t_{k}}  \\ V^\CD_{T-t_{k}}  \end{bmatrix} - \begin{bmatrix} X^\CD_{T-t}  \\ V^\CD_{T-t}  \end{bmatrix}
	\stackrel{\d}{=} \left(\bar{A}(t_{k}) - \bar{A}(t)\right)  X^\CD_0 + \bar{C}(t, t_{k})^{1/2}\,\Xi^{2d}.
	\label{eq:distribution of X-t-k+1 minus X-t-k:CD}
\end{equation}
Recall in our notation, $\bar{C}(s,t)$ represents $C(T-s,T-t)$. Following a similar argument to that in~\eqref{eq:upper-bound I_5:move the sum inside the integral}-\eqref{eq:upper-bound I_5:monotonicity is used here}, we obtain
{
	\allowdisplaybreaks
	\begin{align}
		\expectation{\rho}{\cI^\CD_5(\lambda)}
		&\leq \sup_{k,t} \expectation{\rho}{\exponential{\frac{\tilde{\lambda}}{\Delta_{\max}^2}\left(\left\|X_{t} - X_{t_k}\right\|^2 + \left\|V_{t} - V_{t_k}\right\|^2\right)}} \\
		&= \sup_{k,t} \expectation{\mu,\xi^{2d}}{\exponential{\frac{\tilde{\lambda}}{\Delta_{\max}^2}\left\|\left(\bar{A}(t)-\bar{A}(t_{k})\right)  X^\CD_0 + \bar{C}(t,t_{k})^{1/2}\Xi^{2d}\right\|^2}} \\
		&\leq  \sup_{k,t} \expectation{\mu}{\exponential{\frac{2\tilde{\lambda}}{\Delta_{\max}^2}\left\|\left(\bar{A}(t)-\bar{A}(t_{k})\right)  X^\CD_0 \right\|^2}}
		\expectation{\xi^{2d}}{\exponential{\frac{2\tilde{\lambda}}{\Delta_{\max}^2}\left\|\bar{C}(t, t_{k})^{1/2}\Xi^{2d} \right\|^2}}.
		\label{eq:upper-bound I_5:product of two expectations:CD}
	\end{align}
}
Note that 
\begin{align}
	\begin{split}
		\left\|\left(\bar{A}(t)-\bar{A}(t_{k})\right)  X^\CD_0 \right\|^2 
		&= \left( (1+\bar{\Gamma}(t))e^{-\bar{\Gamma}(t)} - (1+\bar{\Gamma}(t_{k}))e^{-\bar{\Gamma}(t_{k})}\right)^2\left\|X^\CD_0 \right\|^2 
		\\
		&\quad\; + \left( \bar{\Gamma}(t)e^{-\bar{\Gamma}(t)} - \bar{\Gamma}(t_{k})e^{-\bar{\Gamma}(t_{k})}\right)^2 \left\|X^\CD_0 \right\|^2 
	\end{split} \\
	&\leq \left(\frac{1}{e^2}+1\right)\Delta_{\max}^2\left\|X^\CD_0 \right\|^2,
\end{align}
where the last step follows from the function $x\mapsto (1+x)e^{-x}$ is $\frac{1}{e}$-Lipschitz and the function $x\mapsto xe^{-x}$ is $1$-Lipschitz. 
The first expectation in~\eqref{eq:upper-bound I_5:product of two expectations:CD} admits the upper bound
\begin{equation}
	\expectation{\mu}{\exponential{\frac{2\tilde{\lambda}}{\Delta_{\max}^2}\left\|\left(\bar{A}(t)-\bar{A}(t_{k})\right)  X^\CD_0 \right\|^2}} \leq \expectation{\mu}{\exponential{\frac{2(1+e^2)\tilde{\lambda} }{e^2}\left\| X^\CD_0 \right\|^2}}. 
	\label{eq:upper-bound I_5:product of two expectations:term 1:CD}
\end{equation}
Note that the upper bound in~\eqref{eq:upper-bound I_5:product of two expectations:term 1:CD} differs from the first expectation in~\eqref{eq:upper-bound I_5:product of two expectations} only by a constant coefficient ($1 + 1/e^2$) of $\tilde{\lambda}$ , and this coefficient is ultimately absorbed in the $\lesssim$ notation.

Addressing the second expectation in~\eqref{eq:upper-bound I_5:product of two expectations:CD} is a little tricky, so we summarize the result in the following lemma, with the proof deferred to the end of this section.
\begin{lemma}
	Let $C(s,t)$ be defined as above and let $\lambda\in(0,1/4)$. Then
	\begin{equation}
		\det\left(I_{2d} - 2\lambda C(s,t)\right) \geq \left(1 - \left(\frac{4}{e}+8\right)\lambda\Delta\Gamma_{s,t}\right)^d.
	\end{equation}
	\label{lemma:lower bound of det(I-2lambda C)}
\end{lemma}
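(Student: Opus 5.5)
The plan is to reduce the $2d\times 2d$ determinant to a $2\times 2$ determinant via the Kronecker structure. Then drop the rank-one negative part of $C(s,t)$ by Loewner-order monotonicity, and bound the remaining diagonal factors with elementary Lipschitz estimates. Write $\Delta \coloneqq \Delta\Gamma_{s,t}$. By definition,
\begin{equation*}
	C(s,t) = \left(2\left(I_2 - m\right) - a a^\top\right)\otimes I_d,
\end{equation*}
where
\begin{equation*}
	m = \operatorname{diag}\left((1+\Delta)e^{-\Delta},\, (1-\Delta)e^{-\Delta}\right)
\end{equation*}
and $a\in\bbR^2$ is the vector with $A(t)-A(s) = a\otimes I_d$. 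Hence $\det(I_{2d}-2\lambda C(s,t)) = \det(I_2 - 2\lambda c)^d$ with $c \coloneqq 2(I_2-m) - aa^\top$. It therefore suffices to show
\begin{equation*}
	\det(I_2-2\lambda c)\geq 1-\left(\tfrac{4}{e}+8\right)\lambda\Delta.
\end{equation*}

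\textbf{Step 1: remove the rank-one term.} Since $aa^\top\succeq 0$, we have
\begin{equation*}
	I_2 - 2\lambda c = I_2 - 4\lambda(I_2-m) + 2\lambda aa^\top \succeq D \coloneqq I_2 - 4\lambda(I_2 - m).
\end{equation*}
Suppose $D\succ 0$. The determinant is monotone on positive definite matrices: if $X\succeq Y\succ 0$, then $\det X = \det Y\det(I+Y^{-1/2}(X-Y)Y^{-1/2})\geq \det Y$. This gives
\begin{equation*}
	\det(I_2-2\lambda c)\geq \det D = (1-4\lambda u)(1-4\lambda v),
\end{equation*}
where $u\coloneqq 1-(1+\Delta)e^{-\Delta}$ and $v\coloneqq 1-(1-\Delta)e^{-\Delta}$.

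\textbf{Step 2: bound the diagonal entries.} The derivative of $x\mapsto(1+x)e^{-x}$ is $-xe^{-x}$, of magnitude at most $1/e$. The derivative of $x\mapsto(1-x)e^{-x}$ is $(x-2)e^{-x}$, of magnitude at most $2$. Both functions equal $1$ at $x=0$, so $0\le u\le \Delta/e$ and $0\le v\le 2\Delta$ (with $v\ge 0$ because $(1-x)e^{-x}\le e^{-x}\le 1$). When $1-4\lambda u$ and $1-4\lambda v$ are nonnegative, we have $(1-4\lambda u)(1-4\lambda v)\geq 1-4\lambda(u+v)$, since the cross term $16\lambda^2uv$ is nonnegative. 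Therefore
\begin{equation*}
	(1-4\lambda u)(1-4\lambda v)\geq 1-4\lambda\left(\tfrac{1}{e}+2\right)\Delta = 1-\left(\tfrac4e+8\right)\lambda\Delta,
\end{equation*}
which is exactly the claimed constant. Raising to the $d$-th power finishes the argument.

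\textbf{Main obstacle: positivity of $D$.} The delicate point is guaranteeing $D\succ 0$, since Loewner monotonicity of $\det$ needs it. I would handle it by case analysis on the sign of the right-hand side. If $1-(4/e+8)\lambda\Delta>0$, then $4\lambda u\le 4\lambda\Delta/e<1$ and $4\lambda v\le 8\lambda\Delta<1$, so $D\succ 0$ and Steps 1--2 apply. If instead $1-(4/e+8)\lambda\Delta\le 0$, the bound is only meaningful, and only used in \eqref{eq:upper-bound I_5:product of two expectations:CD}, in the regime $\tilde\lambda/\Delta_{\max}=o(1)$. In that regime the right-hand side is positive. For $d$ odd and a negative right-hand side, the inequality is trivial whenever $\det(I-2\lambda C)\ge 0$. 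This holds because the Gaussian MGF in \eqref{eq:mgf of gaussian quadratic form} is only invoked when $I-2\lambda C\succ 0$, and I would state that as the operative hypothesis.

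Two smaller points. The hypothesis $\lambda<1/4$ is used only to keep $I-4\lambda(I_2-m)$ close to the identity. Finally, I would check that the identification of $C(s,t)$ as $c\otimes I_d$ with $c$ symmetric is consistent with Proposition~\ref{proposition:properties of CD}, since the block structure is what makes the reduction to the $2\times2$ case valid.
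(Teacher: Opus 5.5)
Your argument follows essentially the paper's route. Both proofs reduce to the $2\times 2$ block via the Kronecker structure, discard the rank-one term $2\lambda aa^\top$, and bound the diagonal entries using the Lipschitz constants $1/e$ and $2$ of $x\mapsto(1+x)e^{-x}$ and $x\mapsto(1-x)e^{-x}$. The paper removes the rank-one term through an explicit matrix-determinant-lemma expansion $\bigl(M^XM^V+2\lambda((\psi^X)^2M^V+(\psi^V)^2M^X)\bigr)^d$; you use Loewner monotonicity of $\det$ instead. Under $(4/e+8)\lambda\Delta\Gamma_{s,t}<1$ your proof is complete and correct.

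Your last paragraph should not try to rescue the complementary case, because there the statement is false for even $d$. For fixed $\lambda$, every entry of $c$ is bounded uniformly in $s,t$, since the entries of $m$ lie in $[-e^{-2},1]$ and those of $a$ in $[-1,1]$. So $\det(I_2-2\lambda c)$ stays bounded, while $\bigl(1-(4/e+8)\lambda\Delta\Gamma_{s,t}\bigr)^2\to\infty$ as $\Delta\Gamma_{s,t}\to\infty$.

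Concretely, take $\lambda=1/5$, $d=2$, and let $\Gamma(s)$ and $\Delta\Gamma_{s,t}$ both be large. Then $c\approx 2I_2$ and $I_2-2\lambda c\approx\tfrac{1}{5}I_2\succ0$. So your fallback hypothesis $I-2\lambda C\succ0$ holds, yet the inequality still fails.

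The paper's proof tacitly needs the same restriction, in two places:
\begin{itemize}
\item Its claim that the correction term is nonnegative for every $\lambda<1/4$ can fail when $M^V=1-4\lambda v<0$, e.g.\ for $\Delta\Gamma_{s,t}=2$ and $1/(4(1+e^{-2}))<\lambda<1/4$.
\item Its last two inequalities need $1-8\lambda\Delta\Gamma_{s,t}\ge0$ and $1-(4/e+8)\lambda\Delta\Gamma_{s,t}\ge 0$.
\end{itemize}

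The right fix is to replace the hypothesis $\lambda<1/4$ by $(4/e+8)\lambda\Delta\Gamma_{s,t}<1$. That is all your argument uses; it never actually needs $\lambda<1/4$, despite your remark to the contrary.

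This matters for the application. In \eqref{eq:upper-bound I_5:product of two expectations:term 2:CD} the lemma is invoked with $2\tilde{\lambda}/\Delta_{\max}^2$ in the role of $\lambda$. That quantity is of order $\lambda L^2\widetilde{\Gamma}(T)$ and typically far above $1/4$. By contrast, $(4/e+8)\cdot 2\tilde{\lambda}\,\Delta\Gamma_{s,t}/\Delta_{\max}^2\le 2(4/e+8)\tilde{\lambda}/\Delta_{\max}=o(1)$ follows from the standing assumption $d\tilde{\lambda}/\Delta_{\max}=o(1)$.
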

Applying Lemma~\ref{lemma:lower bound of det(I-2lambda C)} to the second expectation in~\eqref{eq:upper-bound I_5:product of two expectations:CD} yields
\begin{align}
	\expectation{\xi^{2d}}{\exponential{\frac{2\tilde{\lambda}}{\Delta_{\max}^2}\left\|\bar{C}(t, t_{k})^{1/2}\Xi^{2d} \right\|^2}}
	&= \det\left(I_{2d} - \frac{4\tilde{\lambda}}{\Delta_{\max}^2} C(T-t, T-t_k)\right)^{-1/2}\\
	&\leq \left(1 - 2\left(\frac{4}{e}+8\right)\frac{\tilde{\lambda}}{\Delta_{\max}}\right)^{-d/2}.
	\label{eq:upper-bound I_5:product of two expectations:term 2:CD} 
\end{align}
Again, the upper bound~\eqref{eq:upper-bound I_5:product of two expectations:term 2:CD} differs from its standard counterpart (the second term in~\eqref{eq:upper-bound I_5:product of two expectations:upper bounds}) only by a constant coefficient on $\tilde{\lambda}$ ($8$ versus $2(4 / e + 8)$), which is ultimately absorbed by the $\lesssim$ notation.

\begin{proof}[Proof of Lemma~\ref{lemma:lower bound of det(I-2lambda C)}]
	To simplify notations, we define 
	\begin{align}
		M^X &= 1-4\lambda\left(1-(1+\Delta\Gamma_{s,t})e^{-\Delta\Gamma_{s,t}}\right), \\
		M^V &= 1-4\lambda\left(1-(1-\Delta\Gamma_{s,t})e^{-\Delta\Gamma_{s,t}}\right), \\
		\psi^X &= a^X(t) - a^X(s) = (1+\Gamma(t))e^{-\Gamma(t)} - (1+\Gamma(s))e^{-\Gamma(s)}, \\
		\psi^V &= a^V(t) - a^V(s) = \Gamma(t) e^{-\Gamma(t)} - \Gamma(s)e^{-\Gamma(s)}.
	\end{align}
	Then we are ready to calculate the determinant
	{\allowdisplaybreaks
	\begin{align}
		\det(I_{2d}-2\lambda C(s,t))
		&= \det \Big(\underbrace{(1-4\lambda)I_{2d} + 4\lambda M(s,t)}_{\text{denote by } \widetilde{M}} + 2\lambda \underbrace{\left(A(t)-A(s)\right)}_{\text{denote by }\psi}\left(A(t)-A(s)\right)^\top \Big) \\
		&= \det(\widetilde{M}) \det\left(I_d+2\lambda \psi^\top \widetilde{M}^{-1} \psi \right) \\
		&= \Big(M^X M^V + \underbrace{2\lambda \left(\left(\psi^X\right)^2M^V + \left(\psi^V\right)^2M^X\right)}_{\geq 0 \text{ for }\lambda\in(0,1/4)}\Big)^d \\
		&\geq \left(M^X M^V\right)^d \\
		&= \big(1-4\lambda\underbrace{\left(1-(1+\Delta\Gamma_{s,t})e^{-\Delta\Gamma_{s,t}}\right)}_{\leq \frac{1}{e}\Delta\Gamma_{s,t}}\big)^d\big(1-\underbrace{4\lambda\left(1-(1-\Delta\Gamma_{s,t})e^{-\Delta\Gamma_{s,t}}\right)}_{\leq 2\Delta\Gamma_{s,t}}\big)^d \\
		&\geq \left(1 - \frac{4}{e}\lambda \Delta \Gamma_{s,t}\right)^d\left(1 - 8 \lambda \Delta \Gamma_{s,t}\right)^d 
		\label{eq:proof of lemma:lower bound of det(I-2lambda C):1}\\
		&\geq \left(1 - \left(\frac{4}{e}+8\right)\lambda \Delta \Gamma_{s,t}\right)^d.
	\end{align}
	}Here, \eqref{eq:proof of lemma:lower bound of det(I-2lambda C):1} follows because the Lipschitz constants of functions $x\mapsto (1+x)e^{-x}$ and $x\mapsto (1-x)e^{-x}$ ($x > 0$) are $\frac{1}{e}$ and $2$, respectively.
\end{proof}

\section{Background on Linear \ac{sde} and \ac{ou} Processes}
\label{section:appendix:backgrounds on linear SDE}
This section provides background on time-rescaled linear SDEs, followed by essential properties of standard and kinetic \ac{ou} processes utilized throughout our analysis.

\subsection{Solution to Linear \ac{sde}}
\label{subsection:solution to linear sde}
Consider a time-scaled linear \ac{sde}
\begin{equation}
	\d Y_t = \gamma(t)AY_t \d t+ \sqrt{\gamma(t)}B\d W_t.
	\label{eq:time-scaled linear sde}
\end{equation}
The associated integrating factor is $e^{-\Gamma(t) A}$, where $\Gamma(t) = \int_0^t \gamma(\tau) \d \tau$. Defining $Z_t = e^{-\Gamma(t)A}Y_t$, an application of Itô's formula yields
\begin{align}
	\d Z_t &= -\gamma(t) A e^{-\Gamma(t) A} Y_t \d t + e^{-\Gamma(t) A} \left( \gamma(t) A Y_t \d t + \sqrt{\gamma(t)} B \d W_t \right)
	\label{eq:appendix:solution to linear SDE:intermediate result 1} \\
	& = e^{-\Gamma(t) A} \sqrt{\gamma(t)} B \d W_t,
\end{align}
where the drift terms in~\eqref{eq:appendix:solution to linear SDE:intermediate result 1} cancel out because $A$ commutes with $e^{-\Gamma(t) A}$. Integrating over the time interval $[s,t]$ gives
\begin{equation}
	Z_t = Z_s + \int_s^t e^{-\Gamma(\tau) A} \sqrt{\gamma(\tau)} B \d W_\tau,
\end{equation}
and hence
\begin{equation}
	Y_t = e^{\left(\Gamma(t)-\Gamma(s)\right) A} Y_s + \int_s^t e^{(\Gamma(t) - \Gamma(\tau)) A} \sqrt{\gamma(\tau)} B \d W_\tau.
	\label{eq:appendix:solution to Y_t}
\end{equation}
In particular, setting $s = 0$ reduces~\eqref{eq:appendix:solution to Y_t} to $\displaystyle Y_t = e^{\Gamma(t) A} Y_0 + \int_0^t e^{(\Gamma(t) - \Gamma(\tau)) A} \sqrt{\gamma(\tau)} B \d W_\tau.$

\subsection{\ac{ou} process (Over Damped Langevin Dynamics)}
For the \ac{ou} process~\eqref{eq:OU process}, this corresponds to setting $A = -I_d$ and $B = \sqrt{2}I_d$. Below, we collect several key properties of the \ac{ou} process utilized throughout our analysis.

\begin{proposition}[Properties of the \ac{ou} process]
	\label{proposition:properties of OU}
	Let $X^\OU_t$ be the \ac{ou} process~\eqref{eq:OU process}. For $0<s<t$,
	\begin{enumerate}[label=(\alph*)]
		\item The solution to $X^\OU_t$ is given by $\displaystyle X^\OU_t = e^{-\left(\Gamma(t)-\Gamma(s)\right) } X^\OU_s + e^{-\Gamma(t)}\int_s^t \sqrt{2\gamma(\tau)} e^{\Gamma(\tau)}  \d W_\tau$. In particular, 
		$\displaystyle X^\OU_t = e^{-\Gamma(t) } X^\OU_0 + e^{-\Gamma(t)}\int_0^t \sqrt{2\gamma(\tau)} e^{\Gamma(\tau)}  \d W_\tau$.
		\label{enumerate:properties of OU process:1}
		\item $X^\OU_t\stackrel{\d}{=}e^{ -\left(\Gamma(t) - \Gamma(s)\right)} X^\OU_s + \sqrt{1-e^{-2\left(\Gamma(t) - \Gamma(s)\right)}}\,\Xi^d$, where $\Xi^d\sim\xi^d$. In particular, $X^\OU_t\stackrel{\d}{=}e^{ -\Gamma(t) } X^\OU_0 + \sqrt{1-e^{-2\Gamma(t)}}\,\Xi^d$.
		\label{enumerate:properties of OU process:2}
		\item $X^\OU_t - X^\OU_s \stackrel{\d}{=} \left(e^{-\Gamma(t)} - e^{-\Gamma(s)}\right)X^\OU_0 + \sqrt{2\left(1-e^{-(\Gamma(t)-\Gamma(s))}\right) - \left( e^{-\Gamma(t)} - e^{-\Gamma(s)}\right)^2}\,\Xi^d$.
		\label{enumerate:properties of OU process:3}
	\end{enumerate}
\end{proposition}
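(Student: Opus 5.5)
We prove all three items by specializing the explicit solution of the time-rescaled linear SDE in Section~\ref{subsection:solution to linear sde} and then reading off Gaussian laws. Throughout, $X^\OU_0\sim\mu$ is independent of the driving Wiener process $W$.

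For item~\ref{enumerate:properties of OU process:1}, we substitute $A=-I_d$ and $B=\sqrt{2}I_d$ into \eqref{eq:appendix:solution to Y_t}. Since $-I_d$ commutes with everything, $e^{(\Gamma(t)-\Gamma(\tau))A}=e^{-(\Gamma(t)-\Gamma(\tau))}I_d$. This gives
\begin{equation*}
X^\OU_t=e^{-(\Gamma(t)-\Gamma(s))}X^\OU_s+e^{-\Gamma(t)}\int_s^t\sqrt{2\gamma(\tau)}\,e^{\Gamma(\tau)}\,\d W_\tau .
\end{equation*}
Setting $s=0$ yields the second display.

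For item~\ref{enumerate:properties of OU process:2}, we use that the stochastic integral in~\ref{enumerate:properties of OU process:1} has a deterministic integrand. It is therefore a centered Gaussian vector with covariance proportional to $I_d$. It depends only on the increments of $W$ on $[s,t]$, so it is independent of $X^\OU_s$, which is measurable with respect to $X^\OU_0$ and $W|_{[0,s]}$. By the Itô isometry, its per-coordinate variance is
\begin{equation*}
\int_s^t 2\gamma(\tau)\,e^{-2(\Gamma(t)-\Gamma(\tau))}\,\d\tau=\Big[e^{-2(\Gamma(t)-\Gamma(\tau))}\Big]_{\tau=s}^{\tau=t}=1-e^{-2(\Gamma(t)-\Gamma(s))},
\end{equation*}
using $\partial_\tau\Gamma(\tau)=\gamma(\tau)$. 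This gives the claimed representation, and the case $s=0$ follows.

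For item~\ref{enumerate:properties of OU process:3}, we write both $X^\OU_t$ and $X^\OU_s$ from time $0$ via~\ref{enumerate:properties of OU process:1} and subtract:
\begin{equation*}
X^\OU_t-X^\OU_s=\big(e^{-\Gamma(t)}-e^{-\Gamma(s)}\big)X^\OU_0+N,
\end{equation*}
where
\begin{equation*}
N=\int_0^s\big(e^{-(\Gamma(t)-\Gamma(\tau))}-e^{-(\Gamma(s)-\Gamma(\tau))}\big)\sqrt{2\gamma(\tau)}\,\d W_\tau+\int_s^t e^{-(\Gamma(t)-\Gamma(\tau))}\sqrt{2\gamma(\tau)}\,\d W_\tau .
\end{equation*}
Thus $N$ is a centered Gaussian independent of $X^\OU_0$, and the two integrals are independent because they use disjoint increments of $W$.

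The only real step is computing the variance of $N$. Let $\delta\coloneqq\Gamma(t)-\Gamma(s)$.
\begin{itemize}
\item The first integrand factors as $(e^{-\delta}-1)e^{-(\Gamma(s)-\Gamma(\tau))}$. Using the same primitive as in~\ref{enumerate:properties of OU process:2}, its variance is $(1-e^{-\delta})^2\big(1-e^{-2\Gamma(s)}\big)$.
\item The second integral has variance $1-e^{-2\delta}$, exactly as in~\ref{enumerate:properties of OU process:2}.
\end{itemize}
Summing and expanding $(1-e^{-\delta})^2+1-e^{-2\delta}=2-2e^{-\delta}$ gives
\begin{equation*}
2\big(1-e^{-\delta}\big)-\big(1-e^{-\delta}\big)^2e^{-2\Gamma(s)}=2\big(1-e^{-(\Gamma(t)-\Gamma(s))}\big)-\big(e^{-\Gamma(t)}-e^{-\Gamma(s)}\big)^2 ,
\end{equation*}
since $(1-e^{-\delta})e^{-\Gamma(s)}=e^{-\Gamma(s)}-e^{-\Gamma(t)}$. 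Hence $N\stackrel{\d}{=}\sqrt{\,\cdot\,}\;\Xi^d$ with the stated coefficient, which proves~\ref{enumerate:properties of OU process:3}.
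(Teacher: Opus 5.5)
Your proof is correct and follows the paper's argument. You specialize the linear-SDE solution with $A=-I_d$, $B=\sqrt{2}I_d$ for (a), and apply It\^o's isometry to the deterministic-integrand stochastic integrals for (b) and (c), splitting the noise in (c) into an $[0,s]$ piece and an $[s,t]$ piece exactly as the paper does (Term~1 and Term~2). You also state explicitly that the noise is independent of $X^\OU_s$ or $X^\OU_0$, and that the two pieces are independent of each other; the paper uses both facts tacitly when it adds the variances.
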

In the above, \ref{enumerate:properties of OU process:1} is a consequence of~\eqref{eq:appendix:solution to Y_t}. Both \ref{enumerate:properties of OU process:2} and \ref{enumerate:properties of OU process:3} follow from It\^o's isometry. Here, we prove \ref{enumerate:properties of OU process:3}, and \ref{enumerate:properties of OU process:2} follows by an analogous argument.
\begin{proof}
	Using the solution to OU process, we have
	$X^\text{OU}_t - X^\text{OU}_s = \left(e^{-\Gamma(t)} - e^{-\Gamma(s)}\right)X^\text{OU}_0 + I_{s,t}$, where
	\begin{align}
		I_{s,t} &= \int_{0}^{t} \sqrt{2\gamma_\tau}e^{-(\Gamma(t)-\Gamma(\tau))} \d W_\tau - \int_{0}^{s} \sqrt{2\gamma_\tau}e^{-(\Gamma(s)-\Gamma(\tau))} \d W_\tau \\
		&= \underbrace{\left(e^{-\Gamma(t)} - e^{-\Gamma(s)}\right)\int_{0}^{s} \sqrt{2\gamma_\tau}e^{\Gamma(\tau)} \d W_\tau}_{\text{Term 1}} + \underbrace{\int_{s}^{t} \sqrt{2\gamma_\tau}e^{-(\Gamma(t)-\Gamma(\tau))} \d W_\tau}_{\text{Term 2}}.
	\end{align}
	Since $I_{s,t}$ is a centered isotropic Gaussian vector, we write by $I_{s,t}\stackrel{\d}{=}\sigma_{s,t}\,\Xi^d$, and it remains to determine its variance $\sigma^2_{s,t}$. The It\^o's isometry implies
	{
	\allowdisplaybreaks
	\begin{align}
		\text{Var}(\text{Term 1}) 
		&= \left(e^{-\Gamma(t)} - e^{-\Gamma(s)}\right)^2 \int_{0}^{s} 2\gamma_\tau e^{2\Gamma(\tau)} \d\tau 
		= \left(e^{-\Gamma(t)} - e^{-\Gamma(s)}\right)^2 \left[e^{2\Gamma(\tau)}\right]_0^s \\
		&= \left(e^{-\Gamma(t)} - e^{-\Gamma(s)}\right)^2 \left(e^{2\Gamma(s)} - 1\right), \\
		\text{Var}(\text{Term 2}) 
		&= 2e^{-2\Gamma(t)}\int_{s}^{t} \gamma_\tau e^{2\Gamma(\tau)} \d\tau 
		= 2e^{-2\Gamma(t)} \left[\frac{1}{2}e^{2\Gamma(\tau)}\right]_s^t 
		= 1 - e^{-2(\Gamma(t)-\Gamma(s))}.
	\end{align}
	}  
	Adding the two terms together yields
	\begin{align}
		\sigma_{s,t}^2 &= \left(e^{-\Gamma(t)} - e^{-\Gamma(s)}\right)^2 \left(e^{2\Gamma(s)} - 1\right) + 1 - e^{-2(\Gamma(t)-\Gamma(s))} \\
		&= 2\left(1 - e^{-(\Gamma(t)-\Gamma(s))}\right) - \left(e^{-\Gamma(t)} - e^{-\Gamma(s)}\right)^2.
	\end{align}
	This proves~\ref{enumerate:properties of OU process:3}.
\end{proof}
	
\subsection{Kinetic \ac{ou} Process (Critically Damped Langevin Dynamics)}
The kinetic \ac{ou} process is characterized by \acp{sde}
\begin{equation}
	\left\{
	\begin{aligned}
	\d X^\CD_t &= \gamma(t)V^\CD_t\d t, \\
	\d V^\CD_t &= -\gamma(t)\left(X^\CD_t + 2V^\CD_t\right)\d t + 2\sqrt{\gamma(t)}\d W_t.
	\end{aligned}
	\right.
\end{equation}
Several useful properties of kinetic \ac{ou} process is collected in the following proposition.
\begin{proposition}[Properties of the Kinetic \ac{ou} Process]
	\label{proposition:properties of CD}
	Let $[X^\CD_t, V^\CD_t]^\top$ be the kinetic \ac{ou} process. For $0<s<t$,
	\begin{enumerate}[label=(\alph*)]
		\item The solution to $[X^\CD_t, V^\CD_t]^\top$ is given by 
		\begin{equation}
			\begin{bmatrix} X^\CD_t  \\ V^\CD_t  \end{bmatrix}
			= A(s,t) \begin{bmatrix} X^\CD_s  \\ V^\CD_s  \end{bmatrix} 
			+ \int_s^t  \sqrt{\gamma(\tau)} e^{-\Delta\Gamma_{\tau,t}} \begin{bmatrix} 0 & 2\Delta\Gamma_{\tau,t} I \\ 0 & 2(1-\Delta\Gamma_{\tau,t})I \end{bmatrix} \d W_\tau,
		\end{equation}
		where $\Delta\Gamma_{\tau,t} \coloneqq \Gamma(t) - \Gamma(\tau)$ and $\displaystyle A(s,t) \coloneqq e^{-\Delta\Gamma_{s,t}} \begin{bmatrix} (1+\Delta\Gamma_{s,t}) & \Delta\Gamma_{s,t}  \\ -\Delta\Gamma_{s,t}  & (1-\Delta\Gamma_{s,t}) \end{bmatrix}\otimes I_d$.
		\label{enumerate:properties of CDL:1}
		\item $\displaystyle \begin{bmatrix} X^\CD_t  \\ V^\CD_t  \end{bmatrix} \stackrel{\d}{=}  A(s,t) \begin{bmatrix} X^\CD_s  \\ V^\CD_s  \end{bmatrix}
		+ \Sigma(s,t)^{1/2}\xi^{2d}$, where $\Sigma(s,t) =  \begin{bmatrix} \Sigma_{XX}(s,t)  & \Sigma_{XV}(s,t)  \\ \Sigma_{VX}(s,t)  & \Sigma_{VV}(s,t)  \end{bmatrix}\otimes I_d$, with
		\begin{align}
			\Sigma_{XX}(s,t) &= 1 - e^{-2\Delta\Gamma_{s,t}}\left(2\left(\Delta\Gamma_{s,t}\right)^2 + 2\Delta\Gamma_{s,t} + 1\right), \\
			\Sigma_{XV}(s,t) &= \Sigma_{VX}(s,t) = 2\left(\Delta\Gamma_{s,t}\right)^2 e^{-2\Delta\Gamma_{s,t}}, \\
			\Sigma_{VV}(s,t) &= 1 - e^{-2\Delta\Gamma_{s,t}}\left(2\left(\Delta\Gamma_{s,t}\right)^2 - 2\Delta\Gamma_{s,t} + 1\right).
		\end{align}
		In particular, $A(s,t)$ has only one eigenvalue $\lambda_A = e^{-\Delta\Gamma_{s,t}}$, and the spectrum of $\Sigma(s,t)$ are given by 
		$\displaystyle 	\lambda_{\Sigma}^{\pm} = 1 - e^{-2\Delta\Gamma_{s,t}} \left(\sqrt{\left(\Delta\Gamma_{s,t}\right)^2 + 1} \mp \Delta\Gamma_{s,t}\right)^2$.
		\label{enumerate:properties of CDL:2}
	\end{enumerate}
	In what follows, we assume $V^\CD_0\sim \xi^d$.
	\begin{enumerate}[resume, label=(\alph*)]
		\item $X^\CD_t \stackrel{\d}{=} a^X(t)X^\CD_0 + \sqrt{1 - a^X(t)^2}\,\Xi^d$, where $a^X(t) = e^{-\Gamma(t)}(1+\Gamma(t))$.
		\label{enumerate:properties of CDL:3}
		
		\item $V^\CD_t \stackrel{\d}{=} -a^V(t)X^\CD_0 + \sqrt{1 - a^V(t)^2}\,\Xi^d$, where $a^V(t) = e^{-\Gamma(t)}\Gamma(t)$.
		\label{enumerate:properties of CDL:4}
		
		\item Let $a^X(\cdot)$ and $a^V(\cdot)$ be defined as in \ref{enumerate:properties of CDL:3} and \ref{enumerate:properties of CDL:4}, respectively. Denote by
		$A(t) = \left[a^X(t),-a^V(t)\right]^\top \otimes I_d$, then 
		\begin{equation}
			\begin{bmatrix} X^\CD_t  \\ V^\CD_t  \end{bmatrix} \stackrel{\d}{=}
			A(t)  X^\CD_0 + \left(I_{2d} - A(t)A(t)^\top\right)^{1/2}\,\Xi^{2d}.
		\end{equation}
		\label{enumerate:properties of CDL:5}
		
		\item Define $a^X(\cdot)$ and $a^V(\cdot)$ as in \ref{enumerate:properties of CDL:3} and \ref{enumerate:properties of CDL:4}, respectively.
		\begin{equation}
			\begin{bmatrix} X^\CD_t  \\ V^\CD_t  \end{bmatrix} - \begin{bmatrix} X^\CD_s  \\ V^\CD_s  \end{bmatrix} \stackrel{\d}{=}
			\left(A(t)-A(s)\right) X^\CD_0 + C(s,t)^{1/2}\,\Xi^{2d},
		\end{equation}
		where $C(s,t) = 2\left(I_{2d} - M(s,t)\right) - \left(A(t)-A(s)\right)\left(A(t)-A(s)\right)^\top$, with $M(s,t)$ defined by 
		\begin{equation*}
			M(s,t) = \begin{bmatrix} 
				(1+\Delta\Gamma_{s,t})e^{-\Delta\Gamma_{s,t}} & 0 \\
				0 & (1-\Delta\Gamma_{s,t})e^{-\Delta\Gamma_{s,t}}\end{bmatrix} \otimes I_d.
		\end{equation*}
		\label{enumerate:properties of CDL:6}
	\end{enumerate}
\end{proposition}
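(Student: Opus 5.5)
The plan is to treat the kinetic \ac{ou} process as an instance of the time-rescaled linear \ac{sde}~\eqref{eq:time-scaled linear sde}, with
\[
A=\begin{bmatrix}0&1\\-1&-2\end{bmatrix}\otimes I_d,\qquad B=\begin{bmatrix}0&0\\0&2\end{bmatrix}\otimes I_d,
\]
and to read off everything from~\eqref{eq:appendix:solution to Y_t}.

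\textbf{Step 1: part~\ref{enumerate:properties of CDL:1}.} Write $A=-I+N$ with $N=\begin{bmatrix}1&1\\-1&-1\end{bmatrix}\otimes I_d$, and note that $N^2=0$. Since $-I$ and $N$ commute, this gives the closed form
\[
e^{uA}=e^{-u}(I+uN)=e^{-u}\begin{bmatrix}1+u&u\\-u&1-u\end{bmatrix}\otimes I_d .
\]
Substituting $u=\Delta\Gamma_{s,t}$, and $u=\Delta\Gamma_{\tau,t}$ inside the stochastic integral, into~\eqref{eq:appendix:solution to Y_t} yields~\ref{enumerate:properties of CDL:1}. Multiplying by $B$ kills the first column, which produces the stated integrand.

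\textbf{Step 2: part~\ref{enumerate:properties of CDL:2}.} The noise term is a centered Gaussian, so it suffices to compute its covariance. Use It\^o's isometry with the change of variables $u=\Gamma(t)-\Gamma(\tau)$, $\d u=-\gamma(\tau)\d\tau$. This gives
\[
\Sigma(s,t)=\int_0^{\Delta\Gamma_{s,t}}4e^{-2u}\begin{bmatrix}u^2&u(1-u)\\u(1-u)&(1-u)^2\end{bmatrix}\d u ,
\]
and elementary integration by parts produces the three stated entries. For the spectral claims:
\begin{itemize}
\item $A(s,t)$ has the single eigenvalue $e^{-\Delta\Gamma_{s,t}}$ because $N$ is nilpotent.
\item Write $\Sigma=I-e^{-2u}P$ with $P=\begin{bmatrix}2u^2+2u+1&-2u^2\\-2u^2&2u^2-2u+1\end{bmatrix}$. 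Then $\operatorname{tr}P=4u^2+2$ and $\det P=1$, so the eigenvalues of $P$ are $(\sqrt{u^2+1}\pm u)^2$, which gives $\lambda_\Sigma^\pm$.
\end{itemize}

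\textbf{Step 3: parts~\ref{enumerate:properties of CDL:3}--\ref{enumerate:properties of CDL:5}.} Take $s=0$ and $V^\CD_0\sim\xi^d$ independent of $X^\CD_0$. The first column of $A(0,t)$ is $A(t)=[a^X(t),-a^V(t)]^\top\otimes I_d$. The remaining terms, $A(0,t)$ applied to $[0,V^\CD_0]^\top$ plus the Itô integral, form a centered Gaussian independent of $X^\CD_0$; call its covariance $G(t)$.

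To identify $G(t)$ without computing it directly, I will use stationarity rather than algebra. If $X^\CD_0\sim\xi^d$ as well, then $\xi^{2d}$ is invariant, because it is the Gibbs measure of this critically damped dynamics. Hence $A(t)A(t)^\top+G(t)=I_{2d}$, that is, $G(t)=I_{2d}-A(t)A(t)^\top$. This is~\ref{enumerate:properties of CDL:5}, and~\ref{enumerate:properties of CDL:3} and~\ref{enumerate:properties of CDL:4} are its two block marginals. If the invariance claim feels insufficiently justified, I will verify it directly: check $A\,I+I\,A^\top+BB^\top=0$ from the Lyapunov equation.

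\textbf{Step 4: part~\ref{enumerate:properties of CDL:6}.} This is the step needing the most care, namely the cross-covariance bookkeeping. Write $Y_t=A(t)X^\CD_0+G_t$, with $G_t$ Gaussian and independent of $X^\CD_0$. The Markov representation gives $G_t=A(s,t)G_s+(\text{independent noise})$. Using $A(s,t)A(s)=A(t)$ (the semigroup property), this yields
\[
\operatorname{Cov}(G_t,G_s)=A(s,t)\bigl(I-A(s)A(s)^\top\bigr)=A(s,t)-A(t)A(s)^\top .
\]
Expanding $\operatorname{Cov}(G_t-G_s)$ then gives
\[
2I-\bigl(A(s,t)+A(s,t)^\top\bigr)-\bigl(A(t)-A(s)\bigr)\bigl(A(t)-A(s)\bigr)^\top .
\]
The symmetric part $\tfrac12(A(s,t)+A(s,t)^\top)$ is exactly $M(s,t)$, since the off-diagonal entries $\pm u e^{-u}$ cancel. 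This yields $C(s,t)$ as stated.
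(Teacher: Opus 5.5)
Your proof is correct. Parts (a) and (b) follow the paper's route: the linear-SDE solution with $A=-I+N$, $N^2=0$, then It\^o's isometry. For (c)--(f), however, your argument is genuinely different.

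The paper only asserts that (c)--(f) follow from It\^o's isometry ``analogously'' to the OU computation. That means writing out the stochastic integrals, including the Gaussian contribution of $V^\CD_0$, splitting them over $[0,s]$ and $[s,t]$, and integrating polynomial-times-$e^{-2u}$ kernels.

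You avoid all such integrals in two moves. First, you use invariance of $\xi^{2d}$, certified by $A+A^\top+BB^\top=0$, to read off $G(t)=I_{2d}-A(t)A(t)^\top$. This certificate suffices for any rescaling $\gamma$, since the covariance obeys $\dot\Sigma_t=\gamma(t)(A\Sigma_t+\Sigma_tA^\top+BB^\top)$. Second, you use the semigroup identity $A(s,t)A(s)=A(t)$ to get $\operatorname{Cov}(G_t,G_s)=A(s,t)-A(t)A(s)^\top$. This yields $C(s,t)=2I_{2d}-(A(s,t)+A(s,t)^\top)-(A(t)-A(s))(A(t)-A(s))^\top$, and you identify $M(s,t)$ as the symmetric part of the transition matrix.

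What your route buys:
\begin{itemize}
\item It explains structurally where $M(s,t)$ comes from.
\item Positive semidefiniteness of $C(s,t)$ and of $I_{2d}-A(t)A(t)^\top$ is automatic, because both are covariances. This is what the paper silently needs when it takes square roots.
\item The same bookkeeping recovers the OU formula, with $M=e^{-\Delta\Gamma_{s,t}}$, and applies to any linear SDE with a Gaussian invariant law.
\end{itemize}

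The paper's brute-force computation is more self-contained, since it needs no invariance argument. It is heavier, though, and as written it is left to the reader.
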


\begin{proof}
	\ref{enumerate:properties of CDL:1} is a consequence of~\eqref{eq:appendix:solution to Y_t}, with
	\begin{equation}
		Y_t = 
		\begin{bmatrix}
			X_t \\
			V_t
		\end{bmatrix},
		A = 
		\begin{bmatrix}
			\bzero & I_d\\
			-I_d & -2I_d
		\end{bmatrix},
		B = 
		\begin{bmatrix}
			\bzero & \bzero\\
			\bzero & 2I_d
		\end{bmatrix}.
		\label{eq:appendix:expression of Y_t, A, and B}
	\end{equation}
	To calculate the exponential of the matrix $A$, we write by $A = -I_{2d} + N$, where $N = \begin{bmatrix} I_d & I_d \\ -I_d & -I_d \end{bmatrix}$. Since $N^2 = 0$, we have, for any $\lambda\in\bbR$,
	\begin{equation}
		e^{\lambda A} = e^{\lambda \left(-I_{2d} + N\right)} = e^{-\lambda I_{2d}}e^{\lambda N} = e^{-\lambda}\left(I_{2d} + \lambda N\right) 
		= e^{-\lambda} \begin{bmatrix} (1+\lambda)I_d &\lambda I_d \\ -\lambda I_d & (1-\lambda)I_d \end{bmatrix}.
		\label{eq:appendix:solution to e^Gamma(t)A}
	\end{equation}
	Inserting~\eqref{eq:appendix:expression of Y_t, A, and B} and \eqref{eq:appendix:solution to e^Gamma(t)A} into~\eqref{eq:appendix:solution to Y_t} yields \ref{enumerate:properties of CDL:1}. 
	
	The main ingredient in proving items~\ref{enumerate:properties of CDL:2}--\ref{enumerate:properties of CDL:6} is It\^o's isometry. The proof follows an argument analogous to that of Proposition~\ref{proposition:properties of OU}\ref{enumerate:properties of OU process:3}.
\end{proof}

\section{Mathematical Derivation in Fig.~\ref{fig:Renyi divergence is more sensitive} and Table~\ref{tab:metrics_scaling}}
\label{section:appendix:details_of_the_figure}
This section provides explicit mathematical formulas for the discrepancy measures plotted in Fig.~\ref{fig:Renyi divergence is more sensitive} and the scales summarized in Table~\ref{tab:metrics_scaling}. 

Recall that $P = (1-\epsilon)\cN(0,1) + \epsilon\cN(m,1)$ and $Q = \cN(0,1)$. Without loss of generality, we assume $m > 0$. Let $p(x)$ and $q(x) = \frac{1}{\sqrt{2\pi}}e^{-x^2/2}$ denote the probability density functions of $P$ and $Q$, respectively. We express $p(x)$ in terms of its main component $p_1(x) \coloneqq q(x)$ and tail component $p_2(x) \coloneqq \frac{1}{\sqrt{2\pi}}e^{-(x-m)^2/2}$ as
\begin{equation}
	p(x) = (1-\epsilon)p_1(x) + \epsilon p_2(x) = \frac{1-\epsilon}{\sqrt{2\pi}} e^{-x^2/2} + \frac{\epsilon}{\sqrt{2\pi}} e^{-(x-m)^2/2}.
\end{equation}
Additionally, the cumulative distribution functions of $P$ and $Q$ are denoted by $F_P$ and $F_Q$, respectively.

\paragraph{\ac{tv} Distance}
\begin{equation}
	\totalVariation{P}{Q} = \frac{1}{2}\int_{-\infty}^{\infty} \vert p(x) - q(x)\vert \d x =  \frac{\epsilon}{2} \int_{-\infty}^{\infty} |p_2(x) - q(x)| \d x 
	= \epsilon\left(F_Q\left(\frac{m}{2}\right) - F_Q\left(-\frac{m}{2}\right)\right)
	= \epsilon\left(2F_Q\left(\frac{m}{2}\right)-1\right).
\end{equation}

\paragraph{Wasserstein Distance} The Wasserstein distances for 1-dimensional probability distributions admit closed-form expressions~\cite{villani2009optimal}.

\begin{equation}
	\WassersteinDistance[1]{P}{Q} = \int_{-\infty}^{\infty} \vert F_P(x) - F_Q(x)\vert \d x = \int_{-\infty}^{\infty} \left(F_Q(x) - F_P(x)\right) \d x
	= \expectation{X\sim P}{X} - \expectation{Y\sim Q}{Y} = \epsilon m,
\end{equation}
\begin{equation}
	\squaredWassersteinDistance[2]{P}{Q} =  \int_0^1 \left\vert F_P^{-1}(x) - F_Q^{-1}(x) \right\vert^2 \d x \stackrel{x=F_P(y)}{=}
	\int_{-\infty}^{\infty} \left\vert y - F_Q^{-1}(F_P(y)) \right\vert^2 p(y) \d y. 
	\label{eq:Wasserstein distance:2}
\end{equation}
We use \eqref{eq:Wasserstein distance:2} for numerical computation. In particular, by translating the tail component of $P$ by $m$, we obtain the upper bound $\squaredWassersteinDistance{P}{Q} \leq \epsilon m^2$.

\paragraph{KL Divergence}
\begin{equation}
	\KLDivergence{P}{Q} = \underbrace{\int_{-\infty}^{\infty} p(x) \log p(x) \d x}_{-\text{Entropy of } P, \ -H(P)} - \underbrace{\int_{-\infty}^{\infty} p(x) \log q(x) \d x}_{\text{Cross Entropy}, \ H(P,Q)}. 
\end{equation}
The cross entropy admits the closed-form expression $H(P,Q) = \frac{1}{2}\ln(2\pi) + \frac{1}{2}(1 + \epsilon m^2)$, while the entropy term $H(P)$ is evaluated numerically.
Furthermore, by Jensen's inequality we obtain the upper-bound $\KLDivergence{P}{Q}\leq \epsilon \KLDivergence{\cN(m,1)}{\cN(0,1)} = \frac{1}{2}\epsilon m^2$.

\paragraph{\Renyi Divergence} For any integer order $\alpha \geq 2$, we have
\begin{align}
	\RenyiDivergence{P}{Q} &=  \frac{1}{\alpha - 1} \log \int_{-\infty}^{\infty} q(x) \left( \frac{p(x)}{q(x)} \right)^\alpha \d x\\
	&= \frac{1}{\alpha - 1} \log \int_{-\infty}^{\infty} q(x) \left( (1 - \epsilon) + \epsilon e^{m x - \frac{1}{2}m^2} \right)^\alpha \d x \\
	&= \frac{1}{\alpha - 1} \log \left( \sum_{k=0}^\alpha \binom{\alpha}{k} (1 - \epsilon)^{\alpha-k} \epsilon^k \int_{-\infty}^{\infty} \frac{1}{\sqrt{2\pi}} 
	e^{-\frac{1}{2}x^2 + km x - \frac{1}{2}km^2}\d x \right) \\
	&= \frac{1}{\alpha - 1} \log \left( \sum_{k=0}^\alpha \binom{\alpha}{k} (1 - \epsilon)^{\alpha-k} \epsilon^k \exp\left( \frac{k(k - 1)}{2} m^2 \right) \right).
\end{align}
For large $m$, the $k = \alpha$ term dominates the summation, yielding the asymptotic expression
\begin{align}
	\RenyiDivergence{P}{Q} &= \frac{1}{\alpha - 1} \log \left( \epsilon^\alpha \exp\left( \frac{\alpha(\alpha - 1)}{2} m^2 \right) \left(1 +  \cO\left(e^{-(\alpha-1)m^2}\right)\right)\right) \\
	&= \frac{\alpha}{2} m^2 - \frac{\alpha}{\alpha - 1} \log\left(\frac{1}{\epsilon}\right) + \cO\left(e^{-(\alpha-1)m^2}\right).
\end{align}

%
%
%
%